\documentclass{article}
\usepackage[top=1in, bottom=1in, left=1in, right=1in]{geometry}

\usepackage{amsthm,amsmath,amssymb,amsfonts}
\usepackage{url}
\usepackage{tikz}
\usepackage{mathrsfs}
\usepackage{nicefrac,bbm}
\usepackage{hyperref, cleveref}
\usepackage[boxruled]{algorithm2e}
\usepackage{enumitem,mathtools,pifont}
\usepackage[framemethod=TikZ]{mdframed}
\usepackage{color,fancybox,graphicx,subfigure}

\usepackage{savesym}
\savesymbol{st}
\usepackage{soul}
\restoresymbol{SOUL}{st}

\SetAlgorithmName{Algorithm}{algorithm}{List of Algorithms}
\SetKwInput{kwInit}{Input}
\SetKwFor{For}{For}{do}{end}
\SetKwFor{ForEach}{For each}{do}{end}
\newcommand{\commentalg}[1]{{\small\em\quad \textcolor{gray}{//{#1}}}\hspace{-3mm}}


\newtheorem{theorem}{Theorem}[section]
\newtheorem{lemma}[theorem]{Lemma}
\newtheorem{definition}[theorem]{Definition}

\newtheorem{fact}[theorem]{Fact}

\newtheorem*{theorem*}{Theorem}
\newtheorem{remark}[theorem]{Remark}

\crefname{section}{Section}{Sections}
\crefname{theorem}{Theorem}{Theorems}
\crefname{observation}{Observation}{Observations}
\crefname{proposition}{Proposition}{Propositions}
\crefname{claim}{Claim}{Claims}
\crefname{condition}{Condition}{Conditions}
\crefname{example}{Example}{Examples}
\crefname{fact}{Fact}{Facts}
\crefname{lemma}{Lemma}{Lemmas}
\crefname{corollary}{Corollary}{Corollaries}
\crefname{definition}{Definition}{Definitions}
\crefname{remark}{Remark}{Remarks}
\crefname{todo}{Todo}{Remarks}

\newcommand{\yesnum}{\addtocounter{equation}{1}\tag{\theequation}}
\newcommand{\tagnum}[1]{\addtocounter{equation}{1}{\tag{#1; \theequation}}}
\makeatletter
\newcommand{\customlabel}[2]{%
\protected@write \@auxout {}{\string \newlabel {#1}{{#2}{\thepage}{#2}{#1}{}} }%
\hypertarget{#1}{}
}
\makeatother

\mdfdefinestyle{FrameBox}{ linecolor=white, outerlinewidth=0pt, roundcorner=5pt,
innertopmargin=5pt, innerbottommargin=5pt,
innerrightmargin=10pt, innerleftmargin=10pt, backgroundcolor=white}

\mdfdefinestyle{FrameBox2}{ linecolor=black, outerlinewidth=0pt, roundcorner=5pt,
innertopmargin=5pt, innerbottommargin=5pt,
innerrightmargin=10pt, innerleftmargin=10pt, backgroundcolor=white}

\mdfdefinestyle{FrameQuestion2}{ linecolor=white, outerlinewidth=0pt,
roundcorner=5pt,
innertopmargin=5pt, innerbottommargin=0pt,
innerrightmargin=16pt, innerleftmargin=16pt, backgroundcolor=white}


\newcommand{\white}[1]{\textcolor{white}{#1}}


\newcommand{\N}{\mathbb{N}}

\newcommand{\R}{\mathbb{R}}
\newcommand{\Z}{\mathbb{Z}}

\newcommand{\cA}{\mathcal{A}}

\newcommand{\cD}{\mathcal{D}}
\newcommand{\cE}{\ensuremath{\mathcal{E}}}
\newcommand{\cF}{\mathcal{F}}
\newcommand{\cK}{\mathcal{K}}

\newcommand{\cM}{\mathcal{M}}

\newcommand{\cU}{\mathcal{U}}

\newcommand{\cX}{\mathcal{X}}


\newcommand{\nfrac}{\nicefrac}

\newcommand{\st}{\mathrm{s.t.}}

\newcommand{\eps}{\varepsilon}
\renewcommand{\epsilon}{\varepsilon}

\newcommand{\argmax}{\operatornamewithlimits{argmax}}
\newcommand{\Ex}{\operatornamewithlimits{\mathbb{E}}}


\newcommand{\poly}{\mathop{\mbox{\rm poly}}}


\def\abs#1{\left| #1 \right|}

\newcommand{\sinbrace}[1]{\{#1\}}
\newcommand{\sinsquare}[1]{[#1]}

\newcommand{\inparen}[1]{\left(#1\right)}
\newcommand{\inbrace}[1]{\left\{#1\right\}}
\newcommand{\insquare}[1]{\left[#1\right]}

\newcommand{\ceil}[1]{\left\lceil#1\right\rceil}

\colorlet{RED}{red}
\newcommand{\zo}{\{0,1\}}
\newcommand{\np}{{\bf NP}}
\newcommand{\apx}{{\bf APX}}
\newcommand{\suppMat}{Section}
\newcommand{\negsp}{\hspace{0mm}}

\newcommand{\sexp}[1]{{\hbox{\tiny$($}}#1{\hbox{\tiny$)$}}}

\newcommand{\target}{Program~\ref{prob:target_fair}}

\newcommand{\denoised}{Program~\ref{prob:denoised_fair}}

\newcommand{\relaxation}{\ref{prob:relaxed_denoised_fair}}
\newcommand{\dkp}{$d$-KP}

\newcommand{\wh}[1]{\widehat{#1}}

\newcommand{\hz}{\widehat{z}}

\newcommand{\xtr}{\ensuremath{x^\star}}
\newcommand{\xdn}{\ensuremath{\widehat{x}^\star}}

\newcommand{\FairExpec}{\ensuremath{\mathsf{FairExpec}}}
\newcommand{\FairExpecGrp}{\ensuremath{\mathsf{FairExpecGrp}}}
\newcommand{\Blind}{\ensuremath{\mathsf{Blind}}}
\newcommand{\Thresh}{\ensuremath{\mathsf{Thrsh}}}
\newcommand{\MultObj}{\ensuremath{\mathsf{MultObj}}}
\newcommand{\CntrFair}{\ensuremath{\mathsf{CntrFair}}}
\newcommand{\CntrFairRes}{\ensuremath{\mathsf{CntrFairResolving}}}

\newcommand{\folder}{./figures/}
\newcommand{\stm}{\textsc{Sty}_m}
\newcommand{\stf}{\textsc{Sty}_f}


\begin{document}

\title{Mitigating Bias in Set Selection with Noisy Protected Attributes}

\author{Anay Mehrotra \\ Yale University \and L. Elisa Celis \\ Yale University}

\maketitle

\begin{abstract}
	Subset selection algorithms are ubiquitous in AI-driven applications, including, online recruiting portals and image search engines, so it is imperative that these tools are not discriminatory on the basis of protected attributes such as gender or race.
	Currently, fair subset selection algorithms assume that the protected attributes are known as part of the dataset.
	However, protected attributes may be noisy due to errors during data collection or if they are imputed (as is often the case in real-world settings).
	While a wide body of work addresses the effect of noise on the performance of machine learning algorithms, its effect on fairness remains largely unexamined.
	We find that in the presence of noisy protected attributes, in attempting to increase fairness without considering noise, one can, in fact, \emph{decrease} the fairness of the result!

	Towards addressing this, we consider an existing noise model in which there is probabilistic information about the protected attributes (e.g.,~\cite{ManwaniS13, FrenayV14,ChenKMSU19,KallusMZ20}), and ask \emph{is fair selection possible under noisy conditions?}
	We formulate a ``denoised'' selection problem which functions for a large class of fairness metrics; given the desired fairness goal, the solution to the denoised problem violates the goal by at most a small multiplicative amount with high probability.
	Although this denoised problem turns out to be $\np$-hard, we give a linear-programming based approximation algorithm for it.
	We evaluate this approach on both synthetic and real-world datasets.
	Our empirical results show that this approach can produce subsets which significantly improve the fairness metrics despite the presence of noisy protected attributes, and, compared to prior noise-oblivious approaches, has better Pareto-tradeoffs between utility and fairness.
\end{abstract}

\newpage
\setcounter{tocdepth}{2}
\tableofcontents
\newpage

\section{Introduction}\label{sec:intro}
	The subset selection problem arises in various contexts including
	online job portals (where an algorithm shortlists candidates to show to the recruiter),
	university admissions (where a panel admits a subset of students), and
	online search (where the platform selects a subset of the results in response to a user query)
	~\cite{DrosouJPS17, kearns2017meritocratic, KleinbergR18, StoyanovichYJ18}.
	The basic problem is as follows:
	There are $m$ {\em items}, and each item $i\in [m]$ has a utility $w_i\geq 0$, i.e., the value it adds to the subset.
	The goal is to select a subset of $n \ll m$ items which has the largest total utility.
	Given the pervasiveness of subset selection tasks, it is crucial to ensure that subset selection algorithms  do not propagate social biases.
	Consequently, there has been extensive work on developing fair algorithms for selection (and for the related problem of ranking); see \cite{DrosouJPS17, Castillo18} for an overview.
	Many of these approaches ensure that the number of individuals selected from different socially salient groups (e.g., those defined by gender or race) satisfy some fairness constraints and/or improve along a given fairness metric.
	Towards this, these algorithms assume (exact) access to the corresponding {\em protected attributes} of individuals.

	However, in practice, these attributes can be erroneous, unavailable for some individuals, or missing entirely~\cite{council2004eliminating,Kossinets06, saundersAccuracyOfRecordedEthnicInfo}.
	For instance, in healthcare, patients' ethnic information can be incorrectly recorded~\cite{saundersAccuracyOfRecordedEthnicInfo} or left blank~\cite{council2004eliminating}.\footnote{Recently, this received public attention when attempting to estimate the racial disparities in COVID19 infections showed large discrepancies~\cite{barboza2020coronavirus}.}
	When this data is missing, probabilistic methods based on other proxy information are used to ``impute'' these protected attributes~\cite{coldman1988classification,elliott2006usingGeocoding,elliott2008new, elliott2009UsingCencusSurnameList}.
	For instance, when assessing if lenders comply with fair lending policies, the Consumer Financial Protection Bureau uses last name and geolocation to impute consumers' race \cite{consumer2014UsingPublic}.
	Similar approaches have also been used in the context of healthcare~\cite{fiscella2006use,koh2011reducing}.
	Additionally, online job platforms (such as, LinkedIn) use a user's data to infer their demographic information based on the data they have on other users~\cite{linkedin_inferred_gender}.
	Furthermore, in some cases, such as with images on the internet, protected attributes are missing for the entire datasets (and labeling all images is not viable).
	Inferring protected attributes is bound to have errors, which can affect the groups differently~\cite{BuolamwiniG18}.
	Thus, using imputed attributes as a black-box in subsequent fair algorithms, without accounting for their noise, can have an unexpected (and adverse) impact on the fairness achieved.
	For instance, \cite{LamyZ19, celis2020fairclassification} observe that (noise oblivious) fair algorithms do not satisfy their fairness guarantee in the presence of noise.

	To gain some intuition, consider the setting where we are given a set of candidates and would like to ensure proportional representation across individuals with different skin-tones, coded as White and non-White.
	Assume that the utilities of all candidates have a similar distribution, and so picking candidates with top $n$ utilities proportionately represents them.
	Further, assume that the labels have a higher amount of noise for non-Whites than Whites.\footnote{{For instance, as observed in commercial image-based gender classifiers~\cite{BuolamwiniG18}.}}
  One can show that, any ``fair algorithm'' which assumes that these noisy labels are correct, and selects a proportionate number of White and non-White candidates based on them, would violate proportional representation.
	In this case, adding fairness constraints increased the disparity.
	This leads us to the question addressed in this paper:

	\begin{mdframed}[style=FrameQuestion2]
		\textit{Can we develop a framework for selection which  outputs an approximately fair subset despite noisy protected attributes?}
	\end{mdframed}

	\subsection{Our contributions}
	Building on prior work on {fairness constraints}~\cite{CKSDKV18,StoyanovichYJ18},
	we develop a framework for fair selection in the presence of noisy protected attributes.
	This framework allows for multiple and intersectional groups, and,
	given access to (unbiased\footnote{Here, unbiased refers to the statistical notion of an unbiased estimator.}) probabilistic information about the true protected attributes, it can satisfy a large class of fairness constraints (including, demographic parity, proportional representation, and the $80$\% rule) with high probability.

	Formally, we would like to solve an ideal optimization problem (\target{}) which satisfies the fairness constraints for the true (and unknown) protected attributes.
	Such problems have been studied by prior works, e.g., \cite{ZehlikeB0HMB17,SinghJ18, StoyanovichYJ18}.
	However, since we do not have the true protected attributes, we cannot solve it directly using their approaches.
	Instead, we formulate a ``denoised'' problem (\denoised{}); such that, an optimal solution of \denoised{} has {the} optimal utility for \target{} and violates the fairness constraints of \target{} by at most a small multiplicative factor with high probability (\cref{thm:relation_between_target_and_noisy}).
	Although \denoised{} turns out to be \np-hard (\cref{thm:hardness_results_main}),
	we develop a linear-programming based approximation algorithm for it.
	This, in turn, implies an approximation algorithm for \target{}.
	We empirically study the fairness achieved by this approach with respect to standard fairness metrics (e.g., risk difference) on both synthetic and real-world datasets.
	We also study the performance of existing fair algorithms in the presence of noise and benchmark our approach with them.
	We observe that our approach achieves the highest fairness and has a Pareto-optimal tradeoff between utility and fairness (on changing the strength of constraints).
	Interestingly, these observations also hold in our empirical results where, unlike what our theoretical results assume, we have skewed probabilistic information of the noisy attributes.
	Finally, our empirical results hint at potential applications of this approach, e.g., in online recruiting portals and image search engines.
	\subsection{Related work}\label{sec:related_work}
	\paragraph{Mitigating bias.}
	An extensive body of work strives to mitigate bias and improve diversity in subset selection and the closely related ranking problem.
	We refer the reader to \cite{DrosouJPS17} for a comprehensive overview of work on diverse selection, and an excellent talk \cite{Castillo18} which discusses work on curtailing bias in rankings.
	Closest to our setting, are approaches which use protected attributes to impose fairness constraints on algorithms for  selection~\cite{kearns2017meritocratic,StoyanovichYJ18} and ranking~\cite{ZehlikeB0HMB17,celis2018ranking,SinghJ18,linkedin_ranking_paper, yang2019balanced}.
	However, if the attributes are noisy, these could even increase the bias.
	A different approach is to learn ``unbiased utilities'' by either using a {\em causal model} to capture the relation between attributes and utilities~\cite{kusner2017counterfactual, yang2020causal} or by casting it as a multi-objective unconstrained optimization problem~\cite{YangS17, Zehlike020}.
	The former approach explicitly uses the protected attributes to generate {\em counterfactuals}, so, it can lead to unfair outcomes in the presence of noise (also see \cref{sec:causal_simulation}).
	And the latter approach can lead to sub-optimal fairness if noisy data is not accounted for, {as shown by works on fair classification~\cite{LamyZ19,awasthi2020equalized, celis2020fairclassification}.}

	In \cite{Gupta2018proxy}, it is empirically shown that when protected attributes are missing, proxy attributes can be used to improve fairness in classification.
	However, they do not consider how necessary noise resulting from the proxy attributes affects the fairness or accuracy.

	\paragraph{Mitigating bias with noise.}
	Works on curtailing bias with noisy information are relatively recent.
	Closest to this paper are those which consider noise in the protected attributes.
	In \cite{awasthi2020equalized}, conditions on the noise under which the popular post-processing method for fair classification by \cite{hardt2016equality} reduces bias in terms of equalized odds are characterized.
	However, they only consider noise in the training samples and assume that the test samples are not noisy, which often doesn't hold in practice.
	In \cite{LamyZ19}, an in-processing approach to fair classification is suggested; they show that applying tighter fairness constraints in existing fair classification frameworks can mitigate bias in terms of equalized odds and statistical parity with binary protected attributes.
	However, this approach does not extend to nonbinary protected attributes and to other definitions of fairness.
	In \cite{celis2020fairclassification}, an in-processing approach for fair classification which can mitigate bias with nonbinary and noisy protected attributes is developed.
	However, they assume that the noise only depends on the (unknown) underlying protected attributes, whereas, we also allow the noise to vary with nonprotected attributes and utility.
	Furthermore, \cite{LamyZ19,awasthi2020equalized, celis2020fairclassification} mitigate bias in classification tasks, and it is not clear how to extend these methods to subset selection.

	In \cite{ChenKMSU19, KallusMZ20}, methods to reliably assess disparity in the setting where the protected attributes are entirely missing are proposed.
	We consider a similar noise model as the one they propose; however, the problem is fundamentally different as their goal is assessment rather than mitigation.

	\paragraph{Noise models in literature.}
	Several works in the machine learning literature consider noise in the predicted labels as opposed to in attributes, protected or otherwise~\cite{AngluinL87,ManwaniS13,FrenayV14, LiuT16}.
	In this paper, we consider a noise model that arises from this line of work, but applied to the protected attributes rather than the label.

\section{Model}
	For a natural number $n\in \N$ by $[n]$ we denote the set $\{1,2,\dots,n\}$, and for a real number $x\in \R$ by $\exp(x)$ we denote $e^x$.
	We use $\mathbb{I}[\cdot]$ to denote the indicator function, $o(1)$ to denote $O(\nfrac{1}{n})$, and $\cU(a,b)$ to denote the uniform distribution on interval $[a,b]$.
	Given a natural number $p\in \N$, $\Delta^p$ denotes the standard $p$-simplex.

	\subsection{Selection problem and noise model}
	\paragraph{Selection problem.}
	In the classical selection problem, one is given $m$ {\em items}, where each item $i\in [m]$ has a utility $w_i\geq 0$.
	An item's utility is the {\em value} it adds to the selection.
	The goal is to find a subset of $n$ items which has the most total value.
	It is convenient to encode a subset with a binary selection vector $x\in \zo^m$.
	Then, the classical selection problem is
	\begin{align}
		\max\nolimits_{x\in \zo^m} \sum\nolimits_{i=1}^{m} w_i x_i\quad \st, \ \sum\nolimits_{i=1}^m x_i = n.\label{eq:classical_selection}
	\end{align}
	\paragraph{Protected attributes.} We consider $s\in \N$ protected attributes (such as, gender or race), where for $k\in [s]$, the $k$-th protected attribute can take $p_k\in \N$ values (such as, different genders or races).
	Let $\cX$ be the domain of all other nonprotected attributes.
	Fix a joint distribution over $\cD\coloneqq \R_{\geq 0}\times [p_1]\times \dots\times [p_s]\times \cX.$
	Then, each item $i\in [m]$ is represented by the tuple
	$$(w_i, z_i^{\sexp{1}}, \dots, z_i^{\sexp{s}}, a_i)\in \R_{\geq 0}\times [p_1]\times\dots\times [p_s]\times \cX,$$
	and is drawn independently from this joint distribution.
	We observe the utility $w_i$ and nonprotected attributes $a_i$, but do {\em not} observe the protected attributes $(z_i^{\sexp{1}}, \dots, z_i^{\sexp{s}})$.
	Instead, we observe a {\em noisy version} $(\hz_i^{\sexp{1}}, \dots, \hz_i^{\sexp{s}})$ of them (for each $i\in[m]$).

	For each attribute-value pair $k\negsp\in\negsp [s]$ and $\ell\negsp\negsp \in\negsp\negsp [p_k]$, there is a (\textit{unknown}) group {$G_{\ell}^{\sexp{k}}\subseteq[m]$: items whose $k$-th attribute has value $\ell$:}
	\begin{align*}
		G_\ell^{\sexp{k}}\coloneqq \inbrace{i\in [m]\colon  z_i^{\sexp{k}}=\ell}.
	\end{align*}
	For example, if the $k$-th protected attribute is race, then for different values of $\ell\negsp\in\negsp [p_k]$, $G_{\ell}^{\sexp{k}}$ is the subset candidates whose race is $\ell$.
	However, we only have noisy information about the protected attributes of each item; so, only noisy information of this subset.

	\paragraph{Intersectional groups.}
	In the above model, each protected attribute takes a unique value.
	It may appear that this does not allow for intersectional groups, e.g., say multiracial candidates.
	But this is only a matter of encoding, and is remedied by using attributes such as `\texttt{has-raceA?}' and `\texttt{has-raceB?}', which take \texttt{Yes} or \texttt{No} values.

	\begin{definition}[{\bf Noise}]\label{def:individual_flipping_noise}
		For each item $i\in [m]$ and $k\in [s]$, we have a probability vector $q_i^{\sexp{k}}\in \Delta^{p_k}$, such that, the $k$-th protected attribute of item $i$ takes value $\ell\in [p_k]$ with probability $q_{i\ell}^{\sexp{k}}$ conditioned on $(w_i, \hz_i^{\sexp{1}}, \dots, \hz_i^{\sexp{s}}, a_i)$:
		\begin{align}
			q_{i\ell}^{\sexp{k}}\coloneqq \Pr\insquare{i\in G_{\ell}^{\sexp{k}}\mid (w_i, \hz_i^{\sexp{1}}, \dots, \hz_i^{\sexp{s}}, a_i)}.
		\end{align}
		The event that $(i\in G_{\ell}^{\sexp{k}})$ is independent of all other items $j\in [m]\backslash\{i\}$ and all other attributes in $[s]\backslash\{k\}$. {Note that for all $i\in[m]$ and $k\in [s]$, $\sum_{\ell\in [p_k]}q_{i\ell}^{\sexp{k}}=1$.}
	\end{definition}

	\paragraph{Discussion of the noise model.} The above model says that given the utility ($w_i$), noisy protected attributes ($\hz_i^{\sexp{1}}, \dots, \hz_i^{\sexp{s}}$), and nonprotected attributes ($a_i$) of an item $i$, there is probabilistic information about its protected attributes.
	If items represent candidates for a job and the protected attribute is race, then we can use the candidate's last name (encoded in $a_i$) to derive probabilistic information about their race. This has been used in practice, e.g., by~\cite{elliott2009UsingCencusSurnameList}.
	We can also consider multiple nonprotected attributes such as both last-name and location, e.g., as used by~\cite{elliott2006usingGeocoding,elliott2008new}.
	As discussed in \cref{sec:intro}, this could be relevant for an online hiring platform, which may not have demographic information of some or all of its users~\cite{linkedin_inferred_gender}, and {image search engines where the images do not have gender labels.}

	\subsection{Target problem}
	{Studies have found that,} in the absence of other constraints, the selection problem~\eqref{eq:classical_selection}, can overrepresent individuals with certain protected attributes at the expense of others~\cite{KayMM15,costello2016views}.
	Towards mitigating this bias, we consider lower bounds and upper bounds on the number of items of a given protected attribute selected.

	Formally, the constraints ensure that for each attribute-value pair $k\in [s]$ and $\ell \in [p_k]$, the selection has at least $L_{\ell}^{\sexp{k}}\geq 0$ and at most $U_{\ell}^{\sexp{k}}\geq 0$ items from $G_\ell^{\sexp{k}}$.
	Then, a selection $x\in \zo^m$ satisfies the {(target) fairness constraints if: for all $k\negsp\in\negsp [s]$ and $\ell\negsp\in\negsp [p_k]$}
	\begin{align*}
		L_{\ell}^{\sexp{k}} \leq  \sum\nolimits_{i\in G_\ell^{\sexp{k}}} x_{i} \leq U_{\ell}^{\sexp{k}}.\tagnum{Fairness constraints}\customlabel{eq:target_constraints}{\theequation}
	\end{align*}
	Constraints similar to Equation \eqref{eq:target_constraints} have been studied by several works in algorithmic fairness~{\cite{Chierichetti0LV17, celis2018ranking,Chierichetti0LV19}},
	and are rich enough to encapsulate a variety of fairness and diversity metrics (e.g., see \cite{celis2019classification}).
	Thus, for the appropriate $L$ and $U$, the subset satisfying constraints~\eqref{eq:target_constraints} would be fair for one from a large class of fairness metrics.

	Overall, our {\em constrained subset selection problem} is:
	\begin{align}
		\hspace{-20mm}\max_{x\in  \zo^m}\ \ &\sum\nolimits_{i=1}^m w_{i}x_{i}\tag{Target}\label{prob:target_fair}\\
		\st\quad\  & L_{\ell}^{\sexp{k}} \leq  \sum\nolimits_{i\in G_\ell^{\sexp{k}}} x_{i} \leq U_{\ell}^{\sexp{k}},\quad \forall \ k\in [s],\ \ell\in [p_k],\label{eq:target_fair:fairness_constraint}\\
		&\sum\nolimits_{i=1}^m x_i = n\label{eq:target_fair:cardinality_constraint}
		.
	\end{align}
	If we know the protected attributes, and in turn $G_\ell^{\sexp{k}}$ (for each $k\in [s]$ and $\ell\in [p_k]$), then we can hope to solve \target{} directly.
	Indeed, prior works consider similar problems in rankings~\cite{celis2018ranking}, or its generalization, to multiple Matroids constraints~\cite{Chierichetti0LV19}.
	However, with only noisy information about protected attributes, we can not even verify if a selection vector $x$ is feasible for \target{}.
	To overcome this, we must go beyond exact algorithms which always satisfy fairness constraints.

	\subsection{Denoised problem}\label{sec:noise_model}
	The difficulty in solving \target{} is that we do not know the constraints (as we do not know $G_\ell^{\sexp{k}}$).
	We propose to solve a different problem, \denoised{}, which uses the noise estimates $q$ to approximate the constraints of \target{}.
	For some small $\delta\in (0,1)$, we define the denoised program as the following
	\begin{mdframed}[style=FrameBox]
		\begin{align*}
			\hspace{-5.5mm}\max_{x\in \zo^m}\ \  &\sum\nolimits_{i=1}^{m}w_{i}x_{i}\tag{Denoised}\label{prob:denoised_fair}\\
			\hspace{-14mm}&\hspace{-12.5mm}\st\ L_{\ell}^{\sexp{k}}\negsp \negsp -\delta n\leq \sum\nolimits_{i=1}^m q_{i\ell}^{\sexp{k}}x_{i} \leq U_{\ell}^{\sexp{k}}\negsp \negsp +\delta n,\
			\forall \ k\in [s], \ell\in [p_k],\hspace{-3mm}
			\yesnum\label{eq:denoised_fair:fairness_constraint}\\
			&\hspace{-13.0mm}\quad\ \sum\nolimits_{i=1}^{m}x_i = n.\yesnum\label{eq:denoised_fair:cardinality_constraint}
		\end{align*}
	\end{mdframed}
	Here, $\sum_{i=1}^m q_{i\ell}^{\sexp{k}}x_{i}$ is the expected number of items selected by $x$ whose $k$-th protected attribute is $\ell$.
	Then, intuitively, we can see \denoised{} that satisfies the constraints of \target{} in expectation, where the expectation is taken over the noise.

	{However, just satisfying constraints in expectation is not sufficient. For instance, this would allow algorithms that,
	in each use, violate the fairness constraints by a large amount, but ``average out'' their errors in aggregate.}
	Instead, our goal is to find an algorithm which violates the constraints by at most a small amount, almost always.
	Before presenting our theoretical results, we discuss an alternate noise model and why it is not suitable in our setting.

	\subsection{Group-level noise model}\label{sec:group_level_noise}
	Recent works on noisy fair classification~\cite{LamyZ19, awasthi2020equalized, celis2020fairclassification} consider a different noise model, which adapted to our setting, uses the following probabilities
	$$\overline{q}_{i\ell} \coloneqq \Pr[i\in G_\ell\mid (\hz_i^{\sexp{1}},\dots,\hz_i^{\sexp{s}})].$$
	Notice that unlike \cref{def:individual_flipping_noise}, $\overline{q}_{i\ell}$ does not condition on the utility $w_i$ or nonprotected attributes $a_i$.
	Thus, its estimates are the same for all items with the same set of noisy protected attributes.
	We call this the {\em group-level noise model} (\texttt{GLN}).
	In the next example, we discuss why \texttt{GLN} is not sufficient to mitigate bias in subset selection. %

	\paragraph{Toy example.}
	Consider a setting where there is one protected attribute which takes two values (i.e., $s = 1$ and $p_1 = 2$),
	and the relevant fairness metric is equal representation.
	Let the two groups (unknown) be $A,B\subseteq[m]$, and their \mbox{observed noisy versions be $\wh{A}$ and $\wh{B}$.}\footnote{Formally, $A\negsp=\negsp\sinbrace{i\colon z_i^{\sexp{1}} =1}$, $B=\sinbrace{i\colon z_i^{\sexp{1}} =2}$, $\wh{A}=\sinbrace{i\colon \hz_i^{\sexp{1}} =1}$ and  $\wh{B}=\sinbrace{i\colon \hz_i^{\sexp{1}} =2}$.}

	{According to $\overline{q}$, each candidate $i\negsp\in\negsp\wh{B}$ has the same probability of being in $A$.
	In this noise model, these candidates are indistinguishable apart from their utilities, so, if one picks $n_b\in \N$ candidates from $\wh{B}$, they would naturally be the ones with the highest utility.
	However, suppose that most individuals in $A$ have a higher utility than most individuals in $B$.\footnote{Such an bias in utilities is one reason why we need fairness constraints in the first place~\cite{KleinbergR18,celis2020interventions,EmelianovGGL20}.}
	In this case, the probabilities $\overline{q}$ will be ``distorted'' by the utilities, such that, candidates with higher utility in $\wh{B}$ are more likely to be in $A$ than those with lower utility in $\wh{B}$.
	In fact, if $m$ is much larger than $n$, then most of the top $n_b$ candidates in $\wh{B}$ would, in fact, be from $A$.}
	This example can be extended to more realistic settings, with more than two groups and a smaller amount of ``bias'' in the utilities.
	Even then, to overcome this distortion in probabilities, one needs to consider a stronger noise model, in which the noise estimate varies with utility (as in Definition~\ref{def:individual_flipping_noise}); either implicitly through proxy nonprotected attributes ($a_i$) or explicitly with utility ($w_i$).

	\begin{remark}
		In Sections~\ref{sec:theoretical_results} and \ref{sec:empirical_results}, for the sake of simplicity, we only consider the setting with one protected attribute ($s=1$) which takes $p\geq 1$ values.
    {We obtain analogous results for the the general case in \suppMat~\ref{sec:extended_theoretical_results}.}
		When $s=1$, we let $p\coloneqq p_1$ and drop superscripts {(representing the protected attribute) from all variables.}
	\end{remark}

\section{Theoretical results}\label{sec:theoretical_results}
	Our main algorithmic result is an efficient approximation algorithm for \target{}.
	\begin{theorem}[{\bf An approximation algorithm for \target{}}]\label{thm:algorithm_for_target_fair}
		There is an algorithm (Algorithm~\ref{alg:algorithm_for_target_fair}) that given an instance of \target{} for $s=1$ and noise $q$ from \cref{def:individual_flipping_noise},
		outputs a selection $x\in \zo^m$,
		such that, with probability at least $1-4p\exp\inparen{\nfrac{-\delta^2 n}{3}}$ over the noise in the protected attributes of each item, the selection $x$
		\begin{enumerate}[leftmargin=15pt]
			\item has a value at least as \ul{high} as the optimal value of \target{},
			\item {violates the cardinality constraint~\eqref{eq:target_fair:cardinality_constraint} by at most $p$ (additive), and}
			\item {violates the fairness constraints~\eqref{eq:target_fair:fairness_constraint} by at most $(p+2\delta n)$ (additive).}
		\end{enumerate}
		{The algorithm runs in polynomial time in the bit complexity of input.}
	\end{theorem}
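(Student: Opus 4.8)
The plan is to solve the linear-programming relaxation of \denoised{}, extract an optimal vertex (basic feasible) solution $\wt x\in[0,1]^m$, and round each of its fractional coordinates \emph{up} to $1$ to obtain the integral output $x\in\zo^m$. The three guarantees then come from three separate ingredients. First, the value guarantee holds because rounding up never decreases the objective, so $\sum_i w_i x_i\ge \sum_i w_i \wt x_i$ equals the LP optimum, which upper-bounds the integer optimum of \denoised{}, which in turn is at least the optimal value of \target{} by \cref{thm:relation_between_target_and_noisy}. Second, the cardinality and ``expected'' fairness violations are both controlled by bounding the number of fractional coordinates of $\wt x$. Third, the expected group counts $\sum_i q_{i\ell}x_i$ are transferred to the true group counts $\sum_{i\in G_\ell}x_i$ by concentration.

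The structural crux is a counting lemma for the vertex $\wt x$. Writing each of the $p$ two-sided fairness constraints as the row $q_{\cdot\ell}=(q_{1\ell},\dots,q_{m\ell})$ and the cardinality constraint as the row $\mathbf 1$, the identity $\sum_{\ell\in[p]}q_{i\ell}=1$ (valid for every $i$) gives $\mathbf 1=\sum_\ell q_{\cdot\ell}$, so the entire constraint system, ignoring the box constraints $0\le x_i\le1$, has rank at most $p$. At any vertex a fractional coordinate has both its box constraints slack, so each one must be pinned by a linearly independent tight non-box constraint; hence at most $p$ coordinates of $\wt x$ lie strictly inside $(0,1)$. Rounding these (at most $p$) coordinates up to $1$ therefore increases $\sum_i x_i$ by less than $p$, giving a cardinality violation of at most $p$, and increases each expected count $\sum_i q_{i\ell}x_i$ by at most $p$ (using $q_{i\ell}\le1$), so $x$ satisfies $L_\ell-\delta n\le \sum_i q_{i\ell}x_i\le U_\ell+\delta n+p$ for every $\ell$.

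It remains to pass from expected counts to true counts. The key observation is that the output $x$ is a measurable function of the observed data $(w_i,\hz_i,a_i)_{i\in[m]}$ and the estimates $q$ only, and never of the true attributes $z_i$; hence, conditioned on the observed data, $x$ is fixed while the indicators $\mathbb{I}[i\in G_\ell]$ are independent Bernoulli random variables with means $q_{i\ell}$. Thus $\sum_{i\in G_\ell}x_i$ is a sum of independent terms with mean exactly $\sum_i q_{i\ell}x_i$, and a Chernoff bound (using $\sum_i x_i\le n+p$ and $\delta\in(0,1)$) gives $\big|\sum_{i\in G_\ell}x_i-\sum_i q_{i\ell}x_i\big|\le\delta n$, each of its two one-sided failure events having probability at most $\exp(-\delta^2 n/3)$. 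Combining this $\delta n$ slack with the Denoised slack $\delta n$ and the rounding slack $p$ yields $L_\ell-(p+2\delta n)\le \sum_{i\in G_\ell}x_i\le U_\ell+(p+2\delta n)$, a fairness violation of at most $p+2\delta n$. A union bound over the $2p$ one-sided events here together with the $2p$ one-sided events underlying the value step of \cref{thm:relation_between_target_and_noisy} accounts for the claimed failure probability $4p\exp(-\delta^2 n/3)$. Finally, the LP and an optimal vertex are computable in time polynomial in the input bit-complexity and the rounding is linear-time, giving the stated running time.

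The step I expect to be most delicate is the value guarantee, because its natural witness, an optimal solution $\xtr$ of \target{}, depends on the \emph{random} groups $G_\ell$, so one cannot apply Chernoff to $\sum_{i\in G_\ell}\xtr_i$ directly (the selection is chosen adaptively to the realized $G$). This adaptivity is precisely what \cref{thm:relation_between_target_and_noisy} resolves, so I would invoke it as a black box for the inequality relating the optimal values of \denoised{} and \target{} rather than re-deriving it. The genuinely new work is then the vertex-rounding bound of at most $p$ fractional coordinates, enabled by the rank collapse $\mathbf 1=\sum_\ell q_{\cdot\ell}$, together with the clean (non-adaptive) concentration for the rounded output $x$.
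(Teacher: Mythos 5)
Your proposal is correct and takes essentially the same route as the paper: the identical algorithm (solve the LP relaxation of \denoised{}, take a basic feasible solution, round fractional coordinates up), the identical rank-collapse counting argument via $\sum_{\ell\in[p]}q_{i\ell}=1$ bounding the number of fractional coordinates by $p$ (the paper's \cref{thm:optimal_solution_with_few_fractional}), the identical Chernoff transfer between expected and true group counts (\cref{thm:relation_between_target_and_noisy}), and the same union bound over the $4p$ one-sided events. One remark: the adaptivity of $\xtr$ that you flag as delicate is a genuine subtlety, but it is not actually ``resolved'' by \cref{thm:relation_between_target_and_noisy}---that lemma is proved by a Chernoff bound for a \emph{fixed} $x$, and the paper invokes it on the random witness $\xtr$ exactly as you do, so your black-box invocation reproduces, rather than falls short of, the paper's own treatment of this step.
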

	\noindent As desired, the algorithm outputs subset which violates the constraints of \target{} by at most a {small amount, with high probability.}

	Note that the approximation is only in the constraints and not in the value: with high probability, $x$ has an higher value than the optimal solution, say \xtr{}, of \target{}, i.e.,
	$$\sum\nolimits_{i=1}^m x_i w_i \geq \sum\nolimits_{i=1}^m \xtr_i w_i.$$
	In most real-world contexts $p$ is a small constant.
	Here, \cref{thm:algorithm_for_target_fair} implies that $x$ violates the fairness constraints  (Equation~\eqref{eq:target_fair:fairness_constraint}) by a multiplicative factor of at most $(1+2\delta+o(1))$
	and the constraint Equation~\eqref{eq:target_fair:cardinality_constraint} by a multiplicative factor of at most $(1+o(1))$ with high probability.\footnote{Using $L_\ell, U_\ell\leq n$; if not, we can set $L_\ell$ to $\min(L_\ell, n)$ and $U_\ell$ to $\min(U_\ell, n)$.}
	If $p$ is large, then $x$ (from \cref{thm:algorithm_for_target_fair}) can violate the constraints by a large amount.
	However, in this case it is \np-hard to even check if there is a solution to \denoised{} which violates the constraints by a constant additive factor ({let alone finds an optimal solution for \target{}}); see Theorem~\ref{thm:hardness_results_main}.

	Algorithm~\ref{alg:algorithm_for_target_fair} crucially uses the \denoised{}:
	it first solves the linear-programming relaxation of \denoised{}, and then, ``rounds'' this solution to integral coordinates.
	In the next section, we overview the proof of \cref{thm:algorithm_for_target_fair}.
	We defer the proof of \cref{thm:algorithm_for_target_fair} to {\suppMat~\ref{sec:proof:thm:algorithm_for_target_fair} due to space constraints.}

	\begin{remark}\label{rem:only_lower_bounds}
		We can strengthen \cref{thm:algorithm_for_target_fair} to guarantee that  Algorithm~\ref{alg:algorithm_for_target_fair} finds an $x\in\zo^m$ which does not violate the lower bound fairness constraint (left inequality in Equation~\eqref{eq:target_fair:fairness_constraint})
		and violates the upper bound fairness constraints by at most $(p+2\delta n)$ (without changing other conditions).
		In particular, this shows that, if one places only lower bound fairness constraints, then subset found by Algorithm~\ref{alg:algorithm_for_target_fair} would never violate the fairness constraints.
	\end{remark}

	\vspace{-4mm}
	\setlength{\algomargin}{0.5em}
	\begin{algorithm}[h!]
		\AlgoDontDisplayBlockMarkers\SetAlgoNoEnd
		\caption{Algorithm for \target{}}
		\label{alg:algorithm_for_target_fair}
		\kwInit{A number $n\in\N$, probability matrix $q\in [0,1]^{m\times p}$, utility vector $w\in \R^m$, constraint vectors $L,U\in \R_{\geq 0}^p$.}\vspace{2mm}

		1. {\bf Solve} $x\gets$ {Find a basic feasible solution to linear-programming relaxation}

		\white{.}\hspace{20mm}  of \denoised{} with inputs $(n,q,w,L,U)$. %

		2. {\bf Set }\hspace{0mm} $x^\prime_i\coloneqq\ceil{x_i}\ $ for all $i\in [m]$.\commentalg{Round solution}

		3. {\bf Return} $x^\prime$.
	\end{algorithm}
	\vspace{-4mm}

	\subsection{Proof overview and hardness results}
	\label{sec:theoretical_results:proof_overview}
	In this section, we overview the proof of \cref{thm:algorithm_for_target_fair}.
	The complete proof and an extension of \cref{thm:algorithm_for_target_fair} for multiple protected attributes (i.e., $s\geq 1$) appear in \suppMat s~\ref{sec:proofs} and~\ref{sec:extended_theoretical_results}.

	The proof of \cref{thm:algorithm_for_target_fair} has two broad steps:
	First, we show that solving \denoised{} (even approximately) gives us a ``good'' solution to \target{}, and then we develop an approximation algorithm along with matching hardness results for \denoised{}.
	To prove the former, we bound the difference between the true and the expected {number of candidates from any one group $G_\ell$.}
	\begin{lemma}\label{thm:relation_between_target_and_noisy}
		For all $\delta\negsp \in\negsp (0,1)$ and $x\negsp \in\negsp  [0,1]^m$, s.t., $\sum_{i=1}^m x_i\negsp =\negsp n$:
		\begin{align*}
			\text{$\forall\ \ell\in [p]$,}\quad \abs{\sum\nolimits_{i\in G_\ell} x_i - \sum\nolimits_{i\in [m]}q_{i\ell} x_i } \leq n\delta
		\end{align*}
		holds with probability at least $1-2p\exp\inparen{\nfrac{-\delta^2 n}{3}}$ over the noise in the protected attributes of each item.
	\end{lemma}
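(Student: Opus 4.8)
The plan is to fix the observed data, recognize $\sum_{i\in G_\ell} x_i$ as a sum of independent bounded random variables whose expectation is exactly $\sum_{i\in[m]} q_{i\ell} x_i$, apply a Chernoff-type concentration bound for each value $\ell$, and finish with a union bound over the $p$ values of $\ell$.

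In more detail, first I would condition on the observed attributes $(w_i,\hz_i,a_i)_{i\in[m]}$. Under this conditioning the matrix $q$ is determined, and—since the algorithm never sees the true protected attributes—the selection $x$ is a fixed vector as well. For a fixed $\ell\in[p]$, set $B_i\coloneqq\mathbb{I}[i\in G_\ell]$. By \cref{def:individual_flipping_noise} the $B_i$ are mutually independent across $i$ (independence is needed only across items, not across the values $\ell$) and satisfy $\mathbb{E}[B_i]=q_{i\ell}$. Hence $X_\ell\coloneqq\sum_{i\in G_\ell} x_i=\sum_{i\in[m]} x_i B_i$ is a sum of independent terms $x_i B_i\in[0,1]$ with mean $\mu_\ell\coloneqq\mathbb{E}[X_\ell]=\sum_{i\in[m]} q_{i\ell} x_i$, and the constraint $\sum_i x_i=n$ together with $q_{i\ell}\le 1$ and $x_i\ge 0$ forces $0\le\mu_\ell\le n$.

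Next I would bound $\Pr[\,|X_\ell-\mu_\ell|\ge n\delta\,]$ via the multiplicative Chernoff bound, whose deviations come in the multiplicative form $\eta\mu_\ell$; the real work is converting the target additive deviation $n\delta$ into that form. Writing $\eta=n\delta/\mu_\ell$, the upper tail splits into two regimes. When $\mu_\ell\ge n\delta$ we have $\eta\le 1$, and the standard bound $\Pr[X_\ell\ge(1+\eta)\mu_\ell]\le\exp(-\eta^2\mu_\ell/3)$ gives $\exp(-(n\delta)^2/(3\mu_\ell))\le\exp(-n\delta^2/3)$ after using $\mu_\ell\le n$. When $\mu_\ell<n\delta$ we have $\eta>1$, and the large-deviation form $\exp(-\eta^2\mu_\ell/(2+\eta))$ together with $\eta^2/(2+\eta)\ge\eta/3$ for $\eta\ge 1$ and $\delta<1$ again yields $\exp(-n\delta/3)\le\exp(-n\delta^2/3)$. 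For the lower tail, either $\mu_\ell<n\delta$, in which case $\{X_\ell\le\mu_\ell-n\delta\}$ is impossible since $X_\ell\ge 0$, or $\mu_\ell\ge n\delta$ and the lower-tail bound $\exp(-\eta^2\mu_\ell/2)$ applies as above. Combining the two tails gives $\Pr[\,|X_\ell-\mu_\ell|\ge n\delta\,]\le 2\exp(-n\delta^2/3)$; since this estimate is uniform over the conditioning it holds unconditionally, and a union bound over $\ell\in[p]$ produces the claimed failure probability $2p\exp(-\delta^2 n/3)$.

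The main obstacle I anticipate is precisely the additive-to-multiplicative conversion in the small-mean regime $\mu_\ell<n\delta$: the simplest Chernoff statement is valid only for $\eta\le 1$ and does not cover it, so I must either invoke the large-deviation variant with the $\eta^2/(2+\eta)$ exponent or, as an alternative, route the whole argument through a Bernstein inequality using $\mathrm{Var}(x_i B_i)\le x_i q_{i\ell}$ summed to $\mu_\ell\le n$. Either path delivers the $\tfrac13$ constant in the exponent; the Bernstein route removes the case split at the cost of slightly more bookkeeping on the variance and the range. A secondary point worth stating carefully is the conditioning step that justifies treating $x$ as fixed: because $x$ depends only on the observed data and not on the realized protected attributes, the independence hypothesis of \cref{def:individual_flipping_noise} applies verbatim once we condition, so no union bound over the (uncountably many) feasible $x$ is required—the statement is pointwise in $x$.
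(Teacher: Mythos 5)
Your proposal is correct and follows essentially the same route as the paper: introduce indicators $Z_i=\mathbb{I}[i\in G_\ell]$, observe that $\sum_{i\in G_\ell}x_i=\sum_i x_iZ_i$ is a sum of independent terms in $[0,1]$ with mean $\mu_\ell=\sum_i q_{i\ell}x_i\le n$, apply a Chernoff bound to each tail, and union bound over $\ell\in[p]$. The only real difference is that you explicitly handle the small-mean regime $\mu_\ell<n\delta$ (via the $\eta^2/(2+\eta)$ exponent for the upper tail and the impossibility of the lower tail), whereas the paper simply invokes an additive Chernoff bound of the form $\exp(-(\delta n)^2/(3\mu_\ell))$, which in its standard multiplicative derivation requires $\delta n\le\mu_\ell$ — so your write-up is, if anything, slightly more careful on a point the paper leaves implicit.
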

	\noindent The proof of this lemma appears in \suppMat~\ref{sec:proof:thm:relation_between_target_and_noisy}.

	Using \cref{thm:relation_between_target_and_noisy}, we can show that any solution that violates the constraints of \denoised{} by a small amount, with high probability, also violates the constraints of \target{} by at most a small amount.
	Let \xtr{} be an optimal {selection} for \target{}.
	Using \cref{thm:relation_between_target_and_noisy}, we can show that \xtr{} is feasible for \denoised{} with high probability.
	It follows any solution $x$ which is optimal for \denoised{} has value at least as large as \xtr{}, i.e., $$\sum\nolimits_{i=1}^m x_i w_i \geq \sum\nolimits_{i=1}^m \xtr_i w_i.$$
	These suffice to show that, solving \denoised{} gives a ``good'' solution for \target{}---which satisfies the claims in the \cref{thm:algorithm_for_target_fair}.

	It remains to solve \denoised{}. Unfortunately, even checking if \denoised{} is {\em feasible} is \np-hard; see \cref{thm:hardness_results_main} (a constant-factor approximation (in utility) to \denoised{} is also \np-hard).
	We overcome this hardness by allowing solutions to violate the constraints of \denoised{} by a small additive amount ($p$).
	Towards this, consider the linear-programming relaxation of \denoised{} (for $s=1$).
	We show that any {\em basic feasible solution} (BFS) of \relaxation{} has a small number of fractional entries (\cref{thm:optimal_solution_with_few_fractional}).
	\begin{align}
		\max\nolimits_{x\in [0,1]^m}\quad  &\sum\nolimits_{i=1}^{m}w_{i}x_{i}\tag{LP-Denoised for $s=1$}\customlabel{prob:relaxed_denoised_fair}{LP-Denoised}\\
		\st\qquad \forall \ \ell \in [p],\qquad & L_\ell - \delta n\leq \sum\nolimits_{i=1}^{m} q_{i\ell}x_{i} \leq U_{\ell}+\delta n,\\
		&\sum\nolimits_{i=1}^{m}x_i = n.
	\end{align}
	\begin{lemma}[{\bf An optimal solution with $p$ fractional entries}]\label{thm:optimal_solution_with_few_fractional}
		Any basic feasible solution $x\in[0,1]^m$ of \relaxation{} has at most $\min(m,p)$ fractional values, i.e.,
		$\sum_{i=1}^{m}\mathbbm{I}[x_i\in(0,1)]\leq \min(m,p)$.
	\end{lemma}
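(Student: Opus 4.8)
The plan is to invoke the standard characterization of a basic feasible solution (BFS) as a vertex of the feasible polytope of \relaxation{}: a point $x\in[0,1]^m$ is a BFS exactly when the set of constraints tight at $x$ contains $m$ linearly independent ones. First I would split the constraints into \emph{box constraints} ($0\le x_i$ and $x_i\le 1$) and \emph{structural constraints} (the $p$ fairness inequalities together with the cardinality equality $\sum_i x_i = n$). Let $F\coloneqq\{i : x_i\in(0,1)\}$ and $f\coloneqq|F|$. Since $f\le m$ trivially, it suffices to prove $f\le p$, which then yields $f\le\min(m,p)$.

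For each non-fractional coordinate $i\notin F$, exactly one box constraint ($x_i=0$ or $x_i=1$) is tight and contributes the standard basis vector $e_i$; these $m-f$ vectors are linearly independent and supported entirely off $F$. Consequently, for the tight set to attain full rank $m$, the tight \emph{structural} rows must contribute rank $f$ beyond the span of $\{e_i : i\notin F\}$; equivalently, their projections onto the $F$-coordinates must have rank at least $f$.

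The crux is bounding the rank of the structural rows. Writing $v_\ell\coloneqq(q_{1\ell},\dots,q_{m\ell})$ for the normal of the $\ell$-th fairness constraint (both its inequalities share this normal) and $\mathbf{1}$ for the all-ones normal of the cardinality constraint, I would invoke the identity $\sum_{\ell=1}^{p} q_{i\ell}=1$ for every $i$, which holds because $q_i$ is a probability vector by \cref{def:individual_flipping_noise}. This gives $\sum_{\ell=1}^{p} v_\ell = \mathbf{1}$, so the cardinality row lies in the span of the $p$ fairness rows, and therefore the entire structural row space has dimension at most $p$. This is precisely the step that shaves the naive count of $p+1$ structural constraints down to $p$.

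Combining the two observations finishes the argument: the tight structural rows lie in a space of dimension at most $p$, so their projection onto the $F$-coordinates has rank at most $p$; but this projection must have rank at least $f$, forcing $f\le p$. The main obstacle is exactly the linear-dependence observation $\sum_\ell v_\ell=\mathbf{1}$ induced by $\sum_\ell q_{i\ell}=1$; without it one only obtains the weaker bound $f\le p+1$, and the rest of the proof is the routine vertex-counting skeleton.
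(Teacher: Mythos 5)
Your proof is correct and follows essentially the same route as the paper's: both rest on the vertex characterization of a BFS (a set of tight constraints of rank $m$) together with the identity $\sum_{\ell\in[p]} q_{i\ell}=1$, which places the cardinality row $\mathbf{1}$ in the span of the $p$ fairness rows and is exactly what shaves the bound from $p+1$ to $p$. The only difference is presentational: the paper first derives the weaker bound $p+1$ and then rules out $p+1$ linearly independent tight non-box constraints via a right-hand-side consistency argument (the tight bounds must sum to $n$), whereas you bound the dimension of the structural row space by $p$ upfront --- a slightly cleaner packaging of the same idea that also dispenses with the degenerate case $L_\ell = U_\ell$ automatically.
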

	\noindent The proof follows by specializing well-known properties of BFSs to \relaxation{}.
	{We remark that this result is tight; see \cref{fact:instance_with_p_fractional_entries}.}\\[-1mm]

	\noindent {\em Proof sketch of \cref{thm:algorithm_for_target_fair}.}
	Using \cref{thm:relation_between_target_and_noisy}, we can show that \xtr{} is feasible for \denoised{} with probability at least $1-2p\exp\inparen{\nfrac{-\delta^2 n}{3}}$.
	Assume that this event happens.
	Then, \xtr{} is also feasible for \relaxation{}.
	Consider the basic feasible solution $x$ to \relaxation{} from Step 1 of Algorithm~\ref{alg:algorithm_for_target_fair}.
	Since $x$ is optimal for \relaxation{}, it follows that $x$ {has a value at least as large as \xtr, i.e.,} $$\sum\nolimits_{i=1}^m x_i w_i \geq \sum\nolimits_{i=1}^m \xtr_i w_i.$$
	Further, since $w\geq 0$, the rounded solution $x^\prime$ from Step 2 of Algorithm~\ref{alg:algorithm_for_target_fair} only increases the utility of $x$.
	Thus, $$\sum\nolimits_{i=1}^m x^\prime_i w_i \geq \sum\nolimits_{i=1}^m \xtr_i w_i.$$
	This establishes the first claim in \cref{thm:algorithm_for_target_fair}.

	It follows from \cref{thm:optimal_solution_with_few_fractional} that $x^\prime$ picks at most $p$ more elements than $x$.
	Thus, $x^\prime$ violates Equation~\eqref{eq:denoised_fair:cardinality_constraint}, so Equation~\eqref{eq:target_fair:cardinality_constraint} by at most $p$.
	By the same argument, $x^\prime$ violates the fairness constraints of \denoised{} by at most $p$ (additive).
	Combining this with \cref{thm:relation_between_target_and_noisy}, we can show that, with probability at least $1-2p\exp\inparen{\nfrac{-\delta^2 n}{3}}$,  $x^\prime$ violates the fairness constraints of \target{} by at most $2\delta n+p$ (additive).
	This establishes the last two claims in \cref{thm:algorithm_for_target_fair} {(conditioning on the two events described above).}

	The run time follows since there are polynomial time algorithms to find a basic feasible solution of a linear program.
	Finally, taking a union bound over over the two events completes the proof.

	\subsubsection{Hardness results} Lastly, we present our hardness results; their proofs appear in  \suppMat~\ref{sec:proof:hardness_results_main}.
	\begin{theorem}[{\bf Hardness results---Informal}]\label{thm:hardness_results_main}
		Consider variants of \denoised{} for values of $p$.
		\begin{enumerate}[leftmargin=15pt]
			\item If $p\geq 2$, then deciding if the problem is feasible is \np-hard.
			\item If $p\geq 3$, then the problem is \apx-hard.
			\item If $p=\poly(m)$ and $s>1$, then for every constant $c > 0$, the following {\em violation gap} variant of \denoised{} is \np-hard.
			\begin{itemize}[]
				\item Output YES if the input instance is satisfiable.
				\item Output NO if there is no solution which violates every upper bound constraint at most an additive factor of $c$.
			\end{itemize}
		\end{enumerate}
	\end{theorem}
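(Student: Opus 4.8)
The plan is to prove the three claims by three separate polynomial-time reductions of increasing delicacy, and in every case the common technical nuisance is the additive $\pm\delta n$ slack in the fairness constraints~\eqref{eq:denoised_fair:fairness_constraint} together with the equality cardinality constraint~\eqref{eq:denoised_fair:cardinality_constraint}; the reductions must be scaled so these cannot wash out the combinatorial structure being encoded.

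\textbf{Part 1 (feasibility, $p\ge 2$).} I would reduce from \textsc{Subset Sum}: given positive integers $a_1,\dots,a_M$ and target $T$, decide whether some $S\subseteq[M]$ has $\sum_{i\in S}a_i=T$. When $p=2$ the probability normalization $q_{i2}=1-q_{i1}$ forces $\sum_i q_{i2}x_i=n-\sum_i q_{i1}x_i$, so the two fairness constraints collapse to a single two-sided constraint on $\sum_i q_{i1}x_i$. The idea is to set $q_{i1}\coloneqq a_i/C$ with $C\coloneqq 1+\sum_i a_i$ (so all probabilities lie in $(0,1)$), add $M$ dummy items with $q_{i1}\coloneqq 0$, take $n\coloneqq M$ so that any real set $S$ can be padded with dummies to absorb the cardinality constraint, and set $L_1\coloneqq U_1\coloneqq T/C$. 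Since distinct achievable integer sums $\sum_{i\in S}a_i$ differ by at least $1$, after division by $C$ they are $1/C$-spaced, so choosing $\delta<\tfrac{1}{2MC}$ (polynomial bit complexity) makes the window $[T/C-\delta n,\,T/C+\delta n]$ isolate the single value $T/C$; hence \denoised{} is feasible iff the \textsc{Subset Sum} instance is a YES-instance. Utilities are irrelevant and set to zero, and for $p>2$ I would append trivially satisfiable groups.

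\textbf{Part 2 (\apx-hardness, $p\ge 3$).} Because Part~1 shows feasibility can itself be \np-hard, an approximation ratio is only meaningful on instances guaranteed feasible, so the goal is a gap-preserving reduction producing \emph{always-feasible} instances on which approximating $\max\sum_i w_ix_i$ within a fixed constant is \np-hard. I would reduce from bounded-occurrence \textsc{Max-3SAT}, which is \apx-hard, using the classical base/digit encoding: one window constraint records a ``consistency ledger'' that forces exactly one of each variable's two literals to be selected, the extra coordinate made available by $p\ge3$ carries the slack digits for clause coverage (separating the two ledgers so that carries cannot let a selection cheat consistency), and the utility $w$ counts satisfied clauses. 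Then satisfiable instances admit feasible selections collecting all clause-utility, while instances where at most a $(1-\epsilon_0)$ fraction of clauses are satisfiable force every feasible selection to leave a constant fraction of utility uncollected, transferring the constant gap. The delicate step is choosing the base and $\delta$ so the $\delta n$ slack never spans a consistency digit.

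\textbf{Part 3 (violation gap, $p=\poly(m)$, $s>1$).} Here $p=\poly(m)$ permits one upper-bound constraint per combinatorial element, and $s>1$ supplies genuinely independent ``row'' and ``column'' partitions of the same items (impossible when $s=1$, where a single conservation law ties all group counts to $n$); I would exploit this grid structure to reduce from an exact-cover/matching problem such as \textsc{3-Dimensional Matching}, mapping triples to items and elements to upper bounds so that a perfect matching satisfies all upper bounds exactly, whereas in NO-instances any selection over-covers some element and violates its upper bound by at least one unit. To amplify the unavoidable violation past an arbitrary constant $c$, I would replicate each item's contribution by a factor $\lceil c\rceil$ (or take $\lceil c\rceil$ scaled disjoint copies across the second attribute), so a single forced over-cover becomes a violation exceeding $c$ while perfect covers remain exactly satisfiable. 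The main obstacle—and the crux of the whole theorem—is precisely this last point: arranging the scaling so that neither the $\delta n$ slack nor the cardinality equality can absorb the amplified violation, which requires fixing the replication factor and $\delta$ carefully relative to $c$ and $m$.
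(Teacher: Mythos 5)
Your reduction is correct and is essentially the paper's: both encode \textsc{Subset Sum} into the two-value probabilities of \denoised{} with dummy/padding items. The paper avoids an explicit equality window by setting $U_1=1$, $U_2=n-1$ and exploiting the conservation law $\sum_i x_i(q_{i1}+q_{i2})=n$ to force both upper bounds tight; you instead set $L_1=U_1=T/C$ and isolate the target via the $1/C$-spacing of achievable sums. Your explicit requirement $\delta<\tfrac{1}{2MC}$ is actually more careful than the paper, which silently treats the $\pm\delta n$ slack as absorbed into the bounds.

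\textbf{Part 2.} This is a genuinely different route: the paper reduces from \dkp{} with $d=p-1\geq 2$, citing \apx-hardness of \dkp{}, and its reduction is approximation-preserving with always-feasible image instances (dummy items). The comparison cuts both ways. On one hand, the paper's premise is fragile: for \emph{fixed} $d$, \dkp{} admits a PTAS (Frieze--Clarke), and the cited negative result is really the nonexistence of an FPTAS, so for constant $p$ the paper's argument delivers no-FPTAS-hardness rather than a constant-factor gap; your Max-3SAT route, if completed, would give genuine constant-gap hardness on always-feasible instances, which is stronger. On the other hand, your sketch omits the one gadget everything hinges on: with only window constraints and a fixed cardinality, unsatisfied clauses must remain \emph{feasible at a utility cost}, otherwise NO-instances of Max-3SAT map to infeasible instances of \denoised{} and you have only re-proved Part 1. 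Concretely, one needs per clause a zero-utility ``violation'' item carrying the full digit target, a unit-utility ``reward'' item, and slack items calibrated so the reward is selectable iff some literal of the clause is selected (e.g., literals $+1$, reward $+1$, violation $+4$, two slacks $+1$ each, digit target $4$, base $\geq 11$), plus zero-contribution filler items so that the varying per-clause item count does not break $\sum_i x_i=n$. These details are absent, and without them the ``utility counts satisfied clauses'' claim does not follow.

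\textbf{Part 3.} Here there is a genuine gap: the amplification step fails. The contributions $q^{\sexp{k}}_{i\ell}$ are probabilities, hence at most $1$, so an item's contribution to a constraint cannot be ``replicated by a factor $\lceil c\rceil$''; and taking $\lceil c\rceil$ disjoint copies spreads an over-cover across many constraints, each violated by only $1$, whereas the violation-gap problem requires that in NO-instances \emph{every} solution violate some \emph{single} constraint by more than $c$. Worse, once copies are selectable independently, a selection is effectively a multiset cover, and NO-instances of 3DM can still admit (near-)fractional perfect matchings, so the gap can vanish entirely. The missing idea is the paper's: start from a source with a \emph{polynomial} gap, namely Independent Set inapproximability within $|V|^{1-\eps}$, add one two-valued attribute per $(c+2)$-clique with upper bound $1$ (so any selection containing a whole clique violates that one constraint by $c+1>c$), and invoke the Tur\'an/Ramsey-type lemma of Chekuri--Khanna that an $n$-set containing no $(c+2)$-clique has an independent set of size $n^{1/(c+1)}$. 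That lemma is what converts ``no large independent set'' into ``every selection violates some single constraint by more than $c$''; per-element counting arguments in the style of 3DM cannot supply it for arbitrary constants $c$.
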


\newcommand{\customfootnotetext}[2]{{\renewcommand{\thefootnote}{#1}\footnotetext[0]{#2}}}
\section[Empirical results]{Empirical results\textsuperscript{$\star$}}\label{sec:empirical_results}
\customfootnotetext{$\star$}{The code for the simulations is available at \url{https://github.com/AnayMehrotra/Noisy-Fair-Subset-Selection}.}
	We evaluate our approach on utilities and noise derived from both synthetic and real-world data.
	{We consider the following algorithms:}

	\begin{enumerate}[itemsep=0pt, label={-}]\item[]

		\hspace{-10mm} {\bf Baseline.}

		\item \Blind{}: As a baseline, we consider the \Blind{} algorithm which selects $n$ candidates with the highest utility. Note that \Blind{} has the optimal unconstrained utility.\\[-2mm]

		\hspace{-10mm} {\bf Noise aware.}

		\item \FairExpec{} is our proposed approach (see \cref{thm:algorithm_for_target_fair}).

		\item \FairExpecGrp{} is the same as \FairExpec{} but uses the probabilities $\overline{q}_{i\ell}\negsp\coloneqq\negsp \Pr[i\negsp\in\negsp G_\ell \mid \hz_i]$ {from the group-level noise model (\cref{sec:group_level_noise}).}\\[-2mm]

		\hspace{-10mm} {\bf Noise oblivious.}
		\item[]\hspace{-10mm} Impute protected attributes  Bayes-optimally from $q\negsp\in\negsp[0,1]^{m\times p}$ as:
		\begin{align*}
			\forall\ i\in [m],\ \ell\in[p],\quad q_{i\ell}^\prime \coloneqq \begin{cases}
			1 & \text{if } \ell\in \argmax_{j\in [p]}q_{ij},\\
			0 & \text{otherwise}.
		\end{cases}\yesnum\label{eq:threshold}
	\end{align*}
	\hspace{-10mm} If $\argmax_{j}q_{ij}$ is not unique, pick one at random.
	Then, we consider the following noise oblivious algorithms

	\hspace{-10mm} which take the imputed protected attributes $q^\prime$ as input:\\[-2mm]

	\item $\Thresh{}$ solves \target{} defined on $q^{\prime}$.
	This is equivalent to the ranking algorithms of {\cite{celis2018ranking, SinghJ18}} adapted to subset selection.
	\item \MultObj{} is a multi-objective optimization algorithm inspired by \cite{YangS17}'s approach for ranking.
	Let $t\in \Delta^p$ be the target distribution of protected attributes in the selection.
	For example, if the target is equal representation, then $t \coloneqq (\nfrac1p,\dots,\nfrac1p)\in \R^p$.
	Given a constant $\lambda>0$, \MultObj{} solves\footnote{Given two vectors $x,y\in \Delta^p$, $D_{\rm KL}(x,y)$ denotes the Kullback–Leibler divergence of $x$ and $y$ defined as $D_{\rm KL}(x,y)\coloneqq \sum_{i=1}^p x_i\log(\nfrac{x_i}{y_i})$}
	\begin{align*}
		\max_{x\in [0,1]^n\colon \sum_i x_i = n}\ \  w^\top x - \lambda\cdot D_{\rm KL}\inparen{\frac{(q^\prime)^\top x}{n},t}\cdot \frac{w^\top 1_m}{m},
	\end{align*}
	where $1_m\in \R^m$ is the all one vector.
	The first term $w^\top x$ is the value of $x$, $(\nfrac{x}{n})$ is the distribution of noisy protected attributes in $x$, and entire second term is a penalty on $x$ for being far from the target distribution $t$.\footnote{We scale the second term by the average utility $\sum_{i=1}^{m}\nfrac{w_i}{m}$.
	This is not necessary, but ensures that $\lambda$ does not (heavily) depend on the scale of the utility.}
	\end{enumerate}

	\subsection{Setup and metrics}\label{sec:general_setup}
	\subsubsection{Setup}
	We consider one protected attribute ($s=1$) which takes $p$ disjoint values (we use $p=2$ and $p=4$).
	Our simulations either target equal-representation, where $t= (\nfrac1p,\dots,\nfrac1p)\in \R^p$, or proportional representation, where $t= (\nfrac{|G_1|}{m},\dots,\nfrac{|G_p|}{m})$.\\
	In each simulation, we do the following %
	\begin{itemize}[leftmargin=*]
		\item {\FairExpec{}, \FairExpecGrp{}, and \Thresh{}:} Set $L_\ell=0$ and $U_\ell=n(1-\alpha)+n\alpha t_\ell$, and vary $\alpha$ from $0$ to $1$.
		Notice that $\alpha=0$ enforces no constraints on the subset, the constraints become tighter as $\alpha$ increases, and $\alpha=1$ ensures the subset chooses exactly $n=t_\ell$ candidates from the $\ell$-th group.
		\item \MultObj{}: Vary $\lambda$ from $0$ to a large value. %
		Here, $\lambda=0$ enforces no penalty on the objective, the penalty increases as $\lambda$ increases, and $\lambda=\infty$ forces \MultObj{} to satisfy the target distribution exactly (on the noisy attributes). %
	\end{itemize}
	Let $(\alpha_r, \lambda_r)$ be the $r$-th choice of $\alpha$ and $\lambda$.
	For each $(\alpha_r,\lambda_r)$, we draw a set $M$ of $m$ individuals or items from the dataset.
	For each element $i\in M$, we have $q_i\in \Delta^p$ and $w_i\in \R$.
	We give the details of drawing $M$ and fixing $q_i, w_i$ with each simulation.

	\subsubsection{Fairness metric} Given subset $S\in [m]$ and target $t\in[0,1]^p$, let the {\em risk difference} $\cF(S,t)\in [0,1]$ of $S$ for target $t$ be
	\begin{align}
		\cF(S,t)\coloneqq 1-\min_{\ell\in [p]} t_\ell\ \cdot\ \max_{\ell,k\in [p]}\inparen{\frac{|S\cap G_\ell|}{n\cdot t_\ell} - \frac{|S\cap G_k|}{n\cdot t_k}}.%
	\end{align}
	Here, a risk difference 1 is the most fair and 0 is the least fair.
	When the target is proportional representation, $\cF(S,t)$ reduces to the usual definition of risk difference (up to scaling).\footnote{Some works also define risk difference as a measure of unfairness~\cite{calders2010three, ruggieri2014using}, and set it equal to $1-\cF(S,t)$ with $t=(1/p,\dots,1/p)$ (up to scaling).} %
	Let $\cA(w,q)\subseteq [m]$ be the subset selected by algorithm $\cA$ on input $(w,q)$.
	We report $$\cF_\cA\coloneqq \Ex\insquare{\cF(\cA(w,q),t)},$$ where the expectation is over the choices of $(w,q)$.

	\subsubsection{Utility metric}
	{Let $U_{\cA}$ to be the average utility obtained by $\cA$:}
	$$U_{\cA} \coloneqq \Ex\insquare{\sum\nolimits_{i\in \cA(w,q)} w_i},$$
	where the expectation is over the choices of $(w,q)$.
	We report the utility ratio $\cK_{\cA} \in [0,1]$ for different algorithms $\cA$, defined as
	\begin{align*}
		\cK_{\cA}\coloneqq \frac{U_{\cA}}{U_{\Blind{}}}. \tag{Utility ratio}
	\end{align*}
  When the algorithm $\cA$, is not important or clear from context, we drop the subscripts from $\cF_\cA$ and $\cK_\cA$.

	\subsection{Synthetic data with disparate error-rates}\label{sec:toy_simulation}
	In this simulation, we consider the setting where different groups have different noise levels.
	This has been observed in practice, for instance, in commercial image-based gender classifiers~\cite{BuolamwiniG18}.
	\subsubsection{Data}
	We generate a synthetic dataset with one binary protected attribute ($p=2$).
	This attribute partitions the (underlying) population into a minority group {(40\%)} and a majority group {(60\%)}.
	We assume that candidates in both groups have similar potentials, so, sample utilities of all candidates (independently) from $\cU(0,1)$.
	Next, we sample the probabilities $q_i$ from a Gaussian mixture, such that, the resulting population has 40\% minority candidates (in expectation),
	and the imputed attributes $q^\prime$ have a higher \textit{false discovery rate} (FDR) for minority candidates ($\approx$40\%) compared to majority candidates ($\approx$10\%).\footnote{The difference of 30\% in FDRs is comparable to 34\% difference in FDRs between dark-skinned females and light-skinned men observed by~\cite{BuolamwiniG18} for a commercial classifier.}
	Formally, we sample $q_{i}$ as follows:
	$$q_{i0}\sim \frac{7}{11}\cdot  \mathcal{N}_{}(0.6,0.05) + \frac{4}{11}\cdot \mathcal{N}_{}(0.05,0.05)\quad\text{and}\quad q_{i1}\coloneqq 1-q_{i0},$$
	where $\mathcal{N}(\mu,\sigma)$ is the truncated normal distribution on $[0,1]$ with mean $\mu$ and standard deviation $\sigma$.

	\subsubsection{Setup}
	In this simulation, we target equal representation between the majority group and the minority group, and
	fix $m=500$ and $n=100$.

	We report the risk difference ($\cF$) of different algorithms as a function of $\alpha$ (for \FairExpec{}, \FairExpecGrp{}, and \Thresh{}) and as a function of $\lambda$ (for \MultObj{}) in Figure~\ref{fig:toy_simulation:1}.
	\renewcommand{\folder}{./figures/disparate-error}
	\begin{figure}[t!]
		\vspace{-3mm}
		\centering
		\begin{tikzpicture}
			\node (image) at (0,0) {\includegraphics[width=0.8\linewidth, trim={1cm 0cm 1cm 0.3cm},clip]{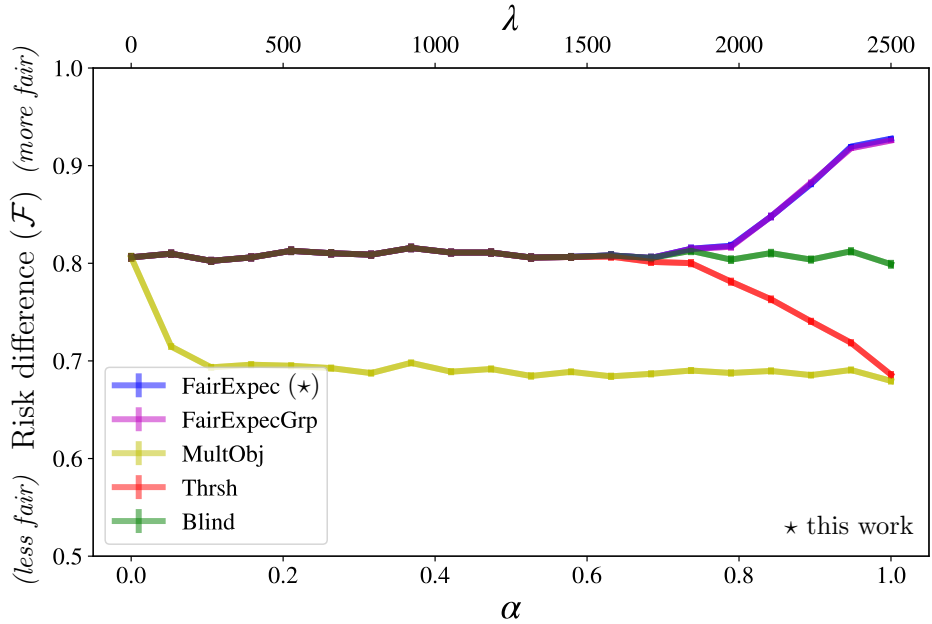}};
			\node[rotate=90, fill=white] at (-4.1*1.53,0) {\Large\white{AAAAAAAAAAAAAAAAAAA}};
			\node[rotate=90] at (-4.1*1.53  ,1.75*1.6) {\textit{(more fair)}};
			\node[rotate=90, fill=white] at (-4.1*1.53  ,0) {\large Risk difference ($\cF$)};
			\node[rotate=90] at (-4.1*1.53  ,-1.75*1.6) {\textit{(less fair)}};
			\node[rotate=0] at (-1.65*1.56,-0.55*1.6) {{$(\star)$}};
			\node[rotate=0] at (2.85*1.64,-1.7*1.6) {{$\star$ this work}};
		\end{tikzpicture}
		\vspace{-4mm}
		\caption{\small
		{\em {Synthetic data with disparate error-rate (\cref{sec:toy_simulation}):}}
    This simulation considers the setting where the minority group (40\% of total) has a higher 30\% higher FDR compared to the majority group.
    The utilities of all candidates are iid from the uniform distribution.
		The target is to ensure equal representation between the majority and minority groups.
		The $y$-axis shows the risk difference $\cF$ of different algorithms, and the $x$-axis shows the constraint parameters ($\alpha$ for \FairExpec{}, \FairExpecGrp{}, and \Thresh{}, and $\lambda$ for \MultObj); $\cF$ values are averaged over 500 trials, and the error bars represent the standard error of the mean.
		We observe that increasing fairness constraints to noise oblivious algorithms (\Thresh{} and \MultObj{}) worsens their risk difference!
		Whereas, the risk difference of noise aware algorithms improves (\FairExpec{} and \FairExpecGrp{}) on increasing fairness constraints.
		\vspace{-3mm}
		}
		\label{fig:toy_simulation:1}
	\end{figure}

	\begin{remark}
		\MultObj{} does not guarantee a particular fairness-level for any fixed $\lambda$.
		Thus, one should consider the limiting value of $\cF_{\MultObj{}}$ in Figure~\ref{fig:toy_simulation:1}.
	\end{remark}

	\subsubsection{Results}
	We observe that without any constraints (i.e., $\alpha=0$ and $\lambda=0$) all algorithms have similar risk difference $\approx 0.81$.
	However, on adding fairness constraints \Thresh{} and \MultObj{} become more {\em unfair}.
	In fact, for the strongest fairness constraint (i.e, $\alpha=1$ and $\lambda=2500$) they have the {\em lowest} risk difference ($<0.7$).
	This is because, the imputed protected attributes have a higher FDR for the minority group (so, the algorithms pick a  higher number of candidates from the majority group).

	In contrast, \FairExpec{} and \FairExpecGrp{} do not use the imputed protect attributes, so increasing fairness constraints to \FairExpec{} and \FairExpecGrp{} improves their risk difference, and for the strongest fairness constraint ($\alpha=1$) they attain the highest risk difference ($>0.92$).
	Finally, since we sample all utilities from the same distribution, it is not surprising that \FairExpec{} and \FairExpecGrp{} perform similarly.

	\subsection{Synthetic data with disparate utilities}\label{sec:causal_simulation}
	In this simulation, we consider the setting where different groups have different distributions of utilities.
	In particular, we assume that the minority group (unfairly) has a lower average utility, when in fact, the distributions of utilities should be the same for both the majority group and the minority group.
	(Contrast this with \cref{sec:toy_simulation} where all utilities are identically drawn).
	Such differences in utility can manifest in the real world for many reasons, including, the implicit biases of the committee evaluating the candidates~\cite{wenneras2001nepotism, bertrand2004emily, uhlmann2005constructed} and structural oppression faced by different groups~\cite{faught2017association}.

	\paragraph{Counterfactually fair approaches.}
	One could also consider counterfactually fair approaches to mitigate bias in selection.
	(We refer the reader to \cite{kusner2017counterfactual} for an overview of counterfactual fairness).
	At a high-level, these approaches try to ``unbias'' the utilities across groups, and then use unbiased utilities in subsequent tasks (say, selection or ranking).
	In this simulation, we also consider counterfactually fair algorithms by \cite{yang2020causal}: \CntrFair{} and \CntrFairRes{}. (They correspond {to non-resolving and resolving algorithms in \cite{yang2020causal}.)}

	{Roughly, they assume that there is a causal model $\cM[\theta]$ (parameterized by $\theta$), such that, given the attributes $(z_{i}, a_i)$ of an individual, their utility is $w_i = \cM[\theta](z_i, a_i)$.}
	{Then, roughly, they fix each individual's protected attributes to $v$ and compute the ``unbiased'' utility as $\hat{w}_i\coloneqq \cM[\theta](v, a_i)$; this represents the utility of the individuals had they had {the same protected attribute $v$.}}

	\subsubsection{Data}
	We consider a synthetic hiring dataset, generated with the code provided by~\cite{yang2020causal}.
	In the data, each candidate $i$ has one protected attribute $z_i\in \zo$
	denoting their race (0 if the candidate is Black and 1 otherwise) and
	two nonprotected attributes $a_{i1}\in \zo$ and $a_{i2}\in \R$:
	$a_{i1}$ is 1 if the candidate has prior work experience, and $a_{i2}$ is denotes their qualifications (the larger the better).\footnote{This interpretation differs from \cite{yang2020causal}, who interpret both $z_i$ and $a_{i1}$ as protected attributes.}
	We sample 2000 candidates independently from a fixed distribution defined in \cite{yang2020causal}, which, is such that, the utility of Black candidates is (unfairly) lower than non-Black candidates.\footnote{The only difference from \cite{yang2020causal} is that we increase the underlying bias (by reducing the mean utilities for Black candidates and candidates without prior experience).
	We do so because, the dataset already had a high risk difference ($>0.9$) without adding any fairness constraints.}
  {For further details, we refer the reader to \suppMat{}~\ref{sec:implementation_details:causal_simulation}.}

	\paragraph{Preprocessing.}
	{We sample a training dataset $D^\prime$ with $m=2000$ candidates.
  Then, using $D^\prime$ we compute an approximation $\bar{\theta}$ of $\theta$, and given a candidate $i$ described by $(w_i, z_i, a_{i1}, a_{i2})$, we compute $q_i\in [0,1]^2.$
  (Note that the candidate $i$ may not be in $D^\prime.$)
  For prefer more details of the preprocessing, {in \suppMat{}~\ref{sec:implementation_details:causal_simulation}.}}
	\renewcommand{\folder}{./figures/disparate-utilities}
	\begin{figure}[t!]
		\begin{center}
			\vspace{-3.0mm}
			\begin{tikzpicture}
				\node (image) at (-0.5,0) {\includegraphics[width=0.8\linewidth, trim={1.8cm 1cm 2cm 1.6cm},clip]{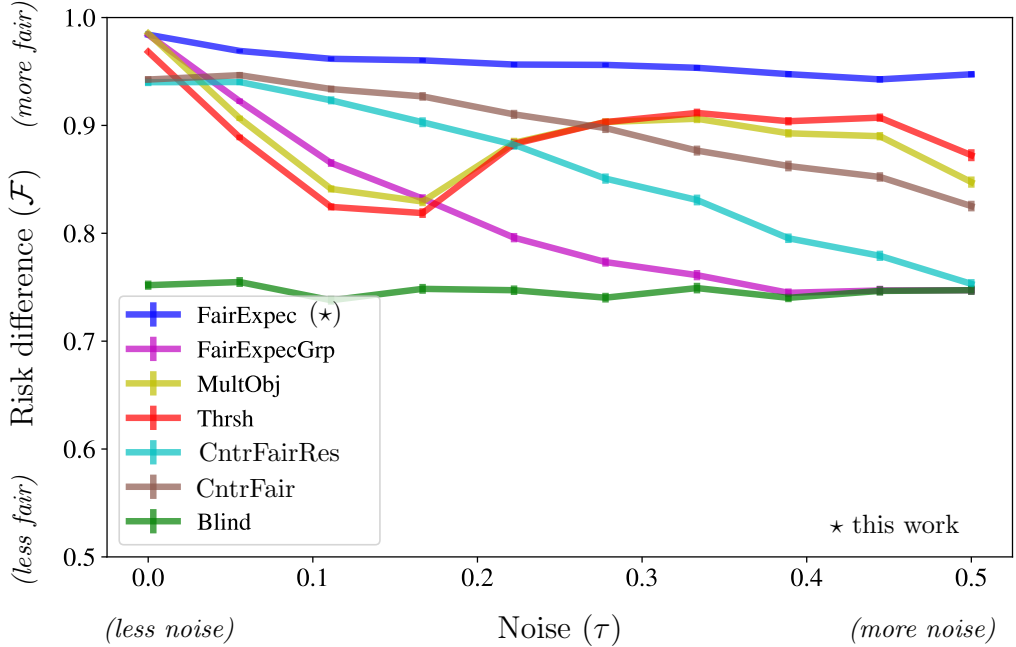}};
				\node[rotate=0, fill=white] at (-0.3*1.70,2.3*1.75) {\large \white{AAAAAAAAAAAAAAAAAAAAA}};
				\node[rotate=90, fill=white] at (-4.4*1.63,-0.5*1.77) {\large \white{AAAAAAAAAAAAAAAAAAAAA}};
				\node[rotate=90] at (-4.6*1.60,1.75*1.77) {\textit{(more fair)}};
				\node[rotate=90, fill=white] at (-4.6*1.60,0*1.77) {\large Risk difference ($\cF$)};
				\node[rotate=90] at (-4.6*1.60,-1.75*1.77) {\textit{(less fair)}};
				\node[rotate=0] at (2.5*1.83,-2.5*1.77) {\textit{(more noise)}};
				\node[rotate=0] at (-0.3*0.8,-2.5*1.77) {\large Noise ($\tau$)};
				\node[rotate=0] at (-3.4*1.60,-2.5*1.77) {\textit{(less noise)}};
				\node[rotate=0] at (-2.1*1.60,-0.125*1.77) {{$(\star)$}};
				\node[rotate=0] at (2.3*1.83,-1.7*1.77) {{$\star$ this work}};
				\node[rotate=0,fill=white] at (-2.78*1.59,-1.45*1.75) {{CntrFair}};
				\node[rotate=0,fill=white] at (-2.33*1.58,-1.175*1.75) {{\white{CntrFairRe}}};
				\node[rotate=0,fill=white] at (-2.63*1.56,-1.175*1.75) {{CntrFairRes}};
			\end{tikzpicture}
		\end{center}
		\vspace{-4mm}
		\caption{\small\small
		{\em {Synthetic data with disparate utilities (\cref{sec:causal_simulation}):}}
		{This simulation considers the setting where the utilities of a minority group have a lower average than the majority group, and both groups have an identical amount of noise.
    The target is to ensure proportional representation.}
		{The $y$-axis shows the risk difference $\cF$ of different algorithms, and the $x$-axis shows the amount of noise added $\tau\in[0,\nfrac12]$; $\cF$ values are averaged over 200 trials, and the error bars represent the standard error of the mean.}
		We observe that the risk difference of all algorithms becomes poorer with noise ($\tau>0$) than without it ($\tau=0$).
		Here, \FairExpec{} has the highest risk difference for all values of noise.
		Finally, unlike \cref{sec:toy_simulation}, \FairExpecGrp{} has a lower fairness than \FairExpec{} (since, in this simulation, {different groups have different distributions of utilities).}
		\vspace{-3mm}
		}
		\label{fig:causal_simulation}
	\end{figure}

	\paragraph{Adding noise.}
	The dataset does not have noise to begin with.
	Given a noise level $\tau\in [0,\nfrac12]$, we generate noisy race $\widehat{z}_i$ of candidate $i$ by flipping their {race $z_i$ (independently) with probability $\tau$.}

	\subsubsection{Setup}
	We target proportional representation of race and vary $\tau$ over $[0,0.5]$.
	For each noise level $\tau$, we sample a new instance $D$ of the dataset with $m=2000$ and add $\tau$-noise to it.
	We fix $n=100$ and the strongest constraints $\alpha=1$ and $\lambda=500$ for the algorithms.\footnote{We choose $\lambda=500$ as \MultObj{}'s fairness $\cF_{\MultObj}$ converges before $\lambda=500$.}
	Here, \CntrFair{} and \CntrFairRes{} use $\cM(\bar{\theta})$ (calculated in preprocessing), and \FairExpec{}, \FairExpecGrp{}, \MultObj{}, and \Thresh{} use $q$ (or the imputed attributes $q^\prime$; both calculated in preprocessing). %

	We report the risk difference ($\cF$) as a function of the noise-level ($\tau$) in Figure~\ref{fig:causal_simulation}.
	We also report the selection rates from each group in \suppMat{}~\ref{sec:extended_empirical_results:causal_simulation}.

	\subsubsection{Results}
	We observe that all algorithms have the highest fairness when there is no noise ($\tau=0$).
	Here, they have a similar risk difference (lying between $0.94$ to $0.98$).
	As the noise increases, we observe that the risk difference of \FairExpecGrp{} and \CntrFair{} approaches $\cF_{\Blind{}}=0.74$, and risk difference of \MultObj{}, \Thresh{}, and \CntrFairRes{} approaches a value between $[0.82,0.87]$.
	In contrast, \FairExpec{} has a better risk difference, $\cF>0.94$, throughout. %
	The risk difference of \MultObj{} and \Thresh{} improves with $\tau$ at some values of $\tau$ --- we give a possible explanation in Remark~\ref{rem:explanation_higher_fairness_of_thresh_mult_obj}.
	Notice at $\tau=0.5$, for all candidates $i\in [m]$, the noisy label $\wh{z}_i\in \zo$ is chosen uniformly at random and provides no information about $z_i$.
	\CntrFair{} and \CntrFairRes{} use $\wh{z}_i$ to compute the counterfactual utilities, so, perform poorly at $\tau\approx 0.5$.
	Further, \FairExpecGrp{} uses the probabilities $\overline{q}_i$ which depend on $\widehat{z}_i$, but not on $w_i$.
	Since the utility of candidates of different races has a different distribution, $\overline{q}_i$ can be skewed (see Section~\ref{sec:group_level_noise}).

	We note that the utility of all algorithms decreases on adding noise.
	In particular, while \FairExpec{} is able to satisfy the fairness constraints with noise, its utility decreases on adding noise; see \suppMat~\ref{sec:extended_empirical_results:causal_simulation} {for a plot of utility ratio ($\cK$) vs noise ($\tau$).}

	\begin{remark}\label{rem:explanation_higher_fairness_of_thresh_mult_obj}
		{The risk difference of \Thresh{} and \MultObj{} is non-monotonic in the noise.
		This might be because the false discovery rate (FDR) of $q^\prime$ for Black candidates is non-monotonic.
		Specifically, the FDR first increases with $\tau$ (roughly, for $\tau \leq 0.2$), and then decreases.
		The decrease in FDR after $\tau=0.2$ comes at the cost of fewer total positives (i.e., $q^\prime$ identifies fewer total Black candidates).
		The total number of total positives drop below $\nfrac{n}{2}$ for higher values of $\tau$.
		Correspondingly, the $\cF$ of \Thresh{} and \MultObj{} first decreases as FDR reduces, then increases as FDR increases until the number of total positives is larger than, roughly, $\nfrac{n}{2}$, and finally, decreases as the number of total positives drops below $\nfrac{n}{2}$.}
	\end{remark}

	\begin{remark}%
		We do not consider counterfactual approaches in \cref{sec:toy_simulation} because there, the utilities are already unbiased, and so, \CntrFair{} and \CntrFairRes{} reduce to \Blind{}.
		Further, \CntrFair{} and \CntrFairRes{} only ensure proportional representation.
		We find that the datasets considered in Sections~\ref{sec:candidate_selection} and  \ref{sec:image_selection} are already fair when the target is proportional representation; in both cases, $\cF_\Blind{}>0.9$.
		Therefore, we omit these algorithms from those simulations.
	\end{remark}

	\subsection{Real-world data for candidate selection}\label{sec:candidate_selection}
	In this simulation, we consider the problem of selecting candidates under noisy information about their race.
  Similar to what has been used in applications (e.g.,~\cite{elliott2009UsingCencusSurnameList}), we use a candidate's last name to predict their race.
	We consider a candidate's utility as their ``previous-salary,'' which we model using the race-aggregated income dataset~\cite{census_income_dataset}. This dataset provides the income distribution of families from different races (elaborated below).
  This problem could be relevant in the context of an online hiring platform, which would like to display a race-balanced set of candidates, but only has noisy information about each candidate's race~\cite{linkedin_inferred_gender}.

	\subsubsection{Data}
	\paragraph{US 2010 Census dataset~\cite{census_data_10}.} %
	The dataset contains 151,671 distinct last names which occurred at least a 100 times in the US 2010 census (23,656 last names occur at least a 1000 times).
	For each last name $i$, the dataset has its total occurrences per 100k people $c(i)\in \Z$, and a vector $\hat{f}=(f_1(i),\dots,f_6(i))\in [0,1]^6$ representing the fraction of individuals who are White, Black, Asian and Pacific Islander (API), American Indian and Alaskan Native only (AIAN), multiracial, or Hispanic respectively.

	We do not use `AIAN' and `two or more races' categories (i.e., $f_4$ and $f_5$) as they do not occur in the income dataset.
	Then, for each last name $i$, we define the probability vector $q_i$ as the normalized version of the vector $(f_1(i),f_2(i),f_3(i),f_6(i))$.

	\renewcommand{\folder}{./figures/candidate-subset-selection}
	\begin{figure}[t!]
		\vspace{-3mm}
		\begin{center}
			\begin{tikzpicture}
				\node (image) at (-0.5,0) {\includegraphics[width=0.8\linewidth, trim={1.7cm 1.2cm 0.5cm 1.7cm},clip]{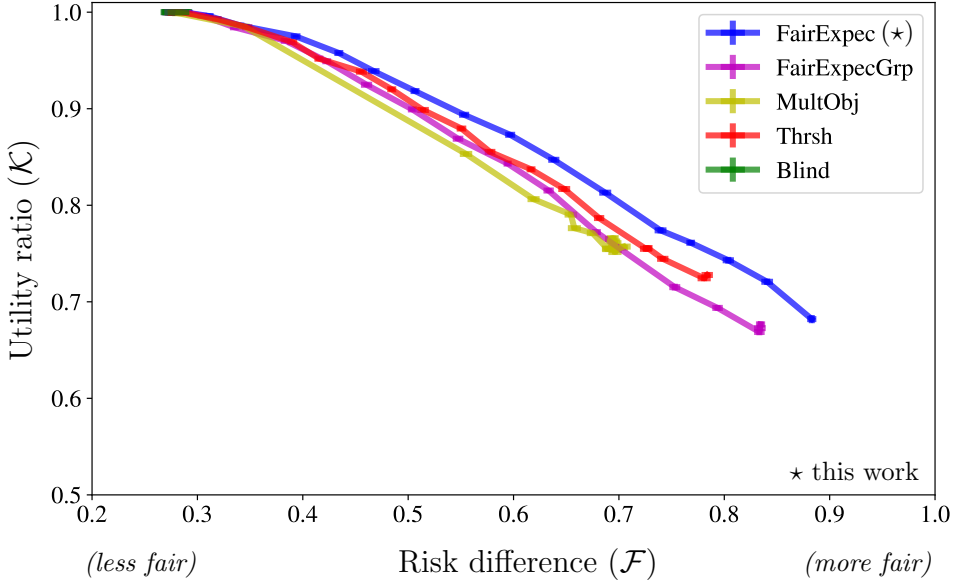}};
				\node[rotate=0, fill=white] at (-0.45*1.56,2.28*1.64) {\large \white{AAAAAAAAAAAAAAAAAAAAAAAAAAAAAAAAAAAA}};
				\node[rotate=90, fill=white] at (-4.6*1.56,-0.5*1.68) {\large \white{AAAAAAAAAAAAAAAAAAAAA}};
				\node[rotate=90, fill=white] at (-4.6*1.56,0.15) {\large Utility ratio ($\cK$)};
				\node[rotate=0] at (2.3*1.76 ,-2.4*1.68) {\textit{(more fair)}};
				\node[rotate=0] at (-0.3*1.56,-2.4*1.68) {\large Risk difference ($\cF$)};
				\node[rotate=0] at (-3.6*1.56,-2.4*1.68) {\textit{(less fair)}};
				\node[rotate=0] at (2.525*1.76, 1.85*1.64 ) {{$(\star)$}};
				\node[rotate=0] at (2.2*1.76,-1.7*1.64 ) {{$\star$ this work}};
			\end{tikzpicture}
		\end{center}
		\vspace{-4mm}
		\caption{\small
		{\em {Real-world data for candidate selection (\cref{sec:candidate_selection}):}}
		{This simulation considers race as the protected attribute which takes $p=4$ values.
    The simulation uses last-name as a proxy to derive noisy information of race and draws the utility of each candidate from a fixed distribution depending on their race.
    The target is to ensure equal representation across race.}
		The $y$-axis shows the utility ratio $\cK$ of different algorithms, and the $x$-axis shows the risk difference of different algorithms; both $\cK$ and $\cF$ values are averaged over 100 trials, and the error bars represent the standard error of the mean.
		We observe that \FairExpec{} reaches the highest risk difference ($\cF=0.89$), and has a better tradeoff between utility and risk difference compared to other algorithms.
		}
		\vspace{-3mm}
		\label{fig:candidate_selection:1}
	\end{figure}

	\newcommand{\negspb}{\hspace{0mm}}
	\paragraph{Income dataset~\cite{census_income_dataset}.}
	We use family income data aggregated by race~\cite{census_income_dataset}.
	This was compiled by the US Census Bureau from the Current Population Survey 2018~\cite{current_population_survey}.
	The dataset provides income data of 83,508,000 families.
	It {has} four races (White, Black, Asian, and Hispanic), 12 age categories, and 41 income categories.%
	\footnote{The age categories are: 15 to 24, 25 to 30, 30 to 35, \dots. %
	The income categories \negspb{} are: $[0,5000),$ $[5000,10^4), $ \dots, $[1.95,2\cdot 10^5),$\negspb{} and\negspb{} $(2\cdot 10^5,\infty)$\negspb{} in\negspb{} USD\negspb{} per\negspb{} annum.
	}
	For each set of race, age, and income categories, the dataset has the number of families whose reference person (see definition \href{https://www.census.gov/programs-surveys/cps/technical-documentation/subject-definitions.html}{here}) belongs to these categories.

	For each race $r$, we consider the discrete distribution $\cD_r$ of incomes of families with race $r$ derived from the income dataset~\cite{census_income_dataset}; see Figure~\ref{fig:stats_of_dr} in supplementary material for some statistics of $\cD_r$.

	\subsubsection{Setup}\label{sec:candidate_selection:ground_population}
	We consider race as a protected attribute with four labels ($p=4$) and target equal representation based on race.
	Let $m=1000$ and $n=100$.
	For each choice of $\alpha$ and $\lambda$, we draw a set $M$ of $m$ last names uniformly from the entire population with replacement: The $i$-th last name is drawn with probability proportional to $c(i)$.
	For each last name $i\in M$, we sample a ground-truth race $r_i$ (unknown to the algorithms) according to the distribution $q_i$, and then sample the income $w_i \sim \cD_{r_i}$.

  We report the utility ratio ($\cK$) as a function of the risk difference ($\cF$) for different algorithms in Figure~\ref{fig:candidate_selection:1}.
	We also report $\cF$ as a function of $\alpha$ (for \FairExpec{}, \FairExpecGrp{}, and \Thresh{}) and as a function of $\lambda$ (for \MultObj{})
	in \suppMat~\ref{sec:extended_empirical_results:candidate_selection}.

	\subsubsection{Results}
	\FairExpec{} reaches the highest risk difference of $0.89$, followed by \FairExpecGrp{}, which reaches a risk difference of $0.84$.
	\Thresh{} reaches $\cF=0.79$, \MultObj{} reaches $\cF=0.70$, and \Blind{} has $\cF = 0.28$.
	We do not expect the algorithms to outperform the unconstrained utility (i.e., that of \Blind{}).
	We observe that \FairExpec{} has a better Pareto-tradeoff compared to other algorithms, i.e., for any desired level of risk difference, it has a better utility ratio ($\cK$) than other algorithms.
	In contrast, while \FairExpecGrp{} also has a high maximum risk difference, it is not Pareto-optimal.
	All algorithms lose a large fraction of the utility (up to 33\%).
	This is because the unconstrained and constrained optimal are very different:
	without any constraints, we would roughly select 7\% candidates from some races.
	However, ensuring equal representation requires selecting roughly 4 times as many candidates from these races.
	When the difference between unconstrained and constrained optimal is smaller, we expect to lose a smaller fraction of the utility.

	\subsection{Real-world data for image search}\label{sec:image_selection}
	In this simulation, we consider the problem of selecting images under noisy information about the gender of the person depicted in the image.
	We derive noisy information about the gender of the person depicted in an image using a CNN-based gender classifier and use this information to select a gender-balanced set of images.
	This could be relevant in mitigating gender bias in image search engines, which have been observed to over-represent the stereotypical gender in job-related searches~\cite{KayMM15, celis2019imagesummarization}.
	{Here, one could first select a balanced subset of images to display on each page, and then order this subset in decreasing order of utility from top to bottom of each page.}

	\subsubsection{Data}
	We use a recent image dataset named the Occupations Dataset by \cite{celis2019imagesummarization}.
	The dataset contains top 100 Google Image Search results (from December 2019) for 96 occupations related queries.
	For each image, the dataset has a gender (coded as men, women, or other), skin-tone (coded as dark or light), and the image's position in the search result (an integer in $[100]$).
	We present aggregate statistics from the dataset in Figure~\ref{fig:occupations_dataset_stats}.
	\begin{figure}[t!]
		\vspace{3mm}
		\begin{center}
			\begin{tabular}{|l|lll|l|}\hline
				& Male & Female & NA   & Total \\\hline
				Dark  & 568  & 318    & 106  & 992\\
				Light & 2635 & 1987   & 386  & 5008\\
				NA    & 198  & 119    & 3283 & 3600\\\hline
				Total    & 3401  & 2424    & 3775 & 9600 \\\hline
			\end{tabular}
		\end{center}
		\vspace{-5mm}
		\caption{ \small {\bf\em Statistics of the Occupations dataset~\cite{celis2019imagesummarization}.}
		\vspace{-3.0mm}
		}
		\label{fig:occupations_dataset_stats}
	\end{figure}

	\paragraph{Gender classifier.}
	We use an off-the-shelf face-detector~\cite{cnn_model} to extract faces of people from the images,
	and then use a CNN-based classifier~\cite{imdb_wiki_code} to predict the (supposed) gender of people from their faces.
	For each image $i$, the classifier outputs a prediction $f_i\in [0,1]$ (resp. $1-f_i$) which is the (uncalibrated) likelihood that the image is of a man (resp. women).\footnote{While there could be richer and nonbinary gender categories, many commercial classifiers, and the classifier by~\cite{imdb_wiki_code} categorizes images as either male or female.}
	{We calibrate this score as described next.}
  {As a preprocessing step, we remove all images which either have a gender label of NA and for which the face-detector did not detect any face.\footnote{Note that we do not check if the detected faces are correct. This introduces some error, which is also expected in practice.}
  Then, we calibrate outputs of the classifier ($f_i$) on all remaining images.
  This calibration is only done once and is used to compute the noise-information ($q_i$) for the entire simulation.
  For more details of the preprocessing used, we refer the reader to \suppMat{}~\ref{sec:implementation_details:image_selection}.}

	\paragraph{Selecting occupations.} %
	Next, we infer the occupations which have considerable gender stereotype.
	Towards this, we fix a threshold $\zeta\in [0,1]$ and partition the occupations into three sets:
	\begin{itemize}[leftmargin=10pt]
		\item $\stf(\zeta)$: occupations for which at least $\zeta$-fraction of images were labeled to appear to depict women,
		\item $\stf(\zeta)$: occupations for which at least $\zeta$-fraction of images were labeled to appear to depict men, and
		\item all other occupations.
	\end{itemize}
	\noindent We fix $\zeta=0.8$.
	This gives us $|\stf(\zeta)|=12$ and $|\stm(\zeta)|=29$, and 1,877 images with occupations in $\stf(\zeta)\cup \stm(\zeta)$ (for a list of the occupations {see Table~\ref{table:occupations} in \suppMat~\ref{sec:extended_empirical_results:image_selection}).}
	\begin{remark}\label{rem:image_selection:skewed_probability}
		Note that we calibrate $q$ on all occupations, and only consider images in a subset of occupations (less than half).
		This means that $q$s may not be an unbiased estimate of the protected attributes---which is a hard case for \FairExpec{}.
	\end{remark}

	\renewcommand{\folder}{./figures/image-subset-selection}
	\begin{figure}[t!]
		\begin{center}
			\vspace{-3mm}
			\begin{tikzpicture}
				\node (image) at (-0.5,0) {\includegraphics[width=0.8\linewidth, trim={-0.1cm 0cm 0.5cm  1.7cm},clip]{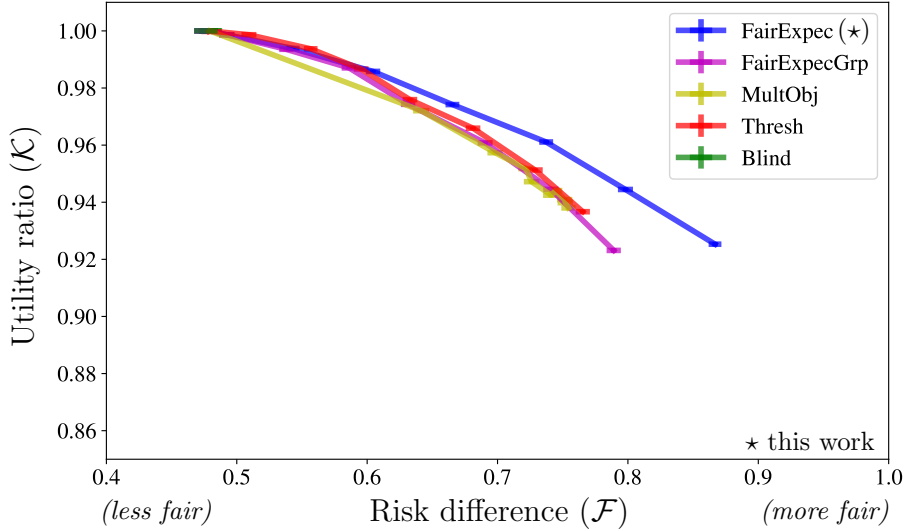}};
				\node[rotate=0, fill=white] at (-0.3*1.5,2.425*1.564) {\large \white{AAAAAAAAAAAAAAAAAAAAA}};
				\node[rotate=90, fill=white] at (-4.3*1.48,-0.5*1.6) {\large \white{AAAAAAAAAAAAAAAAAAAAA}};
				\node[rotate=90, fill=white] at (-4.3*1.5,0.25*1.6) {\large Utility ratio ($\cK$)};
				\node[rotate=0, fill=white] at (-0.3*1.5,-2.13*1.54) {\large \white{AAAAAAAAAAAAAAAAAAAAA}};
				\node[rotate=0] at (2.55*1.64,-2.13*1.54) {\textit{(more fair)}};
				\node[rotate=0] at (-0.3*0.5,-2.13*1.54) {\large Risk difference ($\cF$)};
				\node[rotate=0] at (-3.25*1.45,-2.13*1.54) {\textit{(less fair)}};
				\node[rotate=0] at (2.75*1.66, 1.99*1.57) {{$(\star)$}};
				\node[rotate=0] at (2.40*1.66,-1.45*1.6) {{$\star$ this work}};
			\end{tikzpicture}
		\end{center}
		\vspace{-4mm}
		\caption{\small \textbf{\em Real-world data for image search  (\cref{sec:image_selection})}:
		{This simulation considers gender as the protected attribute and uses a CNN-based classifier to derive noisy information about the gender of the person depicted in the image.
    The target is to ensure equal representation equal between genders.}
		The $y$-axis shows the utility ratio $\cK$ of different algorithms, and the $x$-axis shows the risk difference of different algorithms; $\cK$ and $\cF$ values are averaged over 200 trials, and the error bars represent the standard error of the mean.
		We observe that \FairExpec{} reaches the highest risk difference ($\cF\negsp =\negsp 0.86$), and has a better tradeoff between utility and fairness compared to other algorithms.
		}
		\vspace{-3mm}
		\label{fig:image_selection}
	\end{figure}

	\subsubsection{Setup}
	In this simulation, we consider gender as the protected attribute with two values ($p=2$), and fix $m=500$ and $n=100$.

	{We say a particular gender is {\em stereotypical} for a given occupation if the majority of images of this occupation are labeled to appear to depict a person with this gender}. For example, men are stereotypical for occupations in $\stm(0.8)$ and women are stereotypical for occupations in $\stf(0.8)$.
  {We call an image stereotypical if the dataset labels the person depicted in the image to appear to be of the stereotypical gender for its occupation.}
	We call an image {\em anti-stereotypical} if it is not stereotypical.
	We would like to ensure equal representation between {stereotypical and anti-stereotypical images.} %

	{For each choice of $\alpha$ and $\lambda$,} we draw a subset $M$ of $m$ images uniformly from all images with occupation in $\stf(\zeta)\cup \stm(\zeta)$.
	For each image $i\in M$, let its rank be $r_i\in [100]$.
	We compute $q_i$ as discussed earlier, and set its utility $w_i$ to
	$w_i\coloneqq (\log{(1+r_i)})^{-1}$.

	We report the utility ratio ($\cK$) as a function of the risk difference ($\cF$) for different algorithms in Figure~\ref{fig:image_selection}.
	We also report $\cF$ as a function of $\alpha$ (for \FairExpec{}, \FairExpecGrp{}, and \Thresh{}) and as a function of $\lambda$ (for \MultObj{})
	in \suppMat~\ref{sec:extended_empirical_results:image_selection}.
	\begin{remark}
		Since the underlying application in this simulation is ranking image results, we also considered Normalized DCG~\cite{DCG} as the utility metric.
    We observed similar results for this.
		For completeness, we present the plot in \suppMat~\ref{sec:extended_empirical_results:image_selection}.
		Furthermore, we also tried other functions for utilities $w_i$, including $100-r_i$ and $\nfrac{100}{r_i}$, and observed similar results.
	\end{remark}
	\subsubsection{Results}
	The risk difference of \Blind{} (i.e., the risk difference without any interventions) is $\cF=0.48$.
	All algorithms reach a better risk difference than \Blind{}.
	Among them, \FairExpec{} reaches the highest risk difference of $\cF=0.86$. While the next best algorithm \FairExpecGrp{} has $\cF=0.79$.
	Since the algorithms satisfy fairness constraints, we do not expect them to have a higher utility than \Blind{}; hence, it is not surprising that $\cK\leq 1$.
	Among the algorithms we consider, we observe that \FairExpec{} has the Pareto-optimal tradeoff between utility and fairness: it has a higher utility ratio compared to other algorithms for a given value of risk difference.

	\subsection{Additional empirical results}\label{rem:other_metrics}
	We present additional empirical results with selection lift in \suppMat{}~\ref{sec:extended_empirical_results}.
	We observe that, indeed, \FairExpec{} is able to mitigate discrimination with respect to selection lift and reaches the most-fair selection lift in all experiments.
	Further, it has the Pareto-optimal tradeoff between utility and fairness compared to other algorithms in all but one case:
	in the simulation from Section~\ref{sec:candidate_selection}, while \FairExpec{} has a lower utility than \MultObj{} for some levels of selection lift.
	We believe this is because the constraint region induced by threshold of selection lift is ``different'' from the constraint region of \target{}.
	One could correct this, e.g., by using \cite[Theorem 3.1]{celis2019classification} to reduce the constraint from selection lift to that of multiple lower and upper bound constraints.\footnote{This reduction uses multiple lower bound and upper bound constraints to provably approximate the constraints of selection lift.}

	\begin{remark}[{\bf Risk difference on varying $\nfrac{n}{m}$}]\label{rem:varying_n}
		Different applications could require selecting different fractions of results from the set of available items.
		For example, an image search engine might select a small fraction of available results, whereas a job platform can select a larger fraction.
		Towards analyzing the robustness of our approach to the fraction of items selected, we fixed $m$ and varied $n$ in simulations.
		We find that on increasing $n$ (holding $m$ fixed) the difference in algorithms' fairness remains roughly the same.
		See Figure~\ref{fig:candidate_selection:varying_n} for a plot for the simulation in \cref{sec:candidate_selection}; we present the plots for simulations from \cref{sec:toy_simulation,sec:image_selection}~in \suppMat~\ref{sec:extended_empirical_results:varying_n}.
	\end{remark}
  \renewcommand{\folder}{./figures/candidate-subset-selection/varying-n}
	\begin{figure}[t!]
		\begin{center}
			\vspace{-3mm}
			\begin{tikzpicture}
				\node (image) at (-0.5,0) {\includegraphics[width=0.8\linewidth, trim={0.5cm 0cm 1.5cm  1.6cm},clip]{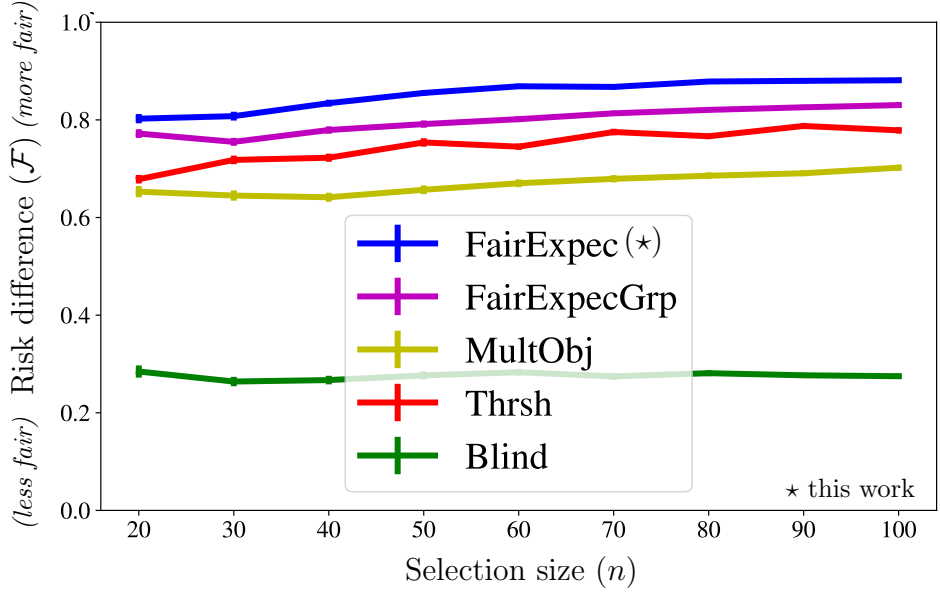}};
				\node[rotate=0, fill=white] at (0,2.60*1.55) {\large \white{AAAAAAAAAAAAAAAAAAAAAAAAAAAAAAAAAAAA}};
				\node[rotate=90, fill=white] at (-4.4*1.47,-0.5*1.6) {\large \white{AAAAAAAAAAAAAAAAAAAAA}};
				\node[rotate=90] at (-4.4*1.5,1.75*1.8) {\textit{(more fair)}};
				\node[rotate=90, fill=white] at (-4.4*1.5,0.30*1.6) {\large Risk difference ($\cF$)};
				\node[rotate=90] at (-4.4*1.5,-1.75*1.3) {\textit{(less fair)}};
				\node[rotate=0, fill=white] at (-0.3*1.5,-2.25*1.6) {\large \white{AAAAAAAAAAAAAAAAAAAAA}};
				\node[rotate=0] at (0,-2.25*1.6) {\large Selection size ($n$)};
				\node[rotate=0] at (0.9*1.8, 0.5*1.6) {\large{$(\star)$}};
				\node[rotate=0] at (2.65*1.65,-1.55*1.6) {{$\star$ this work}};
			\end{tikzpicture}
		\end{center}
		\vspace{-4mm}
		\caption{ \small {\bf\em Risk difference on varying $\nfrac{n}{m}$ (Remark~\ref{rem:varying_n}):}
		Towards analyzing the robustness of our approach to the fraction of items selected ($\nfrac{n}{m}$), we fix $m=1000$ and vary $n$ from $20$ to $100$ in Simulation~\ref{sec:candidate_selection}.
		The $y$-axis shows the risk difference $\cF$ of different algorithms, and the $x$-axis shows $n$; $\cF$ values are averaged over 100 trials, and the error bars represent the standard error of the mean.
		We find that increasing on $n$ the difference in the fairness of different algorithms remains roughly the same.
		}
		\vspace{-3mm}
		\label{fig:candidate_selection:varying_n}
	\end{figure}

	\section{Proofs}\label{sec:proofs}
		In this section, we present the proof of \cref{thm:algorithm_for_target_fair} (\suppMat{}~\ref{sec:proof:thm:algorithm_for_target_fair}), and present formal statements of our hardness results for \denoised{} and their proofs (\suppMat{}~\ref{sec:proof:hardness_results_main}).

		\subsection{Proof of Theorem~\ref{thm:algorithm_for_target_fair}}\label{sec:proof:thm:algorithm_for_target_fair}
		\subsubsection{Proof of Lemma~\ref{thm:relation_between_target_and_noisy_gen}}\label{sec:proof:thm:relation_between_target_and_noisy}
		\begin{proof}[Proof of \cref{thm:relation_between_target_and_noisy_gen}]
			For all $i\in [m]$, define $Z_i\in \zo$ to be the indicator random variable that $(i\in G_\ell)$.
			Notice that
			\begin{align}
				\sum_{i\in G_\ell} x_i = \sum_{i\in [m]} x_i Z_i.\label{eq:counting_with_indicator_rv}
			\end{align}
			From \cref{def:individual_flipping_noise} we have $$\Ex[Z_i]=q_{i\ell}.$$
			Using linearity of expectation we get
			$$\sum_{i=1}^m x_i q_{i\ell} \stackrel{}{=} \sum_{i=1}^m \Ex[Z_i].$$
			Now, we have
			\begin{align*}
				\Pr\insquare{\sum_{i=1}^m x_i Z_i > \sum_{i=1}^m x_i q_{i\ell} + n\delta}&\stackrel{}{=} \Pr\insquare{\sum_{i=1}^m x_i Z_i > \Ex\sinsquare{\sum_{i=1}^m Z_i} +n\delta}\\
				&\leq \exp\inparen{-\frac{(\delta n)^2}{3} \frac{1}{\Ex\insquare{\sum\nolimits_{i=1}^m x_i Z_i}}}\tag{Additive Chernoff bound~\cite{motwani1995randomized}}\\
				&\leq \exp\inparen{-\frac{(\delta n)^2}{3} \frac{1}{\Ex\insquare{\sum\nolimits_{i=1}^m x_i}}}\tag{$\forall i\in [m],\ Z_i\leq 1$}\\
				&\stackrel{}{\leq} \exp\inparen{-\frac{\delta^2 n}{3}}.\hspace{-1mm}\tag{$\sum_{i=1}^m x_i = n$}
			\end{align*}
			By a similar argument, we have
			\begin{align*}
				\Pr\insquare{\sum\nolimits_{i=1}^m x_i Z_i < \sum_{i=1}^m x_i q_{i\ell} - n\delta}\leq \exp\inparen{-\frac{\delta^2 n}{2}}.
			\end{align*}
			We get the required result by taking a union bound over all $\ell\in [p]$ and using Equation~\eqref{eq:counting_with_indicator_rv}.
		\end{proof}
		\subsubsection{Proof of Lemma~\ref{thm:optimal_solution_with_few_fractional}}
		\label{sec:proof:thm:optimal_solution_with_few_fractional}
		\begin{proof}
			Assume $p<m$, otherwise the result is trivial.
			Recall that \relaxation{} has $m$ variables, $p$ pairs of upper and lower bound constraints, one cardinality constraint ($\sum_{i=1}^{m}x_i=n$), and $2m$ bounding-box constraints of the form: $(x_i\geq 0)$ and $(x_i\leq 1)$ for each $i\in [m]$.
			If for some $\ell\in [p]$, $L_{\ell} = U_{\ell}$, replace the two inequality constraints by the equivalent equality constraint.

			For any linear program, there exists a basic feasible solution~\cite[Theorem 2.6]{papadimitriou1998combinatorial}.
			If $x$ is a basic feasible solution, it must satisfy $m$ linearly-independent  equalities.
			Notice from each of the at most $p$ pairs of upper and lower bound constraints\footnote{If the lower bound and the upper bound are the same, we discard one of the (repeated) inequalities.} at most one of the upper bound or the lower bound can be satisfied with equality $x$.
			Thus, including the cardinality constraint ($\sum_{i=1}^{m}x_i=n$) (but, excluding the bounding-box constraints, $x$ satisfies at most $(p+1)$ constraints with equality.
			Further, note that if any of bounding-box constraints are satisfied with equality, then the corresponding index in $x$ is integral.
			This suffices to show that $x$ has at most $(p+1)$ fractional entries.

			Assume that exactly $(p+1)$ non-bounding-box constraints are satisfied with equality.
			By the previous discussion, one of them is the cardinality constraint, and the others must correspond to distinct values of $\ell\in [p]$.
			However, for each $i\in[m]$ it holds that $\sum_{\ell\in[p]}q_{i\ell}=1$.
			Further, in this case, the upper bounds or lower bound corresponding to the $p$ tight constraints must sum to $n$ (otherwise, the $(p+1)$ non-bounding-box constraints cannot hold together).
			But, in this case, all the $p$ upper or lower bound constraints imply the cardinality constraint.
			Thus, at most $p$ of the $(p+1)$ non-bounding-box constraints are linearly independent.
			It follows that any basic feasible solution has at most $p$ fractional constraints.
		\end{proof}
		\noindent The next fact shows that  \cref{thm:optimal_solution_with_few_fractional} is tight.
		\begin{fact}\label{fact:instance_with_p_fractional_entries}
			There exists an instance of \denoised{} such that the {\em unique} optimal solution to \relaxation{} has $p$ fractional entries.
		\end{fact}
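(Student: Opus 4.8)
The plan is to prove this by an explicit construction: I would exhibit one instance of \denoised{} whose relaxation \relaxation{} has a feasible region that collapses to a single point, and arrange that this point is fully fractional. Since the only feasible point is automatically the unique optimum (regardless of the utilities), it witnesses the tightness of \cref{thm:optimal_solution_with_few_fractional}.

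Concretely, I would take $m=p$ items and choose the probability matrix $q=Q\in[0,1]^{p\times p}$ to be an invertible row-stochastic matrix with strictly positive column sums; a convenient choice is $Q=(1-\eps)I+\tfrac{\eps}{p}J$ for some $\eps\in(0,1)$, where $J$ is the all-ones matrix. This matrix is doubly stochastic, and its eigenvalues are $1$ (eigenvector $\mathbf{1}$) and $1-\eps$ (with multiplicity $p-1$), so it is invertible for $\eps\in(0,1)$. I would fix any cardinality $n$ with $0<n<p$, set $\delta=\tfrac{1}{2p}\in(0,1)$, take all lower bounds $L_\ell=0$ and all upper bounds $U_\ell=\tfrac{n}{p}-\delta n=\tfrac{n}{2p}\ge 0$, and let the utilities be arbitrary (say $w=\mathbf{1}\ge 0$). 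The intended optimum is the uniform point $x^\star=\tfrac{n}{p}\mathbf{1}\in(0,1)^p$.

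The verification rests on a slack-summation argument. For any feasible $x$, each upper fairness constraint reads $\sum_{i} q_{i\ell}x_i\le U_\ell+\delta n=\tfrac{n}{p}$. Summing these $p$ inequalities and using row-stochasticity $\sum_{\ell}q_{i\ell}=1$ together with the cardinality equality $\sum_i x_i=n$ gives $\sum_{\ell}\sum_{i}q_{i\ell}x_i=\sum_i x_i=n=\sum_{\ell}\tfrac{n}{p}$. Hence the nonnegative slacks of the $p$ upper constraints sum to zero, forcing every one of them to be tight, i.e.\ $Q^\top x=\tfrac{n}{p}\mathbf{1}$. Since $Q$ (and thus $Q^\top$) is invertible, this pins down $x=x^\star$, so $x^\star$ is the only feasible point and therefore the unique optimum; as $x^\star_i=\tfrac{n}{p}\in(0,1)$ for every $i$, it has exactly $p$ fractional entries. (The lower bounds $-\delta n\le \sum_i q_{i\ell}x_i$ are never binding since $q,x\ge 0$, so only the upper bounds matter.)

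The main thing to get right is the choice of $Q$, which must simultaneously be (i)~row-stochastic, so the rows $q_i$ are valid probability vectors in $\Delta^p$; (ii)~invertible, so that the $p$ tight fairness constraints are linearly independent and determine $x$ uniquely; and (iii)~of positive column sums, so that the induced upper bounds satisfy $U_\ell\ge 0$ as required by the model. The perturbed identity $Q=(1-\eps)I+\tfrac{\eps}{p}J$ meets all three. This construction also makes transparent why it saturates \cref{thm:optimal_solution_with_few_fractional}: here all $p$ non-box (fairness) constraints are tight and, by invertibility of $Q$, exactly $p$ of them are linearly independent, while the cardinality constraint is precisely the redundant $(p{+}1)$-st one implied by row-stochasticity.
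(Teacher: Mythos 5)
Your construction is correct, but it takes a genuinely different route from the paper's. The paper uses $m=p+1$ items with $n=p$: items $i\le p$ have $q_i=e_i$ and utility $1$, item $m$ has $q_m=\frac{1}{p}\mathbbm{1}_p$ and utility $2$, with $L_\ell=0$ and $U_\ell=1$. Since $\sum_\ell U_\ell=n$, the same slack-summation argument you use forces every upper bound to be tight at any feasible point, i.e.\ $x_\ell=1-\nfrac{x_m}{p}$; the objective then equals $p+x_m$, so optimality forces $x_m=1$, and the unique optimum is $x_\ell=1-\nfrac{1}{p}$ for $\ell\le p$ and $x_m=1$. So in the paper uniqueness is driven by the \emph{utilities} (the feasible region is a one-dimensional segment), whereas in your construction uniqueness is driven by \emph{feasibility alone}: the region is the single point $\frac{n}{p}\mathbbm{1}_p$, pinned down by invertibility of $Q^\top$, and the utilities are irrelevant. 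Two further differences are worth noting. First, you handle the $\pm\delta n$ slack correctly by folding it into $U_\ell$; the paper's write-up silently ignores it --- with $U_\ell=1$ and $\delta>0$ the effective upper bounds sum to $n+p\delta n>n$, tightness fails, and the optimum is no longer unique --- so the paper's instance really needs $\delta=0$ (or $U_\ell=1-\delta n$), and your version is cleaner on this point. Second, in your instance the integer program \denoised{} itself is infeasible (its only LP-feasible point is fractional), while in the paper's instance the integer program remains feasible (take $x_\ell=1$ for $\ell\le p$, $x_m=0$), so the fractional LP optimum there exhibits a genuine integrality gap; the Fact as stated does not require integer feasibility, so both arguments establish it, but the paper's example is the less degenerate witness. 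Finally, both constructions implicitly assume $p\ge 2$: for $p=1$ your requirement $0<n<p$ admits no integer $n$, and the paper's ``fractional'' value $1-\nfrac{1}{p}$ degenerates to $0$.
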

		\begin{proof}
			Fix $\eps>0$ and $s=1$.
			Let $p\coloneqq p_1$ and drop the superscripts on the variables.
			Let $n=p$, $m=p+1$, and for all $\ell\in[p]$, $L_\ell=0$, and $U_\ell=1$.
			Let $e_i$ be the $i$-th standard basis vector on $\R^p$.
			Define
			\begin{align*}
				w_i=\begin{cases}
				1&\text{if }i<m,\\
				2&\text{if }i=m
			\end{cases}\quad\text{and }
			q_i=\begin{cases}
			e_i&\text{if }i<m,\\
			\frac{1}{p}\mathbbm{1}_p&\text{if }i=m.
		\end{cases}
		\end{align*}
		Notice that $q_i\in \Delta^p$ for each $i\in [m]$.
		It is easy to see that any optimal solution $x$ has $x_m=1$.
		Further, $x$ must pick $n$ items ($\sum_{i=1}^{m}x_i=n$) and $\sum_{\ell\in [p]}U_\ell=n$.
		So, each upper bound constraint must be tight.
		This suffice to show the unique optimal is:
		\begin{align*}
			\xdn_i =\begin{cases}
			1-\frac{1}{p}&\text{if }1\leq i<m,\\
			1&\text{if }i=m.
		\end{cases}
		\end{align*}
		Note that $\xdn$ has $p$ fractional values.
		\end{proof}
		\subsubsection{Proof of Theorem~\ref{thm:algorithm_for_target_fair}}
		\begin{proof}[Proof of \cref{thm:algorithm_for_target_fair}]
			We claim that Algorithm~\ref{alg:algorithm_for_target_fair} satisfies the claims in the theorem.\\

			\paragraph{\bf Run time.}
			Step 1 finds a basic feasible solution for
			\relaxation{}.
			Since this is a linear program, one can do so in time polynomial in the bit-complexity of the input, say, using the ellipsoid method~(see, e.g.,\cite{grotschel2012geometric}).
			Step 2 is clearly linear time.
			Thus, the bound on the running time follows.\\

			\paragraph{\bf Correctness.}
			Since $x$ (in Step 1) is feasible for \relaxation{}, it satisfies
			\begin{align*}
				\text{For all $\ell\in [p]$,}\qquad & L_\ell - \delta n\leq \sum\nolimits_{i=1}^{m} q_{i\ell}x_{i} \leq U_{\ell}+\delta n,\\
				&\sum\nolimits_{i=1}^{m}x_i = n.
			\end{align*}
			Further, since $x$ is a basic feasible solution of \relaxation{}, $x$ has at most $p$ fractional values by \cref{thm:optimal_solution_with_few_fractional}.
			Thus, $x^\prime$ obtained by rounding up each nonzero coordinate to 1  (in Step 2), satisfies (Recall that $q_{i\ell}\in [0,1]$)
			\begin{align*}
				\text{For all $\ell\in [p]$,}\qquad & L_\ell - \delta n\leq \sum\nolimits_{i=1}^{m} q_{i\ell}x_{i}^\prime \leq U_{\ell}+\delta n+p,\yesnum\label{eq:eq1_inmainproof}\\
				&n \leq \sum\nolimits_{i=1}^{m}x_i^\prime \leq n+p.\yesnum\label{eq:eq2_inmainproof}
			\end{align*}
			\cref{thm:relation_between_target_and_noisy} says that with probability at least $1-2p\exp\inparen{\nfrac{-\delta^2 n}{3}}$ (over the noise in the protected attributes)
			\begin{align*}
				\text{For all $\ell\in [p]$,}\quad \abs{\sum\nolimits_{i\in G_\ell} x^\prime_i - \sum\nolimits_{i=1}^{m} q_{i\ell} x^\prime_i } \leq n\delta.
				\yesnum\label{eq:eq3_inmainproof}
			\end{align*}
			Now, combining Equation~\eqref{eq:eq1_inmainproof} and Equation~\eqref{eq:eq3_inmainproof}, we get that
			\begin{align*}
				\text{For all $\ell\in [p]$,}\qquad & L_\ell - 2\delta n\leq \sum\nolimits_{i=1}^{m} q_{i\ell}x_{i}^\prime \leq U_{\ell}+2\delta n+p \yesnum\label{eq:eq4_inmainproof}
			\end{align*}
			holds with probability at least $1-2p\exp\inparen{\nfrac{-\delta^2 n}{3}}$ (over the noise in the protected attributes).
			Equation~\eqref{eq:eq4_inmainproof} implies that $x^\prime$ violates Equation~\eqref{eq:target_fair:fairness_constraint} by at most $(p+2\delta n)$ (additive)
			and Equation~\eqref{eq:eq2_inmainproof} implies $x^\prime$ violates Equation~\eqref{eq:target_fair:cardinality_constraint} by at most $p$ (additive).
			Let $\cE_1$ be the event that Equation~\eqref{eq:eq4_inmainproof} and Equation~\eqref{eq:eq2_inmainproof} hold.
			From the above discussion, we conclude that $$\Pr[\cE_1]\geq 1-2p\exp\inparen{\nfrac{-\delta^2 n}{3}}.$$

			\noindent Let \xtr{} be an optimal solution of \target{}. It remains to show that $x^\prime$ has a value higher than \xtr{}.
			Let $\cE_2$ be the event that \xtr{} is feasible for \target{}.
			Then, conditioning on $\cE_2$ we that
			\begin{align*}
				\sum\nolimits_{i=1}^{m} w_i \xtr_i &\leq \sum\nolimits_{i=1}^{m} w_i x_i  \tag{$x$ is optimal for \target{}}\\
				&\leq \sum\nolimits_{i=1}^{m} w_i x^{\prime}_i. \tag{$x^\prime\geq x$ and $w\geq 0$}
			\end{align*}
			Thus, conditioning on $\cE_2$ gives us that $x^\prime$ has a value higher than \xtr{}.
			From \cref{thm:relation_between_target_and_noisy} we get that $$\Pr[\cE_2]\geq 1-2p\exp\inparen{\nfrac{-\delta^2 n}{3}}.$$

			\noindent Taking the union bound over $\cE_1$ and $\cE_2$ we get the required result.
		\end{proof}

		\subsection{Hardness results}\label{sec:proof:hardness_results_main}
		In this section, we present the formal statements of our hardness results for \denoised{} and their proofs.

		\subsubsection{Proof of Theorem~\ref{thm:hardness_results_main} (1)}\label{sec:proof:thm:nphard_feasibility_ccsp}
		\begin{theorem}\label{thm:feasibility_is_hard}
			For all $p\geq 2$, deciding if \denoised{} is feasible is \np-hard.
		\end{theorem}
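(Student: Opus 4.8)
The plan is to reduce from \textsc{Subset Sum}, which is \np-complete: given positive integers $a_1,\dots,a_M$ and a target $T\in\Z_{\geq 0}$ with $T\leq \sum_i a_i$, decide whether some $S\subseteq[M]$ satisfies $\sum_{i\in S}a_i=T$. I would first prove the claim for $p=2$ and then obtain every $p\geq 2$ by padding: add $p-2$ dummy attribute values carrying no probability mass (so $q_{i\ell}=0$ for $\ell\geq 3$, which keeps each $q_i\in\Delta^p$), and set $L_\ell=U_\ell=0$ for $\ell\geq 3$, making those constraints $-\delta n\leq 0\leq \delta n$ hold vacuously. The feasibility question is then unchanged, so it suffices to handle $p=2$.

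For $p=2$ I would build the following instance of \denoised{}. Let $N\coloneqq 1+\sum_{i}a_i$. Create $M$ \emph{value items} with $q_{i1}\coloneqq a_i/N$ and $q_{i2}\coloneqq 1-a_i/N$, and $M$ \emph{filler items} with $q_{i1}\coloneqq 0$ and $q_{i2}\coloneqq 1$; thus $m=2M$ and each $q_i\in\Delta^2$. Set the cardinality to $n\coloneqq M$, take the second coordinate's bounds loose, $L_2\coloneqq 0$ and $U_2\coloneqq n$ (these always hold since $\sum_i q_{i2}x_i\in[0,n]$), and set the first coordinate's bounds tight at the scaled target, $L_1\coloneqq U_1\coloneqq T/N$. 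Finally choose $\delta\coloneqq \nfrac{1}{2MN}\in(0,1)$. All of these quantities have bit-complexity polynomial in the input, so the reduction runs in polynomial time.

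Correctness hinges on two observations. First, because fillers contribute $0$ to the first coordinate and exactly $M$ items must be selected, every subset $S\subseteq[M]$ of value items is realizable: pick $S$ together with $M-|S|\leq M$ fillers to reach $n=M$ items, so the set of attainable values of $\sum_i q_{i1}x_i$ is exactly $\inbrace{\sum_{i\in S}a_i/N : S\subseteq[M]}$. This filler gadget is how the cardinality constraint $\sum_i x_i=n$---which \textsc{Subset Sum} lacks---is neutralized, and I expect getting this gadget exactly right to be the main obstacle. Second, the only effective first-coordinate constraint is $\abs{\sum_i q_{i1}x_i - T/N}\leq \delta n$; since $\delta n=\nfrac{1}{2N}<\nfrac{1}{N}$ while distinct integer subset sums differ by at least $1$ (hence their scaled versions differ by at least $\nfrac{1}{N}$), this constraint is satisfiable if and only if $\sum_{i\in S}a_i=T$ exactly, so the $\delta$-slack can never create a false positive. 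Combining the two, the constructed \denoised{} instance is feasible if and only if the \textsc{Subset Sum} instance is a yes-instance, which establishes that deciding feasibility of \denoised{} is \np-hard for every $p\geq 2$.
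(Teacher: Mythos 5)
Your proof is correct, and it shares the paper's basic blueprint --- a reduction from \textsc{Subset Sum} in which the integers $a_i$ are scaled into the first coordinate of the probability vectors and filler items (with $q_{i1}=0$, $q_{i2}=1$) neutralize the cardinality constraint $\sum_i x_i = n$. The difference lies in how exactness of the subset sum is enforced. The paper scales by the target $t$ itself, uses \emph{only upper bounds} $U_1=1$, $U_2=n-1$, and exploits the simplex structure: since each $q_i$ sums to $1$, any selection of $n$ items has $\sum_i x_i(q_{i1}+q_{i2})=n$ exactly, so upper bounds summing to $n$ force both constraints to be tight, which recovers $\sum_{a\in S}a = t$. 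You instead impose a two-sided constraint $L_1=U_1=T/N$ and kill the $\delta n$ slack of \denoised{} with an integrality-spacing argument: distinct scaled subset sums differ by at least $\nfrac{1}{N}$ while the slack is only $\nfrac{1}{2N}$, so no false positives arise. Your route is arguably more robust on one point the paper glosses over: the constraints of \denoised{} carry the additive slack $\pm\delta n$, and the paper's tightness argument as written only goes through if that slack is absorbed into the stated bounds (effectively taking $\delta=0$), whereas you choose $\delta=\nfrac{1}{2MN}$ explicitly and prove the slack is harmless. You also spell out the padding to $p>2$ via zero-mass dummy values with $L_\ell=U_\ell=0$, which the paper leaves implicit. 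The paper's trick buys a cleaner instance (no lower bounds, no dependence of $\delta$ on the data); yours buys an argument that is verifiably correct with the slack present, at the cost of needing $\delta$ to be part of the input of the reduction.
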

		\begin{proof}
			We present a reduction from the subset-sum problem which is known to be \np-hard~\cite{cormen2009introduction}.\\

			\begin{mdframed}[style=FrameBox2]
				\noindent {\em Subset sum problem.}
				Given a set of $m$ integers $A=\{a_1,\dots,a_n\} \in \Z_{\geq 0}^n$ in binary, and a target sum $t\in \Z_{\geq 0}$.
				The subset-sum problem is: Is there a subset of $A$ such that the items of the subset sum to $t$?
		\end{mdframed}\white{.}\\[-5.0mm]

			\noindent Without loss of generality assume that $a_i\leq t$ for all $i\in [n]$ and $t>0$.
			If not, we can remove items which are larger than $t$. Since $a_i\geq 0$, we can solve the subset-problem for $t=0$ in linear-time.

			\paragraph{Reduction.}
			Given an instance of the subset sum problem corresponding instance \target{} as follows:
			Let the number of items to select be $n$ and let there be $m \coloneqq 2n$ items.
			Consider one attribute (i.e., $s\coloneqq 1$) with two disjoint values $p\coloneqq 2$.
			For each item $i\in [m]$, let %
			\begin{align*}
				q_{i} = \begin{cases}
				\insquare{\frac{a_i}{t}, 1-\frac{a_i}{t}} & \text{if } i\in [n], \\
				\insquare{0, 1} & \text{if } i\in [n+1,m].
			\end{cases}
		\end{align*}
		Define the constraints to be $U_{1}=1$ and $U_{2}=n-1$.
		In this construction, for each $i\in[n]$, the integer $a_i\in A$ corresponds to the $i$-th item.
		Notice that the input to \target{} is polynomial-sized in the input of the subset-sum problem.
		Further, the values $q_i$, and constraints $U$ can be calculated in polynomial time.

		It remains to show that $A$ has a subset $S\subseteq A$ which sums to $U$ if and only if \target{} is feasible.\\

		\noindent {\em ($\implies$).}
		If $A$ has a subset $S\subseteq A$ such that $\sum_{a\in S} a = t$,
		define the solution $x$ of \target{} by choosing $|S|\leq n$ items corresponding to integers in $S$, and any $n-|S|\leq n$ items from $[m]\backslash [n]$.
		It holds
		\begin{align*}
			\hspace{-10mm} &\sum_{i\in [m]}x_{i}q_{i1}
			= \inparen{\sum_{i=1}^{n}x_{i}q_{i1} +\sum_{i\in [m]\backslash[n]}x_{i}q_{i1}}
			= \inparen{\sum_{a\in S}\frac{a}{t}+ \sum_{i=1}^{n-|S|}0}
			= 1 \
			\leq\  U_{1},
		\end{align*}
		\begin{align*}
			\sum_{i\in [m]}x_{i}q_{i2} &=
			\inparen{\sum_{i=1}^{n}x_{i}q_{i2} +\sum_{i\in [m]\backslash[n]}x_{i}q_{i2}}
			= \inparen{\sum_{a\in S}\bigg(1-\frac{a}{t}\bigg)+ \sum_{i=1}^{n-|S|}1}
			= n-1
			\ \leq\ U_{2}.
		\end{align*}
		Thus, \target{} is feasible if the subset-sum instance is a `YES' instance.\\

		\noindent{\em ($\Longleftarrow$).}
		For the opposite direction, notice that for any subset $x$, $\sum_{i\in [m]}x_{i}(q_{i1}+q_{i2})=n$, i.e., the sum of $q_{i\ell}$ for all items and both properties is fixed.
		Since $U_{1}+U_{2}=n$, the constraints of both properties must hold with equality:
		\begin{align}
			\sum_{i\in [m]}x_{i}q_{i1}= 1 \text{ and } \sum_{i\in [m]}x_{i}q_{i2}= n-1.\label{eq:111}
		\end{align}
		Define a solution $S\subseteq A$ which has items of $A$ corresponding to items $i\in [n]$ present in $x$.
		Since $q_{i1}=0$ for any $i\in [m]\backslash[n]$, it follows from \eqref{eq:111} that $\sum_{a\in S}\nfrac{a}{t}=1$ or $\sum_{a\in S}a=t$.
		Thus, the subset sum instance is a `YES' instance if \target{} is feasible.

		Combining both cases gives us that the \target{} instance is feasible iff the subset sum instance is a `YES' instance.
		Thus, we can use an algorithm to check the feasibility of \target{} to solve an arbitrary subset sum problem.
		\end{proof}
		\subsubsection{Proof of Theorem~\ref{thm:hardness_results_main} (2)}\label{sec:proof:thm:nphard_approx_ccsp}
		\begin{theorem}\label{thm:problem_is_apx_hard}
			For all $p\geq 3$, \denoised{} is \apx-hard.
		\end{theorem}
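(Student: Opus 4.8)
The plan is to give a gap-preserving reduction from a known \apx-hard maximization problem $\Pi$ (for concreteness one may take a bounded-occurrence constraint satisfaction problem, such as \textsc{Max-3Sat} with a bounded number of occurrences per variable, which is \apx-hard), fixing $p=3$; the extension to every $p>3$ then follows by embedding the instance on the face $\{q_{i\ell}=0 : \ell>3\}$ of $\Delta^p$ and setting $L_\ell=U_\ell=0$ on the added coordinates. The guiding principle is that the two-sided fairness constraints, together with the cardinality constraint $\sum_i x_i = n$ and the simplex identity $\sum_\ell q_{i\ell}=1$, behave—when $U_1+U_2+U_3=n$ and $\delta$ is negligible—exactly like the single tight equality exploited in the feasibility reduction of \cref{thm:feasibility_is_hard}: any feasible selection must satisfy all three group constraints with (near) equality. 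I would use this rigidity to force global consistency of the encoded object.

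Concretely, I would first choose $\delta$ polynomially small—small enough that the slack $\delta n$ lies below the granularity of the numeric encoding and hence cannot absorb any violation—so that \denoised{} reduces, for the purposes of the argument, to its exact ($\delta=0$) form. Next, following the base-$B$ encoding idea behind \cref{thm:feasibility_is_hard}, I would place in coordinates $1$ and $2$ of each $q_i$ a numeric gadget (built from suitably scaled digits of a large base) so that the tight equalities force the selected items to encode a \emph{consistent} object of $\Pi$ (e.g.\ a well-formed assignment choosing exactly one literal per variable), while the third coordinate supplies the extra degree of freedom—absent in the $p=2$ case, which yielded only feasibility hardness—that lets the selection express the \emph{quality} of that object. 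The utilities $w_i$ would then be set so that $\sum_i w_i x_i$ equals an affine, monotone function of the value of $\Pi$ on the encoded object, whence the optimum of \denoised{} tracks the optimum of $\Pi$ up to a fixed scaling.

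The two directions then mirror the proof of \cref{thm:feasibility_is_hard}: a high-value solution of $\Pi$ yields a feasible selection of correspondingly high utility, and conversely every feasible selection decodes to a consistent object of $\Pi$ whose value matches the attained utility. Consequently the constant multiplicative gap furnished by the \apx-hardness of $\Pi$ is transported to a constant gap in the optimal utility of \denoised{}, so a constant-factor approximation to \denoised{} would decide $\Pi$'s gap problem in polynomial time, giving \apx-hardness. I expect the main obstacle to be the design of the numeric gadget in the first two coordinates: it must make \emph{only} consistent objects feasible, with no ``carry'' or wrap-around that would let an inconsistent selection masquerade as feasible, and it must keep every $q_{i\ell}\in[0,1]$ and the overall bit-complexity polynomial, all while the third coordinate and the weights faithfully preserve the constant gap of $\Pi$ rather than smearing it out. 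The delicate quantitative point is verifying that the residual slack $\delta n$ genuinely cannot be used to cheat, i.e.\ that the granularity argument separating feasible from infeasible selections is tight.
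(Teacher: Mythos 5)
Your proposal takes a genuinely different route from the paper, and it has a genuine gap. The paper's proof is a short, direct reduction from the $d$-dimensional knapsack problem (\dkp{}): it sets $p=d+1$, normalizes the $d$ packing constraints by $\sum_j c_j$ so that they become upper-bound fairness constraints whose rows fit in the simplex, uses the $p$-th coordinate purely to absorb the leftover probability mass (it is effectively unconstrained, $U_p=m$), and adds $k$ zero-utility dummy items concentrated on that coordinate so the cardinality constraint can always be met; the objective value is preserved exactly, so the reduction is approximation-preserving with no tightness argument, no digit encoding, and no role for $\delta$. Your plan instead starts from bounded-occurrence Max-3SAT and re-uses the rigidity trick of \cref{thm:feasibility_is_hard}. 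The difficulty is that in such a reduction the numeric gadget \emph{is} the proof, and you have not constructed it---you explicitly defer it as ``the main obstacle.'' Moreover, the mechanism you describe would fail as stated: the number of satisfied clauses is a coverage (OR) function of the selected literals, not a modular one, so no assignment of utilities to literal items can make $\sum_i w_i x_i$ an affine monotone function of the Max-3SAT value (already for a single clause $x\vee y$, evaluating the four consistent assignments forces the scale factor to be $0$). You would need auxiliary clause-indicator and slack items; but the only relations your rigid system can enforce are digit-wise equalities of the form ``nonnegatively weighted count of selected items in a class $=$ constant,'' and such an equality couples a reward item to the literals of its clause \emph{anti}-monotonically---the more literals of the clause are selected, the less room remains in that digit for the reward item---which is the opposite of what clause satisfaction requires. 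This is fixable (for instance, key each clause digit to the \emph{complement} literals, so that ``at most two complements selected'' frees the reward item exactly when the clause is satisfied, plus padding items to keep the cardinality constraint satisfiable), but none of these ideas appears in your proposal.

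Two further concrete problems. First, your two uses of the third coordinate contradict each other: once $U_1+U_2+U_3=n$ forces all three group constraints tight (the rigidity you invoke), the third coordinate carries no ``degree of freedom'' for expressing solution quality; its correct role---both in a completed version of your reduction and in the paper's---is to absorb the simplex-normalization mass so the digit structure of the other coordinates stays carry-free. Second, a \emph{polynomially} small $\delta$ does not suffice: with $V$ variables and $C$ clauses you need $\Omega(V+C)$ digit positions packed into entries $q_{i\ell}\in[0,1]$ with geometrically separated scales, so the finest digit has scale $B^{-\Omega(V+C)}$ and $\delta$ must be exponentially small (this is still representable in polynomially many bits, but the ``granularity'' argument has to be carried out at that scale). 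One comparative remark in your favor: starting from a bona fide \apx-hard bounded-occurrence CSP is in one respect sounder than the paper's own starting point, since constant-dimension $0$--$1$ knapsack admits a PTAS (Frieze--Clarke), which makes the claimed \apx-hardness of \dkp{} for fixed $d\geq 2$---and hence the paper's conclusion for fixed $p\geq 3$---delicate; but this observation does not close the hole in your argument, which stops exactly where the proof would have to begin.
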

		\begin{proof}
			We present a reduction from the $d$-dimensional knapsack problem which is known to be \apx-hard for $d\geq 2$~\cite{kellerer2004knapsack}.\\

			\begin{mdframed}[style=FrameBox2]
				\noindent{\em $d$-dimensional knapsack problem (\dkp{}).}
				Given a vectors $v,c\in \R_{> 0}^k$, a matrix $w\in \R_{\geq 0}^{k\times d}$ the $d$-dimensional knapsack (\dkp{}) is the problem to solve the following integer linear program.
				\begin{align}\label{eq:ILP-d-kp}
					\max_{x\in \zo^k}\quad &\sum_{i\in [k]}v_{i}x_{i}\\
					\st\quad \forall\ \ell \in [d],\quad &\sum_{i\in [k]} w_{i\ell} x_{i} \leq c_{\ell}, \tagnum{Capacity}\\
					&x \in  \{0,1\}^k. \tagnum{Integrality}
				\end{align}
		\end{mdframed}\white{.}\\[-5.0mm]

			\noindent Without loss of generality we assume $w_{i\ell}\leq c_\ell$ for all $i\in [k]$ and $\ell\in[d]$.
			If not, we can remove such items in linear time.

			\paragraph{Reduction.}
			Given an instance of \dkp{} define an instance of \target{} with $s\coloneqq 1$ as follows:
			\begin{enumerate}[leftmargin=14pt] %
				\item Set $p\coloneqq d+1$, $n\coloneqq k$, $m\coloneqq 2k$, and $U_{p}\coloneqq m$. For all $\ell\in [p-1]$ set
				\begin{align*}
					U_\ell &\coloneqq \frac{c_\ell}{\sum_{k\in[d]} c_k}.
				\end{align*}
				\item For each item $i\in[k]$ set $w_i\coloneqq v_i$, and for all $\ell\in [p-1]$ set
				$$q_{i\ell}\coloneqq \frac{w_{i\ell}}{\sum_{j\in[d]}c_j}.$$
				\noindent Finally set, $q_{ip}=1-\sum_{i\in[p-1]}q_{i\ell}$.
				\item Add $k$ additional {\em dummy} items $(k+1,k+2,\dots, 2k)$. For all $i\in [m]\backslash[n]$, set $q_{ip}=1$ and $w_i\coloneqq 0$.
				Also, for all $i\in [m]\backslash[n]$ and $\ell\in [p-1]$ set $q_{i\ell}=0$.
			\end{enumerate}

			\smallskip

			\noindent Recall that $\Delta^p$ is the $p$-dimensional simplex.
			Note that for all $i\in [m]$, $q_{i}\in \Delta^p$.
			Now, it is easy to see that the construction is a valid instance of the \target{}.

			Let $x^\star\in \zo^k$ be a solution \dkp{}, such that, $\sum_{i=1}^{m}x^\star=r\leq l$.
			Define a solution $y^\star\in \zo^m$ to \target{} by selecting any $(k-r)$ dummy items.
			Alternatively, given a solution given a $y^\star\in \zo^m$ to \target{} define a solution $x^\star\in \zo^k$ to \dkp{}, by omitting the dummy items.\\

			\noindent {\em ($\iff$).}
			Consider a solution $x^\star$ to \dkp{}.
			Then for all $\ell\in [d]$, $x^\star$ satisfies
			\begin{align*}
				\sum_{i\in [k]} w_{i\ell} x_{i}^\star \leq c_{\ell}
				&\iff \sum_{i\in [k]} \inparen{\frac{w_{i\ell}}{\sum_{j\in[d]}c_j}} x_{i}^\star \leq \inparen{\frac{c_{\ell}}{\sum_{j\in[d]}c_j}}\\
				&\iff \sum_{i\in [k]} q_{i\ell} x_{i}^\star \leq U_{\ell}\\
				&\iff \sum_{i\in [m]} q_{i\ell} y_{i}^\star \leq U_{\ell}.  \tag{Using $q_{i\ell}=0$ for $i\leq [m]\backslash [n], \ell\in [p-1]$}
			\end{align*}
			Further, by construction $\sum_{i=1}^{m} y_{i}^\star=n$.
			Thus $x^\star$ is feasible for \dkp{} iff $y^\star$ is feasible for the corresponding instance of \target{}.
			Finally, it holds
			\begin{align*}
				\sum_{i\in [k]}v_{i}x_{i}^\star &= \sum_{i\in [k]}w_{i}x_{i}^\star
				=\sum_{i\in [m]} w_{i}y_{i}^\star. \tag{Using $w_{i}=0$ for $i\in [m]\backslash [n]$}
			\end{align*}
			This shows that the reduction is approximation preserving.
			We conclude that \target{} is \apx-hard for all $p=(d+1)\geq 3$.

		\end{proof}
		\subsubsection{Proof of Theorem~\ref{thm:hardness_results_main} (3)}\label{sec:proof:thm:hardness_for_p_poly_m}
		\begin{theorem}\label{thm:additive_hardness_result}
			For every constant $c > 0$, the following {\em violation gap} variant of \denoised{} is \np-hard.
			\begin{itemize}[]
				\item Output YES if the input instance is satisfiable.
				\item Output NO if there is no solution which violates every upper bound constraint at most an additive factor of $c$.
			\end{itemize}
		\end{theorem}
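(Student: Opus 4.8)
The plan is to exhibit a gap-preserving reduction that sends YES instances of a known \np-hard \emph{gap} problem to feasible instances of \denoised{} (so that every upper bound can be met exactly, with violation $0$), and NO instances to instances in which \emph{every} integral selection overflows at least one upper bound by strictly more than $c$. Since deciding feasibility of \denoised{} is already \np-hard by \cref{thm:feasibility_is_hard}, the genuinely new ingredient is amplifying the violation gap from the unit-sized gap implicit in that reduction up to an arbitrary constant $c$, and this is precisely where the hypotheses $s>1$ and $p=\poly(m)$ get used.

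To have a gap to amplify in the first place, I would start not from a plain \np-hard language but from a gap version, e.g.\ gap-Max-3SAT: by the PCP theorem it is \np-hard to distinguish a satisfiable 3CNF formula from one in which every assignment leaves at least a $\gamma$-fraction of its $K$ clauses unsatisfied, where $\gamma>0$ is a universal constant. Taking $t$ disjoint copies of a formula preserves both satisfiability and the fraction $\gamma$ while multiplying $K$ by $t$, so I may assume $K$ is as large as I like; I will choose $K$ large enough that $\gamma K \beta > c$, where $\beta>0$ is the per-unsatisfied-clause penalty produced by the gadget below.

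The encoding uses two roles for the attributes, which is the reason $s>1$ is needed. One attribute carries $\poly(m)$ ``decoding'' values that force any selection to spell out a genuine assignment: for each variable a value with upper bound $1$ together with a true-item and a false-item (each a standard-basis $q$-vector on that value), plus auxiliary values recording which literal is used to satisfy each clause. A second attribute carries a single \emph{penalty value} $\star$, and the clause gadget is arranged so that an unsatisfied clause is forced to place $\beta$ units of mass on $\star$, whereas a satisfied clause can route its mass elsewhere. Setting $U_\star$ equal to the mass used in the all-clauses-satisfied case, completeness is immediate (a satisfying assignment meets every bound exactly); and in the soundness case any selection decodes to an assignment leaving $\ge \gamma K$ clauses unsatisfied, so $\star$ receives at least $\gamma K \beta > c$ excess mass and its upper bound is overflowed by more than $c$. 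Throughout I must respect $\sum_{\ell} q_{i\ell}^{\sexp{k}}=1$ from \cref{def:individual_flipping_noise}, which I arrange by padding each item's $q$-vector with mass on dummy values having large budgets, exactly as in \cref{thm:feasibility_is_hard,thm:problem_is_apx_hard}.

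The step I expect to be the main obstacle is this amplification/concentration itself. The simplex normalization caps the contribution of any single item to any single value at $1$, so a large $\ell_\infty$ violation can never be manufactured from one item; it must be \emph{forced structurally} to pile up on one upper bound. Designing the clause gadget so that (i) unsatisfied clauses provably concentrate their penalties on the shared value $\star$ rather than spreading thinly across many constraints, (ii) the cardinality constraint $\sum_i x_i=n$ cannot be exploited to dodge the penalty without decoding to a worse assignment, and (iii) completeness remains \emph{exact} (violation $0$, not merely $\le c$) is the delicate part; the role of $s>1$ and of the $\poly(m)$ auxiliary values is exactly to separate the ``decoding'' constraints from the single ``accumulator'' constraint onto which the violation is driven.
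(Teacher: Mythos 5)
There is a genuine gap, and it sits exactly where you predicted: the ``amplification/concentration'' step, which your sketch does not resolve and which, in the form you propose, provably fails. The decoding gadget (per-variable value with upper bound $1$, a true-item and a false-item) is cheatable for free: a solution may select \emph{both} literal-items of \emph{every} variable, violating each per-variable constraint by exactly $1\leq c$, so no single constraint is violated by more than $1$. Such a ``double assignment'' satisfies every clause, so no penalty mass is ever forced onto $\star$, and a NO instance of gap-Max-3SAT still admits a selection whose violations are all at most $c$ --- destroying soundness. The deeper obstruction is linearity: in \denoised{} the load on any value is $\sum_i q_{i\ell}^{\sexp{k}}x_i$, a linear function of $x$, so you cannot make penalty mass appear \emph{conditionally} on a conjunction of selections (e.g., ``this clause-item was selected \emph{and} the corresponding literal-item was not''). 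Consistency between clause gadgets and variable gadgets must therefore itself be enforced by linear constraints, each of which again carries $+c$ slack, and the adversary can spread its cheating so that no single constraint ever absorbs more than constant violation. Your conditions (i)--(iii) are not a delicate finishing touch; they are the entire theorem, and the natural instantiations all fall to this thin-spreading attack.

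The paper's proof sidesteps concentration entirely rather than fighting it. It reduces from the $|V|^{1-\eps}$-inapproximability of independent set \cite{hastad1996clique, zuckerman2006linear}: given $G(V,E)$, take $m=|V|$ items, and for each clique $C$ of size $c+2$ introduce a two-valued attribute with $q_{v,1}^{\sexp{C}}=1$ for $v\in C$ and upper bound $U_1^{\sexp{C}}=1$. Completeness is immediate (an independent set of size $n$ meets every bound). For soundness, a solution violating every upper bound by at most $c$ selects at most $c+1$ vertices from each such clique, i.e., the selected $n$ vertices induce a $K_{c+2}$-free subgraph; by the Ramsey/Tur\'an-type lemma of \cite[Lemma 4.3]{chekuri2004multidimensional} this subgraph contains an independent set of size $n^{\nfrac{1}{(c+1)}}$. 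Thus a polynomial-time algorithm for the violation-gap problem would approximate independent set within a factor $|V|^{1-\nfrac{1}{(c+1)}}$, which is \np-hard. Note the structural difference from your plan: the paper never forces violations to pile onto one accumulator constraint; instead it accepts that the adversary may use its $+c$ slack on \emph{every} constraint and shows that even then the selection retains a combinatorial property ($K_{c+2}$-freeness) whose consequence (a polynomially smaller independent set) is still enough, because the source problem is inapproximable to within \emph{every} polynomial factor $|V|^{1-\eps}$. That absorption of a polynomial loss is what a constant-gap source like gap-Max-3SAT cannot provide, which is another reason your starting point is the wrong one for this statement.
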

		\noindent Our reduction below is similar to the one used in \cite[Theorem 10.4]{celis2018ranking}.
		\begin{proof}
			We use the inapproximability of the independent set \cite{hastad1996clique, zuckerman2006linear} to prove the theorem.
			The inapproximability result states that:
			Given a graph $G(V,E)$ it is \np-hard to approximate the size of the maximum independent set to within a multiplicative factor of $|V|^{1-\eps}$ for every constant $\eps>0$.

			Fix any constant $c>0$.
			Then our reduction is as follows:

			\noindent {\em Reduction.} Consider an instance of the independent set problem, i.e., given a graph $G(V,E)$ and a number $n\in \N$ to check whether $G$ has an independent set of size at least $n$.
			Construct an instance of \denoised{} with $m=|V|$ candidates and the same $n$.
			For each clique $C$ in $G$ of size at least $(c+2)$  add a protected attribute $C$ which takes two values, and set an upper bound on the number of items with value 1 for attribute $C$: $U_{1}^{\sexp{k}}=1$.
			Formally, for each attribute $C$, set $L_{1}^{\sexp{C}}=0$ and $U_{1}^{\sexp{C}}=1$ and  $L_{0}^{\sexp{C}}=0$ and $U_{0}^{\sexp{C}}=n$.
			Further, for each vertex $v\in C$, set $q_{v, 1}^{\sexp{C}} = 1$ and $q_{v,0}^{\sexp{C}} = 0$.
			Note that there are at most $\binom{m}{c+2} = m^{c+2} = \poly(m)$ protected attributes.

			We claim the following:
			\begin{itemize}
				\item If the \denoised{} is not feasible then $G$ does not have an independent set of size $n$.
				\item If \denoised{} has a solution which violates the upper bound constraints by at most $c$ (additive), then $G$ has an independent set of size at least $n^{\nfrac{1}{(c+1)}}$.
			\end{itemize}
			These claims prove the theorem: if there was an algorithm $\mathcal{A}$ to solve \denoised{} in polynomial time, then we can use it to approximate independent set within $|V|^{1-\nfrac{1}{(c+1)}}$ factor in polynomial time,\footnote{The approximation algorithm returns the same answer as the $\mathcal{A}$.} which is \np-hard.

			It remains to establish the claim.
			Notice that if $G$ has an independent set $S$ of size $n$, then this gives a feasible solution to \denoised{} by picking the items corresponding to elements in $S$.
			This establishes the first claim.

			Given a subset $S$ which violates the constraints of \denoised{} by at most $c$ (additive), consider the subgraph $G_S$ of $G$ induced by $S$.
			$G_S$ does not contain any $(c+2)$-cliques.
			By \cite[Lemma 4.3]{chekuri2004multidimensional}, $G_S$ has an independent set of size at least $n^{\nfrac{1}{(c+1)}}$, so also $G$.
			This establishes the second claim.
		\end{proof}

	\section{Theoretical results with multiple protected attributes}\label{sec:extended_theoretical_results}
		In this section, we extend our theoretical results to multiple nonbinary (and intersectional) protected attributes (i.e., $s\geq 1$ and $p \geq 1$).
		Like \cref{sec:theoretical_results}, our main algorithmic result is an approximation algorithm for \target{}.
		\begin{theorem}[{\bf An approximation algorithm for \target{} when $s\geq 1$}]\label{thm:algorithm_for_target_fair_gen}
			There is an algorithm (Algorithm~\ref{alg:algorithm_for_target_fair_gen}) that given an instance of \target{} and $q$ from \cref{def:individual_flipping_noise},
			outputs a selection $x\in \zo^m$,
			such that, with probability at least $1-4ps\exp\inparen{\nfrac{-\delta^2 n}{3}}$ over the noise in the protected attributes of each item, the selection $x$
			\begin{enumerate}[leftmargin=15pt, label=\arabic*.]
				\item has a value at least as \ul{high} as the optimal value of \target{},
				\item violates the cardinality constraint~\eqref{eq:target_fair:cardinality_constraint} (additively) by at most
				$$1+\sum\nolimits_{k=1}^s (p_k-1),$$%
				\item and violates the fairness constraints~\eqref{eq:target_fair:fairness_constraint} (additively) by at most
				$$1+2\delta n+\sum\nolimits_{k=1}^s (p_k-1).$$
			\end{enumerate}
	    The algorithm runs in polynomial time in the bit complexity of the input.
		\end{theorem}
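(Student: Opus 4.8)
The plan is to mirror the three-part structure used for the single-attribute case (\cref{thm:algorithm_for_target_fair}), generalizing each ingredient to $s$ attributes, and to run the natural multi-attribute analogue of Algorithm~\ref{alg:algorithm_for_target_fair}: solve the linear-programming relaxation of \denoised{}, compute a basic feasible solution (BFS), and round every nonzero coordinate up to $1$. The value and feasibility bookkeeping are essentially unchanged; the real work is in adapting the concentration lemma and, more importantly, the count of fractional entries.

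First I would establish the multi-attribute concentration bound, i.e.\ the analogue of \cref{thm:relation_between_target_and_noisy}. For each attribute--value pair $(k,\ell)$ with $k\in[s]$ and $\ell\in[p_k]$, set $Z_i^{\sexp{k}}\coloneqq\mathbbm{I}[i\in G_\ell^{\sexp{k}}]$, so that $\Ex[Z_i^{\sexp{k}}]=q_{i\ell}^{\sexp{k}}$ and $\sum_{i\in G_\ell^{\sexp{k}}}x_i=\sum_i x_i Z_i^{\sexp{k}}$. The same two-sided additive Chernoff argument as in \cref{sec:proof:thm:relation_between_target_and_noisy} bounds $\abs{\sum_{i\in G_\ell^{\sexp{k}}}x_i-\sum_i q_{i\ell}^{\sexp{k}}x_i}\leq n\delta$ with failure probability at most $2\exp(-\delta^2 n/3)$ for each fixed $(k,\ell)$. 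Since \cref{def:individual_flipping_noise} makes the events $(i\in G_\ell^{\sexp{k}})$ independent across distinct attributes as well as across items, a union bound over all $\sum_k p_k\leq ps$ pairs (taking $p\coloneqq\max_k p_k$) yields failure probability of order $ps\cdot\exp(-\delta^2 n/3)$.

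The key new step, and the \emph{main obstacle}, is bounding the number of fractional entries of a BFS of the multi-attribute relaxation; a naive count gives $\sum_k p_k$, which is far too large. I would argue via a rank bound on the tight non-box constraints. A BFS is pinned by $m$ linearly independent tight constraints, and every tight box constraint forces its coordinate to be integral, so the number of fractional entries is at most the rank of the simultaneously-tight fairness and cardinality rows. The crucial observation is that for every attribute $k$ the identity $\sum_{\ell\in[p_k]}q_{i\ell}^{\sexp{k}}=1$ (from \cref{def:individual_flipping_noise}) means that, viewed as coefficient vectors, the $p_k$ value-constraint rows of attribute $k$ sum exactly to the all-ones row of the cardinality constraint. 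This supplies $s$ linearly independent relations (one per attribute, each with coefficient $-1$ on the shared cardinality row and $+1$ on a disjoint block of value rows), so among the $1+\sum_k p_k$ candidate rows the rank is at most $(1+\sum_k p_k)-s=1+\sum_{k=1}^{s}(p_k-1)$. Intuitively, the single cardinality direction is shared by all $s$ attributes, and attribute $k$ contributes only $p_k-1$ new dimensions beyond it. This specializes to $p$ when $s=1$, recovering \cref{thm:optimal_solution_with_few_fractional}.

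With these two ingredients the theorem follows as before. Let $f\coloneqq 1+\sum_{k=1}^{s}(p_k-1)$. Rounding the BFS up increases the left-hand side of each constraint by at most $f$ (the number of fractional entries, using $q_{i\ell}^{\sexp{k}}\in[0,1]$), so the rounded $x'$ overshoots the cardinality constraint~\eqref{eq:target_fair:cardinality_constraint} by at most $f$ and overshoots each expected-count fairness bound by at most $f$; combining with the concentration bound—which adds a further $2\delta n$ in passing from the expected count $\sum_i q_{i\ell}^{\sexp{k}}x_i'$ to the true count $\sum_{i\in G_\ell^{\sexp{k}}}x_i'$—gives a fairness violation of at most $2\delta n+f$. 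For the value guarantee, I would show that the optimum \xtr{} of \target{} is feasible for the relaxation with high probability (again by concentration), so the BFS has value at least that of \xtr{}, and since $w\geq 0$ rounding up only increases value. A final union bound over the event that \xtr{} is feasible and the event that concentration holds for $x'$ yields the overall success probability $1-4ps\exp(-\delta^2 n/3)$, and the runtime is polynomial since a BFS of a linear program is computable in time polynomial in the input bit-complexity.
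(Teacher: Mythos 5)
Your proposal is correct and follows essentially the same route as the paper's proof: you generalize the concentration lemma by a union bound over all $\sum_k p_k$ attribute--value pairs, bound the fractional entries of a basic feasible solution of the multi-attribute LP relaxation by exploiting the dependencies $\sum_{\ell\in[p_k]}q_{i\ell}^{\sexp{k}}=1$ (one per attribute, each placing the cardinality row in the span of that attribute's value rows, which is exactly the paper's argument in \cref{thm:optimal_solution_with_few_fractional_gen}), and then conclude by the same round-up and two-event union bound. Your explicit rank count $(1+\sum_k p_k)-s$ is just a cleaner phrasing of the paper's per-attribute span observation, so there is nothing further to reconcile.
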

		\begin{remark}
			Note that in the special case that $s=1$, we have
	    \begin{align*}
	      1+\sum\nolimits_{k=1}^s (p_k-1) &=p_1,\\
	      1+2\delta n+\sum\nolimits_{k=1}^s (p_k-1) &=1+2\delta n+p_1.
	    \end{align*}
	    Substituting these values in \cref{thm:algorithm_for_target_fair_gen}, we can recover the statement of \cref{thm:algorithm_for_target_fair}.
		\end{remark}
		Compared to \cref{thm:algorithm_for_target_fair}, the high-probability bound and approximation factors are weaker by (roughly) a factor of $s$.
		Note that, in many real-world applications, $p$ and $s$ are small constants compared to the number of items selected $n$.
		Here, \cref{thm:algorithm_for_target_fair_gen} implies that $x$ violates the fairness constraints in Equation~\eqref{eq:target_fair:fairness_constraint} by a multiplicative factor of at most $(1+2\delta+o(1))$
		and the constraint in Equation~\eqref{eq:target_fair:cardinality_constraint} by a multiplicative factor of at most $(1+o(1))$ with high probability. %

		Algorithm~\ref{alg:algorithm_for_target_fair_gen} is equivalent to Algorithm~\ref{alg:algorithm_for_target_fair};
		the only difference is that Algorithm~\ref{alg:algorithm_for_target_fair_gen} solves the \denoised{} for $s\geq 1$.

		\begin{remark}[{\bf Analogous to \cref{rem:only_lower_bounds}}]
			We can strengthen \cref{thm:algorithm_for_target_fair_gen} to guarantee that  Algorithm~\ref{alg:algorithm_for_target_fair_gen} finds an $x\in\zo^m$ which does not violate the lower bound fairness constraint (left inequality in Equation~\eqref{eq:target_fair:fairness_constraint})
			and violates the upper bound fairness constraints by at most $(s\cdot(p-1)+2\delta n)$ (without changing other conditions).
			In particular, this shows that, if one places only lower bound fairness constraints, then subset found by Algorithm~\ref{alg:algorithm_for_target_fair_gen} would never violate the fairness constraints.
		\end{remark}

		\subsection*{Proof of Theorem~\ref{thm:algorithm_for_target_fair_gen}}
		The proof of \cref{thm:algorithm_for_target_fair_gen} follows a similar template as the proof of \cref{thm:algorithm_for_target_fair}.
		Instead of repeating the entire proof, we highlight how the proof of \cref{thm:algorithm_for_target_fair_gen} differs from the proof of \cref{thm:algorithm_for_target_fair}.

		We first present a generalization of \cref{thm:relation_between_target_and_noisy}.
		\begin{lemma}\label{thm:relation_between_target_and_noisy_gen}
			For all $\delta\negsp \in\negsp (0,1)$ and $x\negsp \in\negsp  [0,1]^m$, s.t., $\sum_{i=1}^m x_i\negsp =\negsp n$:
			\begin{align*}
				\text{$\forall\ k\in [s],\ \ell\in [p_k]$,}\quad \abs{\sum\nolimits_{i\in G_\ell^{\sexp{k}}} x_i - \sum\nolimits_{i\in [m]}q_{i\ell}^{\sexp{k}} x_i } \leq n\delta
			\end{align*}
			holds with probability at least $1-2ps\exp\inparen{\nfrac{-\delta^2 n}{3}}$ over the noise in the protected attributes of each item.
		\end{lemma}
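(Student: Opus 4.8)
The plan is to mirror the single-attribute argument (the proof of \cref{thm:relation_between_target_and_noisy}) verbatim for each fixed attribute-value pair, and then close with a larger union bound, since the per-pair deviation is controlled by exactly the same Chernoff estimate. First I would fix an attribute $k\in[s]$ and a value $\ell\in[p_k]$, and for each item $i\in[m]$ introduce the indicator $Z_i^{\sexp{k,\ell}}\coloneqq \mathbbm{1}\insquare{i\in G_\ell^{\sexp{k}}}\in\zo$. By \cref{def:individual_flipping_noise} these satisfy $\Ex\sinsquare{Z_i^{\sexp{k,\ell}}}=q_{i\ell}^{\sexp{k}}$, and, crucially, for a \emph{fixed} $k$ the events $(i\in G_\ell^{\sexp{k}})$ are mutually independent across $i\in[m]$. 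Moreover $\sum_{i\in G_\ell^{\sexp{k}}}x_i=\sum_{i\in[m]}x_i Z_i^{\sexp{k,\ell}}$, so the quantity to be bounded is the deviation of a sum of independent $[0,1]$-valued random variables $x_i Z_i^{\sexp{k,\ell}}$ from its mean $\sum_{i}q_{i\ell}^{\sexp{k}}x_i$.

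Second, I would apply the additive Chernoff bound exactly as in the single-attribute proof. Writing $\mu\coloneqq\Ex\sinsquare{\sum_i x_i Z_i^{\sexp{k,\ell}}}=\sum_i q_{i\ell}^{\sexp{k}}x_i\le \sum_i x_i=n$, the upper tail gives
\[
\Pr\insquare{\sum\nolimits_{i}x_iZ_i^{\sexp{k,\ell}}>\mu+n\delta}\le \exp\inparen{-\frac{(n\delta)^2}{3\mu}}\le \exp\inparen{-\frac{\delta^2 n}{3}},
\]
using $Z_i^{\sexp{k,\ell}}\le 1$ and $\mu\le n$; the symmetric lower-tail estimate yields the (smaller) bound $\exp(-\nfrac{\delta^2 n}{2})$. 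Hence for each fixed $(k,\ell)$ the event that the two-sided deviation exceeds $n\delta$ has probability at most $2\exp(-\nfrac{\delta^2 n}{3})$. Finally, I would take a union bound over all attribute-value pairs: there are $\sum_{k=1}^s p_k\le ps$ of them (bounding each $p_k\le p$), so the probability that \emph{some} pair violates the bound is at most $2ps\exp(-\nfrac{\delta^2 n}{3})$, which is precisely the claimed failure probability; on the complementary event the stated inequality holds simultaneously for all $k\in[s]$ and $\ell\in[p_k]$.

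The point to be careful about — rather than a genuine obstacle — is the independence structure, which differs from the single-attribute case only superficially. The Chernoff estimate for a single $(k,\ell)$ requires only independence of $\{Z_i^{\sexp{k,\ell}}\}_{i\in[m]}$ across items, which \cref{def:individual_flipping_noise} supplies directly. No independence across attributes $k$ or across values $\ell$ is needed, because the deviations for distinct pairs are combined by a union bound, which assumes nothing about their joint distribution. In particular, the within-item identity $\sum_{\ell\in[p_k]}Z_i^{\sexp{k,\ell}}=1$, which forces strong dependence among the $Z_i^{\sexp{k,\ell}}$ for a fixed $i$ across the values $\ell$, plays no role in the argument. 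Thus the only substantive change from the proof of \cref{thm:relation_between_target_and_noisy} is replacing the union bound over $p$ values by one over the at most $ps$ pairs $(k,\ell)$.
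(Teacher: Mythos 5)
Your proof is correct and takes essentially the same route as the paper: the paper's proof of this lemma simply invokes the single-attribute result (Lemma~\ref{thm:relation_between_target_and_noisy}) once per attribute $k\in[s]$ and union bounds over those $s$ events, and since that single-attribute lemma is itself proved by exactly the per-pair Chernoff estimate plus a union bound over $\ell\in[p]$ that you reproduce, your direct union bound over all $\sum_{k} p_k\leq ps$ pairs $(k,\ell)$ is the same argument merely unrolled. Your observation that only independence across items within a fixed pair $(k,\ell)$ is needed (and none across values or attributes) is a correct and worthwhile clarification that the paper leaves implicit.
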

		\begin{proof}
			The result follows by applying \cref{thm:relation_between_target_and_noisy} for each attribute $k\in [s]$ and taking the union bound over the events.
		\end{proof}
		\begin{remark}[{\bf Hardness results}]
			Note that the hardness results from \cref{thm:feasibility_is_hard} and \cref{thm:problem_is_apx_hard} also apply when $s\geq 1$.
			Like in \cref{sec:theoretical_results}, we overcome this hardness by allowing solutions to violate the constraints of \denoised{} by an additive amount.
		\end{remark}

		\noindent Consider the following linear-programming relaxation of \denoised{}
		\begin{align*}
			\max_{x\in [0,1]^m}\quad  &\sum\nolimits_{i=1}^{m}w_{i}x_{i}\tagnum{Relaxed-Denoised}\customlabel{prob:relaxed_denoised_fair_gen}{LP-Denoised}\\
			\st,\qquad & L_\ell^{\sexp{k}}\negsp - \delta n\leq \sum\nolimits_{i=1}^{m} q_{i\ell}^{\sexp{k}} x_{i} \leq U_{\ell}^{\sexp{k}}\negsp+\delta n, \ \  \forall \ k\in [s], \ell \in [p_k],\\
			&\sum\nolimits_{i=1}^{m}x_i = n.
		\end{align*}
		We can prove the following lemma.
		\begin{lemma}[{\bf An optimal solution with $p$ fractional entries}]\label{thm:optimal_solution_with_few_fractional_gen}
			Any basic feasible solution $x\in[0,1]^m$ of \ref{prob:relaxed_denoised_fair_gen} has at most
			$$ \min\inparen{m,1+\sum\nolimits_{k=1}^s (p_k-1)}$$
			fractional values, i.e.,
			$$\sum\nolimits_{i=1}^{m}\mathbbm{I}[x_i \in (0,1)]\leq \min\inparen{m,1+\sum\nolimits_{k=1}^s (p_k-1)}.$$
		\end{lemma}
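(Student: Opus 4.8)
The plan is to follow the template of \cref{thm:optimal_solution_with_few_fractional}, replacing the single-attribute dependency argument with a rank bound on the full collection of fairness-constraint vectors. As before, I would assume $1+\sum_{k=1}^s(p_k-1)<m$, since otherwise the claim is trivial. Recall that \ref{prob:relaxed_denoised_fair_gen} has $m$ variables, the $2m$ bounding-box constraints $x_i\geq 0$ and $x_i\leq 1$, the single cardinality constraint $\sum_i x_i=n$, and, for each $k\in[s]$ and $\ell\in[p_k]$, a pair of fairness inequalities (replacing the pair by an equality when $L_\ell^{\sexp{k}}=U_\ell^{\sexp{k}}$). At a basic feasible solution, $m$ of these constraints hold with equality and are linearly independent.

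First I would separate the tight constraints into bounding-box constraints and the rest. Whenever a bounding-box constraint $x_i=0$ or $x_i=1$ is tight, the coordinate $x_i$ is integral, so every fractional coordinate corresponds to an index at which no bounding-box constraint is tight. If $f$ denotes the number of fractional coordinates, then the $m-f$ tight bounding-box constraints account for $m-f$ of the $m$ independent equalities, so at least $f$ linearly independent equalities must come from the non-bounding-box constraints (cardinality and fairness). Hence $f$ is at most the rank of the set of tight non-bounding-box constraint vectors.

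The key step is to bound this rank. As in the $s=1$ case, for each attribute $k$ at most one of the upper or lower bound can be tight for a given $\ell$, so the tight fairness constraints form a subset of $\{q_{\cdot\ell}^{\sexp{k}}\colon k\in[s],\ \ell\in[p_k]\}$, together with the cardinality vector $\mathbf{1}_m$. The crucial observation is that, by \cref{def:individual_flipping_noise}, for every fixed $k$ we have $\sum_{\ell\in[p_k]}q_{i\ell}^{\sexp{k}}=1$ for all $i$, i.e.\ $\sum_{\ell\in[p_k]}q_{\cdot\ell}^{\sexp{k}}=\mathbf{1}_m$. Consequently, for each $k$ the vector $q_{\cdot p_k}^{\sexp{k}}=\mathbf{1}_m-\sum_{\ell<p_k}q_{\cdot\ell}^{\sexp{k}}$ lies in the span of $\mathbf{1}_m$ and $\{q_{\cdot\ell}^{\sexp{k}}\colon\ell\in[p_k-1]\}$. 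Dropping one such redundant vector per attribute shows that the entire collection is contained in the span of $\{\mathbf{1}_m\}\cup\bigcup_{k=1}^s\{q_{\cdot\ell}^{\sexp{k}}\colon\ell\in[p_k-1]\}$, a set of $1+\sum_{k=1}^s(p_k-1)$ vectors. Therefore the rank is at most $1+\sum_{k=1}^s(p_k-1)$, which bounds $f$ and completes the proof.

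I expect the rank argument to be the only substantive step; the reduction from the number of fractional coordinates to the rank of the tight non-bounding-box constraints is the standard basic-feasible-solution fact specialized to the box $[0,1]^m$, and the per-attribute simplex identity $\sum_\ell q_{\cdot\ell}^{\sexp{k}}=\mathbf{1}_m$ is exactly the mechanism that, in the single-attribute case, forced the cardinality constraint to be dependent on the fairness constraints. The main thing to be careful about is counting the cardinality vector only once across all $s$ attributes: each attribute's simplex identity recovers the \emph{same} vector $\mathbf{1}_m$, so the savings is $(p_k-1)$ per attribute plus a single shared $\mathbf{1}_m$, yielding $1+\sum_k(p_k-1)$ rather than $\sum_k p_k$.
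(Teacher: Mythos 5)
Your proof is correct and follows essentially the same route as the paper's: the standard basic-feasible-solution count reduces the number of fractional entries to the rank of the tight non-bounding-box constraints, and the per-attribute identity $\sum_{\ell\in[p_k]} q^{(k)}_{\cdot\ell} = \mathbf{1}_m$ --- with the shared cardinality vector $\mathbf{1}_m$ counted only once across attributes --- yields the spanning set of size $1+\sum_{k=1}^s(p_k-1)$. The paper's own proof is just a terser statement of this same mechanism (for each attribute, any $p_k$ independent tight bound constraints contain the cardinality constraint in their span), so your write-up is effectively an expanded and carefully justified version of it.
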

		\begin{proof}
			In the proof of \cref{thm:optimal_solution_with_few_fractional}, we show that any basic feasible solution satisfies at most $p_1$ linearly-independent non-bounding-box constraints with equality.
			Toward this, we show that any set of $p_1$ linear-independent lower bound or upper bound constraints which are satisfied with equality, must contain the cardinality constraint ($\sum_{i=1}^m x_i=n$) in their span.

			\cref{thm:optimal_solution_with_few_fractional_gen} follows by using the same argument for all protected attributes and noticing that for each protected attribute $k\in [s]$, any set of $p_k$ linear-independent lower bound or upper bound constraints which are satisfied with equality, must contain the cardinality constraint ($\sum_{i=1}^m x_i=n$) in their span.
		\end{proof}

		\begin{remark}
			Note that in the special case that $s=1$, we have
			$$1+\sum\nolimits_{k=1}^s (p_k-1)=p_1,$$
			and we recover \cref{thm:optimal_solution_with_few_fractional} from \cref{thm:optimal_solution_with_few_fractional_gen}.
			Further, using \cref{fact:instance_with_p_fractional_entries}, it follows that \cref{thm:optimal_solution_with_few_fractional_gen} is tight.
		\end{remark}

		\begin{proof}[Proof of \cref{thm:algorithm_for_target_fair_gen}]
			The run time follows since there are polynomial time algorithms to find a basic feasible solution of a linear program.

			The rest of proof follows a similar template to that of \cref{thm:algorithm_for_target_fair}, and follows by using the generalizations of the lemmas proved above and replacing the additive error of $p_1$ (in $x^\prime$ due to rounding) with
			$$1+\sum\nolimits_{k=1}^s (p_k-1).$$

			\noindent Let \xtr{} be an optimal solution for \target{}.
			Using \cref{thm:relation_between_target_and_noisy_gen} it can be shown that \xtr{} is feasible for \denoised{} with probability at least $1-2ps\exp\inparen{\nfrac{-\delta^2 n}{3}}$.
			Assuming this event happens, it holds that the rounded solution $x^\prime$ (from Step 2 of Algorithm~\ref{alg:algorithm_for_target_fair_gen}) has a value at least as large as \xtr. (This follows from the argument as in the proof of \cref{thm:algorithm_for_target_fair}).

			Further, from \cref{thm:optimal_solution_with_few_fractional} it follows that $x^\prime$ picks at most
			$$1+\sum\nolimits_{k=1}^s (p_k-1)$$
			more elements than $x$.
			Thus, $x^\prime$ violates Equation~\eqref{eq:denoised_fair:cardinality_constraint}, and so Equation~\eqref{eq:target_fair:cardinality_constraint} by at most $1+\sum\nolimits_{k=1}^s (p_k-1)$ (additively).
			Further, $x^\prime$ violates the fairness constraints of \denoised{} by at most $1+\sum\nolimits_{k=1}^s (p_k-1)$ (additively).
			Combining this with \cref{thm:relation_between_target_and_noisy_gen}, we get that with probability at least $1-2ps\exp\inparen{\nfrac{-\delta^2 n}{3}}$,
			$x^\prime$ violates the constraints of \target{} by at most the bound claimed in \cref{thm:algorithm_for_target_fair_gen}.
			Finally, taking a union bound over above two events completes the proof.
		\end{proof}

		\vspace{-4mm}
		\setlength{\algomargin}{0.5em}
		\begin{algorithm}[h!]
			\AlgoDontDisplayBlockMarkers\SetAlgoNoEnd
			\caption{Algorithm for \target{}}
			\label{alg:algorithm_for_target_fair_gen}
			\kwInit{Numbers $n,s\in\N$, utility vector $w\in \R^m$, and for each $k\in [s]$: a probability matrix $q^{\sexp{k}}\in [0,1]^{m\times p_k}$ and constraint vectors $L^{\sexp{k}},\ U^{\sexp{k}}\in \R_{\geq 0}^{p_k}$.}\vspace{2mm}

			1. {\bf Solve} $x\gets$ {Find a basic feasible solution to linear-programming relaxation}

			\white{.}\hspace{20mm}  of \denoised{} with inputs $(n,s,q,w,L,U)$.

			2. {\bf Set } $x^\prime_i\coloneqq\ceil{x_i}\ $ for all $i\in [m]$.\commentalg{Round solution}

			3. {\bf Return} $x^\prime$.
		\end{algorithm}
		\vspace{-4mm}

\section{Limitations and future work}\label{sec:lim_future_work}
	We consider the natural setting where the utility of a subset is the sum of utilities of its items.
	A useful extension to this work could consider submodular objectives, which are relevant when the goal is to select a subset which summarizes a collection of items~\cite{Lin11}.

	Apart from this, some works also study other variants of subset selection, for example, diverse (and fair) subset selection in the online settings~\cite{StoyanovichYJ18}.
	Studying and mitigating bias in the presence of noise under this variant is an interesting direction for future work.

	Further, our approach assumes access to probabilistic information about the true protected attributes.
	If this information is itself is skewed or incorrect, then our approach can have a poor performance.
	Although, empirical results on real-world data suggest that our approach can improve fairness even when there is some skew in the noise information (see, e.g., \cref{rem:image_selection:skewed_probability}).
	Still, determining probabilistic information more reliably is an important problem, and
	{recent works have made some progress toward this goal~\cite{johnson2018multcalibration, jung2020multicalibration}.}

	Furthermore, while we focus on the subset selection problem, our results can also extend to the ranking problem (where after selecting a subset, it must be ordered) by satisfying the fairness constraints in the top-$k$ positions, for a small number choices of $k$, say $k_1\leq k_2\leq\dots\leq k_g$;\footnote{To do so: first, pick $k_1$ items and place them top-$k_1$ positions, then from those remaining pick $(k_2\negsp -\negsp k_1)$ items and {place them in the next $(k_2\negsp -\negsp k_1)$ positions, and so on.}}
  {this reduces the high-probability guarantee from $1-4p\exp{(\nfrac{-\delta^2 n}{3})}$ to $1-4gp\exp{(\nfrac{-\delta^2 k_1}{3})}$.\footnote{{This follows by using the union bound. However, as the events involved are correlated, it may be possible to get a stronger bound using a more sophisticated analysis.}}}
	This is particularly relevant in the setting where results are displayed one page at a time.
	Satisfying the constraints for a larger number of positions with high probability might require stronger information about noise, and is an interesting direction for future work.

	Empirically, we could report fairness on several other metrics, e.g., selection lift or extended difference~\cite{calders2010three, Hajian2014generalization, HajianPrivacy2015}.
	We focus on risk difference as it is closer to our approach. %
	Nevertheless, \target{} can mitigate discrimination with respect to selection lift (and other metrics) as well (e.g., see \cite{CKSDKV18, celis2019classification}).
	Empirically evaluating this would be an important direction for future work.

	{Finally, we note that bias is a systematic issue and this work only addresses one aspect of it.
	Indeed, any such approach is limited by how people and the broader system uses the subset presented to them; e.g., a recruiter on an online hiring platform might deliberately reject minority candidates even when presented with a representative candidate subset.
	Thus, it is important to complement our approach with other necessary tools to mitigate bias and counter discrimination.}

\section{Conclusion}
	We consider the problem of mitigating bias in subset selection when the protected attributes are noisy, or are missing for some or all of the entries and must be {imputed} using proxy variables.
	We note that accounting for real-world noise in algorithms is important to mitigate bias, and not accounting for noise can have unintended adverse effects, e.g., adding fairness constraints in a noise oblivious fashion can even decrease fairness when the protected attributes are noisy (Section~\ref{sec:toy_simulation}).

	We consider a model of noise where we have probabilistic information about the protected attributes, and develop a framework to mitigate bias, which given this information, can satisfy one from a large class of fairness constraints with at most a small multiplicative error with high probability (\cref{sec:theoretical_results}).
	In our empirical study, we observe that this approach achieves a high level of fairness on standard fairness metrics (e.g., risk difference), even when the probabilistic information about protected attributes is skewed (\cref{rem:image_selection:skewed_probability} and \cref{sec:lim_future_work}), and
	this approach has a better tradeoff between utility and fairness compared to several prior approaches (\cref{sec:empirical_results}).

\vspace{-0.5mm}
\section*{Acknowledgements}
	This research was supported in part by a J.P. Morgan Faculty Award.
	We would like to thank Nisheeth K. Vishnoi for several useful discussions on the problem and approach.
  \vspace{-2mm}

\newpage
\bibliographystyle{plain}
\bibliography{bib-v1.bib}

\appendix

\addtocontents{toc}{\protect\setcounter{tocdepth}{1}}

\newpage

\section{Extended empirical results}\label{sec:extended_empirical_results}
	\subsection{Implementation details}\label{sec:implementation_details}
	\subsubsection{Optimization libraries.}
		We used CVXPY~\cite{agrawal2018rewriting} and Gurobi~\cite{gurobi} to implement the algorithms.

	\subsubsection{Rounding scheme.}
		The rounding scheme from our theoretical results (Step 2 of Algorithm~\ref{alg:algorithm_for_target_fair}) crucially uses the fact that the intermediate solution found has a small number of fractional entries.
		Since the solutions of \MultObj{} can have a large number of fractional entries, the same rounding scheme is not useful.
		Instead, one can use randomized rounding~\cite{motwani1995randomized}, which, given a solution $x\in [0,1]^m$, outputs a subset $S\subseteq[m]$ of size $n$ with probability $\prod_{i\in S}x_i$.

		In our simulations, we use the same rounding scheme --- randomized rounding~\cite{motwani1995randomized} --- for \MultObj{}, \FairExpec{}, and \FairExpecGrp{}.
		We also ran a version of our experiments where, we used the rounding scheme from our theoretical results (Step 2 of Algorithm~\ref{alg:algorithm_for_target_fair}) for \FairExpec{}, and \FairExpecGrp{}, and randomized rounding for \MultObj{}.
		We observed similar results in these simulations.

	\subsubsection{Choice of $m$ and $n$.}
	We set $m=500$ and $n=100$ in all experiments,
	except in \cref{sec:causal_simulation} --- where we set $m=2000$ following \cite{yang2020causal} ---
	and in \cref{sec:candidate_selection} --- where increase $m$ to $1000$.
	We increase $m$ to $1000$ in \cref{sec:candidate_selection} to ensure that the sample of candidates has at least $\nfrac{n}{4}$ candidates from each of the four races.
	(Any algorithm requires this to be able to satisfy the fairness constraints.)

  \subsubsection{Additional details of the simulation from \cref{sec:causal_simulation}}\label{sec:implementation_details:causal_simulation}
    \paragraph{Additional details of the data.}
      {The dataset has 37\% Black candidates, 37\% candidates without prior experience, and 13.7\% Black candidates without prior experience.
    	The utilities are such that the Black candidates ($z_i=0$) have a lower expected utility than non-Black candidates, candidates without prior experience ($a_{i1}=0$) have a lower expected utility than those with prior work experience, and Black candidates without prior experience {($z_i=0,a_{i1}=0$) have the lowest expected utility.}}

    \paragraph{Preprocessing}
      {We sample a training dataset $D^\prime$ with $m=2000$ candidates.
    	Then, we use the code by \cite{yang2020causal} to get an approximation $\bar{\theta}$ of $\theta$ from $D^\prime$.
    	(\CntrFair{} and \CntrFairRes{} use $\cM(\bar{\theta})$ to generate counterfactual utilities).
    	To generate the estimate of $q$, we partition candidates into 20 equally sized bins, $(b_1,\dots,b_{20})$, by their utility. (Each bin has 100 candidates).
    	Given a candidate $i$ described by $(w_i, z_i, a_{i1}, a_{i2})$ such that $w_i\in b_k$, we define $q_i$ as follows:
    	\begin{align*}
    		q_{i0} = \Pr\nolimits_{j}[z_j=0\mid w_j\in b_k]\quad\text{ and }\quad  q_{i1}=1-q_{i0},\yesnum\label{eq:candidate_selection:q}
    	\end{align*}
    	where the probability is over the candidates $j$ in $D^\prime$.
    	Note that the candidate $i$ may not be in $D^\prime$.}

  \subsubsection{Implementation details of simulation from \cref{sec:image_selection}}\label{sec:implementation_details:image_selection}

  	\paragraph{Specifics of cropping.}
  	For each image where the face-detector found a face, we cropped the image to the region ``around'' the detected face.
  	More precisely, we expand the bounding box returned by the face-detector by 40\% (on each side) and then crop the image to this expanded bounding box.
  	If the expanded bounding box exceeded the image dimensions, we filled in the empty region by copying the last pixel layer.

  	\paragraph{Face-detector.} We referenced the tutorial by~\cite{cnn_tutorial} while writing the code for our simulations.

    \paragraph{Preprocessing}

    {As a preprocessing step, we remove all images with a gender label NA, this leaves us with 5,825 images.
  	Next, we use a face-detector to extract faces from the images  (5,825) and remove the ones where the detector does not detect any face.%
  	\footnote{Note that we do not check if the detected faces are correct. This introduces some error, which is also expected in practice.}
  	This gives us 4,494 images (77\% of 5,825).}
  	{We calibrate the classifier by binning its values into $b=20$ bins.
  	For each bin $j\in [20]$, we calculate the classifier's accuracy $a(j)\in [0,1]$.
  	This initial calibration is done only once.
  	Given image $i$, we compute the classifier's output $f_i\in[0,1]$.
  	Let $f_i$ fall in bin $j$.
  	Then, we set the noise-information $q_i$ of image $i$ to $q_i\coloneqq [a(j), 1-a(j)]$}
    More formally, we ran the image classifier on images to get a set of predictions $F\coloneqq \{f_i\}_i$.
  	Then, we binned the values in $F$ into $b=20$ equally sized bins (over $[0,1]$).
  	For all $j\in[b]$, let $B_j\subseteq [m]$ be the set of images in the $j$-th bin.
  	Further, let $G_{m}$ and $G_f$ be the set of images containing men and women, respectively.
  	Then, taking an uncalibrated $f_i$ as input, we output calibrated $q_{i, \ell}$ as follows: Let $f_i$ fall in the $j$-th bin, then for all $\ell\in \{m,f\}$ output
  	\begin{align*}
  		q_{i, \ell}\coloneqq \frac{\abs{B_j\cap G_\ell}}{\abs{B_j}}.\yesnum\label{eq:image_classifier_simulation:calibration}
  	\end{align*}

	\subsection{Optimizing for Selection Lift}
	Given subset $S\in [m]$ and target $t\in[0,1]^p$, define the selection lift $\cF_L(S,t)\in [0,1]$ as
	\begin{align}
		\cF_L(S,t)\coloneqq \min_{\ell,k\in [p]}\inparen{\frac{|S\cap G_\ell|}{n\cdot t_\ell}\cdot \frac{n\cdot t_k}{|S\cap G_k|}}.\tagnum{Selection lift}\customlabel{eq:selection_lift_def}{\theequation}
	\end{align}
	Note that when the target is proportional representation, then the above definition reduces to the usual definition of selection lift~\cite{calders2010three,Hajian2014generalization,HajianPrivacy2015}.
	Let $\cA(w,q)\subseteq [m]$ be the subset outputted by algorithm $\cA$ on input $(w,q)$.
	We report $$\cF_{L,\cA}\coloneqq \Ex\insquare{\cF_L(\cA(w,q),t)},$$ where the expectation is over the choices of $(w,q)$.
	We drop the subscript $\cA$ when the algorithm is not important or clear from context.

	\paragraph{Discussion.} Let selection lift of a selection $x\in \zo^m$, be $\cF_L(x)\in [0,1]$.
	Fix the desired level $\sigma\in [0,1]$ of selection lift, and let the set of selections $x\in \zo^m$ which have $\cF_L(x)\geq \sigma$ selection lift be $R(\sigma)$.
	Ideally, we would like to find
	\begin{align}\label{eq:ideal_problem}
		\argmax\nolimits_{x\in R(\sigma)} w^\top x.
	\end{align}
	{However, the feasible region $R^\prime(\sigma)$, considered by \FairExpec{} is a strict subset of $R(\sigma)$.
  Thus, \FairExpec{} outputs}
	\begin{align}\label{eq:lu_approximation_to_ideal_problem}
		\argmax\nolimits_{x\in R^\prime(\sigma)} w^\top x.
	\end{align}
  {We believe this is why \FairExpec{} is not Pareto-optimal in the simulation from Section~\ref{sec:candidate_selection} for $\cF_L(\cdot)$.}
	However, a reduction from Program~\eqref{eq:ideal_problem} to (multiple instances of) Program~\eqref{eq:lu_approximation_to_ideal_problem} should ensure that \FairExpec{} is Pareto-optimal for $\cF_L(\cdot)$ (see, e.g., \cite[Theorem 3.1]{celis2019classification}).
  \newpage
	\subsection{Additional simulations varying $\nfrac{n}{m}$}\label{sec:extended_empirical_results:varying_n}
	\bigskip
	\renewcommand{\folder}{./figures/image-subset-selection/varying-n}
	\begin{figure}[h!]
		\begin{center}
			\vspace{-3mm}
			\begin{tikzpicture}
				\node (image) at (-0.5,0) {\includegraphics[width=0.6\linewidth, trim={0.5cm 0cm 1.5cm  1.8cm},clip]{\folder/fair.pdf}};
				\node[rotate=90, fill=white] at (-4.4*1.1400,-0.5*1.2) {\small\white{AAAAAAAAAAAAAAAAAAAAA}};
				\node[rotate=90] at (-4.4*1.1400,1.75*1.2) {\scriptsize\textit{(more fair)}};
				\node[rotate=90, fill=none] at (-4.4*1.1400,0*1.2) {\small Risk difference ($\cF$)};
				\node[rotate=90] at (-4.4*1.1400,-1.75*1.2) {\scriptsize\textit{(less fair)}};
				\node[rotate=0, fill=white] at (-0.3*1.1400,-2.25*1.2) {\small \white{AAAAAAAAAAAAAAAAAAAAA}};
				\node[rotate=0] at (-0.1*1.1400,-2.25*1.2) {\small Selection size ($n$)};
			\end{tikzpicture}
		\end{center}
		\vspace{-5mm}
		\caption{ \small {\bf\em Risk difference on varying $\nfrac{n}{m}$ (\cref{sec:image_selection}):}
		To analyze the robustness of our approach to the fraction of items selected ($\nfrac{n}{m}$), we fix $m=1000$ and vary $n$ from $25$ to $250$ in Simulation~\ref{sec:image_selection}.
		The $y$-axis shows the risk different $\cF$ of different algorithms, and the $x$-axis shows $n$; $\cF$ values are averaged over 200 trials, and the error bars represent the std. error of the mean.
		We find that on increasing $n$ the difference between the $\cF$ of different algorithms remains roughly the same.
		We refer the reader to Remark~\ref{rem:varying_n} for the details.
		}
	\end{figure}
	\bigskip
	\renewcommand{\folder}{./figures/disparate-error/varying-n}
	\begin{figure}[h!]
		\begin{center}
			\vspace{-3mm}
			\begin{tikzpicture}
				\node (image) at (-0.5,0) {\includegraphics[width=0.6\linewidth, trim={0.5cm 0cm 1.5cm  1.8cm},clip]{\folder/fair.pdf}};
				\node[rotate=90, fill=white] at (-4.4*1.1400,-0.5*1.2) {\small\white{AAAAAAAAAAAAAAAAAAAAA}};
				\node[rotate=90] at (-4.4*1.1400,1.75*1.2) {\scriptsize\textit{(more fair)}};
				\node[rotate=90, fill=none] at (-4.4*1.1400,0*1.2) {\small Risk difference ($\cF$)};
				\node[rotate=90] at (-4.4*1.1400,-1.75*1.2) {\scriptsize\textit{(less fair)}};
				\node[rotate=0, fill=white] at (-0.3*1.1400,-2.25*1.2) {\small \white{AAAAAAAAAAAAAAAAAAAAA}};
				\node[rotate=0] at (-0.1*1.1400,-2.25*1.2) {\small Selection size ($n$)};
			\end{tikzpicture}
		\end{center}
		\vspace{-5mm}
		\caption{ \small {\bf\em Risk difference on varying $\nfrac{n}{m}$ (\cref{sec:toy_simulation}):}
		To analyze the robustness of our approach to the fraction of items selected ($\nfrac{n}{m}$), we fix $m\negsp \negsp =\negsp \negsp 1000$ and vary $n$ from $50$ to $350$ in Simulation~\ref{sec:toy_simulation}.
		{The $y$-axis shows the risk different $\cF$ of different algorithms, and the $x$-axis shows $n$; $\cF$ values are averaged over 100 trials, and the error bars represent the std. error of the mean.}
		We find that on increasing $n$, \FairExpec{} and \FairExpecGrp{} become slightly fairer compared to others.
		We refer the reader to Remark~\ref{rem:varying_n} for the details.
		}
		\vspace{-20mm}
		\label{fig:toy_simulation:varying_n}
	\end{figure}
	\newpage

	\subsection{Additional plots for Section~\ref{sec:toy_simulation}}\label{sec:extended_empirical_results:toy_simulation}
	\renewcommand{\folder}{./figures/disparate-error}
	\begin{figure}[h!]
		\vspace{-5.5mm}
		\begin{center}
		{\begin{tikzpicture}
		\node (image) at (0,0) {\includegraphics[width=0.6\linewidth, trim={1.7cm 0.3cm 0.5cm 0.26cm},clip]{\folder/val-rd.pdf}};
		\node[rotate=90, fill=white] at (-4.1*1.25,0) {\small \white{AAAAAAAAAAAAAAAA}};
		\node[rotate=90, fill=white] at (-4.1*1.25,0) {\small Utility Ratio ($\cK$)};
	\end{tikzpicture}}
	\end{center}
	\vspace{-4mm}
	\caption{\small
	{\em {Additional plot for \cref{sec:toy_simulation}:}}
  This simulation considers the setting where the minority group (40\% of total) has a higher 30\% higher FDR compared to the majority group.
  The utilities of all candidates are iid from the uniform distribution.
  The target is to ensure equal representation between the majority and minority groups.
	We refer the reader to \cref{sec:toy_simulation} for the details of the simulation.
	The $y$-axis shows the utility ratio $\cK$ of different algorithms, and the $x$-axis shows the constraint parameters ($\alpha$ for \FairExpec{}, \FairExpecGrp{}, and \Thresh{}, and $\lambda$ for \MultObj); $\cK$ averaged over 500 trials, and the error bars represent the standard error of the mean.
	We observe that \FairExpec{} and \FairExpecGrp{} lose a larger fraction of the utility.
	Although, note that this because they have a higher level of fairness; see Figure~\ref{fig:toy_simulation:1}.
	}
	\vspace{-4.0mm}
	\label{fig:toy_simulation:1_additional}
	\end{figure}

	\begin{remark}[{\bf Reading the double $x$-axis}]
		We note that both subfigures in Figure~\ref{fig:toy_simulation:1_additional} and Figure~\ref{fig:toy_simulation:selection_lift}  have a double $x$-axis: one for  $\alpha$ and one for $\lambda$.
		Therefore, one should note compared the $\cF$ or $\cK$ of \MultObj{} with other algorithms at a particular $x$-coordinate.
		It could be useful to consider the limiting their values at the largest $\alpha$ and $\lambda$, respectively.
	\end{remark}
	\renewcommand{\folder}{./figures/disparate-error}
	\begin{figure}[h!]
		\vspace{-5.5mm}
		\begin{center}
		{ \begin{tikzpicture}
		\node (image) at (0,0) {\includegraphics[width=0.6\linewidth, trim={1.7cm 0.3cm 0.5cm 0.26cm},clip]{\folder/fair-sl.pdf}};
		\node[rotate=90, fill=white] at (-4.1*1.23,0) {\small \white{AAAAAAAAAAAAAAAA}};
		\node[rotate=90] at (-4.1*1.23,1.75*1.25) {\footnotesize\textit{(more fair)}};
		\node[rotate=90, fill=white] at (-4.1*1.23,0) {\small Selection lift ($\cF_L$)};
		\node[rotate=90] at (-4.1*1.23,-1.75*1.22) {\footnotesize\textit{(less fair)}};
	\end{tikzpicture}}
		\vspace{-3mm}
	\end{center}
	\vspace{-4mm}
	\caption{\small
	{\em {Additional plot for \cref{sec:toy_simulation} with selection lift:}}
  This simulation considers the setting where the minority group (40\% of total) has a higher 30\% higher FDR compared to the majority group.
  The utilities of all candidates are iid from the uniform distribution.
  The target is to ensure equal representation between the majority and minority groups.
	We refer the reader to \cref{sec:toy_simulation} for the details of the simulation.
	The $y$-axis shows the selection lift $\cF_L$ of different algorithms, and the $x$-axis shows the constraint parameters ($\alpha$ for \FairExpec{}, \FairExpecGrp{}, and \Thresh{}, and $\lambda$ for \MultObj); $\cF_L$ values are averaged over 500 trials, and the error bars represent the standard error of the mean.
	We observe similar results as with risk difference ($\cF$).
	For a definition of selection lift see Equation~\eqref{eq:selection_lift_def}.
	}
	\vspace{-10mm}
	\label{fig:toy_simulation:selection_lift}
	\end{figure}

	\newpage

	\renewcommand{\folder}{./figures/disparate-utilities}

	\subsection{Additional plots for Section~\ref{sec:causal_simulation}}\label{sec:extended_empirical_results:causal_simulation}
	\begin{figure}[h!]
		\vspace{-3mm}
		\centering
		\begin{tikzpicture}
			\node (image) at (0,0) {\includegraphics[width=0.6\linewidth, trim={1cm 1cm 1cm 1.67cm},clip]{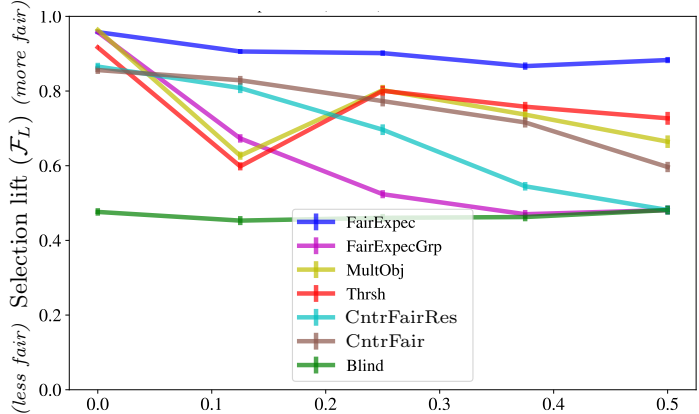}};
			\node[rotate=90, fill=white] at (-4.1*1.1400,-0.5*1.2) {\small\white{AAAAAAAAAAAAAAAAAAAAA}};
			\node[rotate=90] at (-4.1*1.1400,1.75*1.2) {\scriptsize\textit{(more fair)}};
			\node[rotate=90, fill=none] at (-4.1*1.1400,0*1.2) {\small Selection lift ($\cF_L$)};
			\node[rotate=90] at (-4.1*1.1400,-1.75*1.2) {\scriptsize\textit{(less fair)}};
			\node[rotate=0,fill=white] at (2.55*1.14 - 2.48*1.14 + 0.090, -0.22 - 0.05-1.45) {\scriptsize{CntrFair}};
			\node[rotate=0,fill=white] at (2.55*1.14 - 2.33*1.14 + 0.13, -0.18-0.05-1.175) {\scriptsize{CntrFairRes}};
		\end{tikzpicture}
		\vspace{-4mm}
		\caption{\small
		{\em {Additional results for \cref{sec:causal_simulation} with selection lift:}}
		The $y$-axis shows the selection lift $\cF_L$ of different algorithms, and the $x$-axis shows the amount of noise added $\tau\in[0,\nfrac12]$; $\cF_L$ values are averaged over 100 trials, and the error bars represent the standard error of the mean.
		We refer the reader to \cref{sec:causal_simulation} for the details of the simulation.
		We observe similar results as with risk difference ($\cF$).
		For a definition of selection lift see Equation~\eqref{eq:selection_lift_def}.
		\vspace{-3mm}
		}
	\end{figure}

	\begin{figure}[h!]
		\centering
		\begin{tikzpicture}
			\node (image) at (0,0) {\includegraphics[width=0.6\linewidth, trim={1.7cm 1cm 0.5cm 1.6cm},clip]{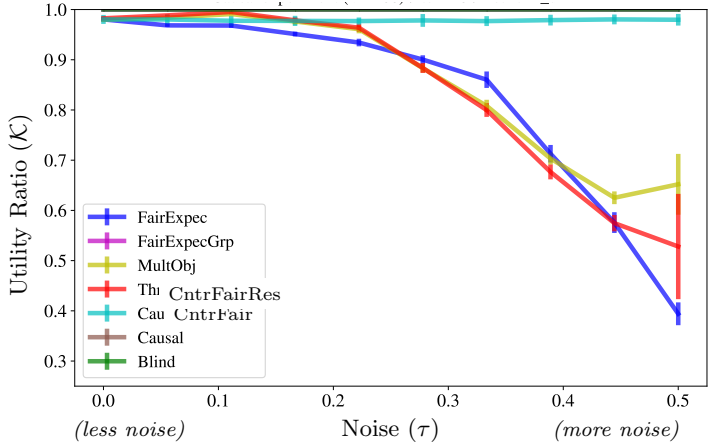}};
			\node[rotate=90, fill=white] at (-4.1*1.23,0) {\small \white{AAAAAAAAAAAAAAAA}};
			\node[rotate=90, fill=white] at (-4.1*1.23,0) {\small Utility Ratio ($\cK$)};
			\node[rotate=0] at (2.3*1.25,-2.5*1.2) {\footnotesize\textit{(more noise)}};
			\node[rotate=0] at (-0.1*1.2,-2.5*1.2) {\small Noise ($\tau$)};
			\node[rotate=0] at (-3*1.2,-2.5*1.2) {\footnotesize\textit{(less noise)}};
			\node[rotate=0,fill=white] at (-2.48,-1.45) {\scriptsize{CntrFair}};
			\node[rotate=0,fill=white] at (-2.1,-1.175) {\scriptsize{\white{CntrFairRe}}};
			\node[rotate=0,fill=white] at (-2.33,-1.175) {\scriptsize{CntrFairRes}};
		\end{tikzpicture}
		\vspace{-3mm}
		\caption{\small
		{\em {Additional results for \cref{sec:causal_simulation}:}}
		This simulation considers the setting where the utilities of a minority group have a lower average than the majority group, and both groups have identical amount of noise.
		{The $y$-axis shows the utility ratio $\cK$ of different algorithms, and the $x$-axis shows the amount of noise added $\tau\in[0,\nfrac12]$; $\cK$ values are averaged over 20 trials, and the error bars represent the standard error of the mean.
		We refer the reader to \cref{sec:causal_simulation} for details and discussion.}
		We note that the utility ratio of all algorithms decreases on adding noise.
		Finally, we note that utilities in this experiment can be negative; we observe that \CntrFair{} has a negative utility ratio.
		The caveat is that \CntrFair{} and \CntrFairRes{}, try to maximize the counterfactual utilities.
		\vspace{0mm}
		\label{fig:causal_simulation:value_plot}
		}
	\end{figure}

	Given subset $S\in [m]$ and a $\ell\in [p]$, we define the selection rate of $S$ with respect to group $G_\ell$ as
	\begin{align*}
		SR(S,\ell)\coloneqq \frac{|S\cap G_\ell|}{n} \cdot \frac{m}{|G_\ell|}\tagnum{Selection rate}\customlabel{eq:selection_rate_def}{\theequation}
	\end{align*}
	Let $\cA(w,q)\subseteq [m]$ be the subset outputted by algorithm $\cA$ on input $(w,q)$.
	We report $$SR_{\cA,\ell}\coloneqq \Ex\insquare{SR(\cA(w,q),\ell)},$$ where the expectation is over the choices of $(w,q)$.
	We drop the subscript $\cA$ when the algorithm is not important or clear from context.

	\begin{figure}[h!]
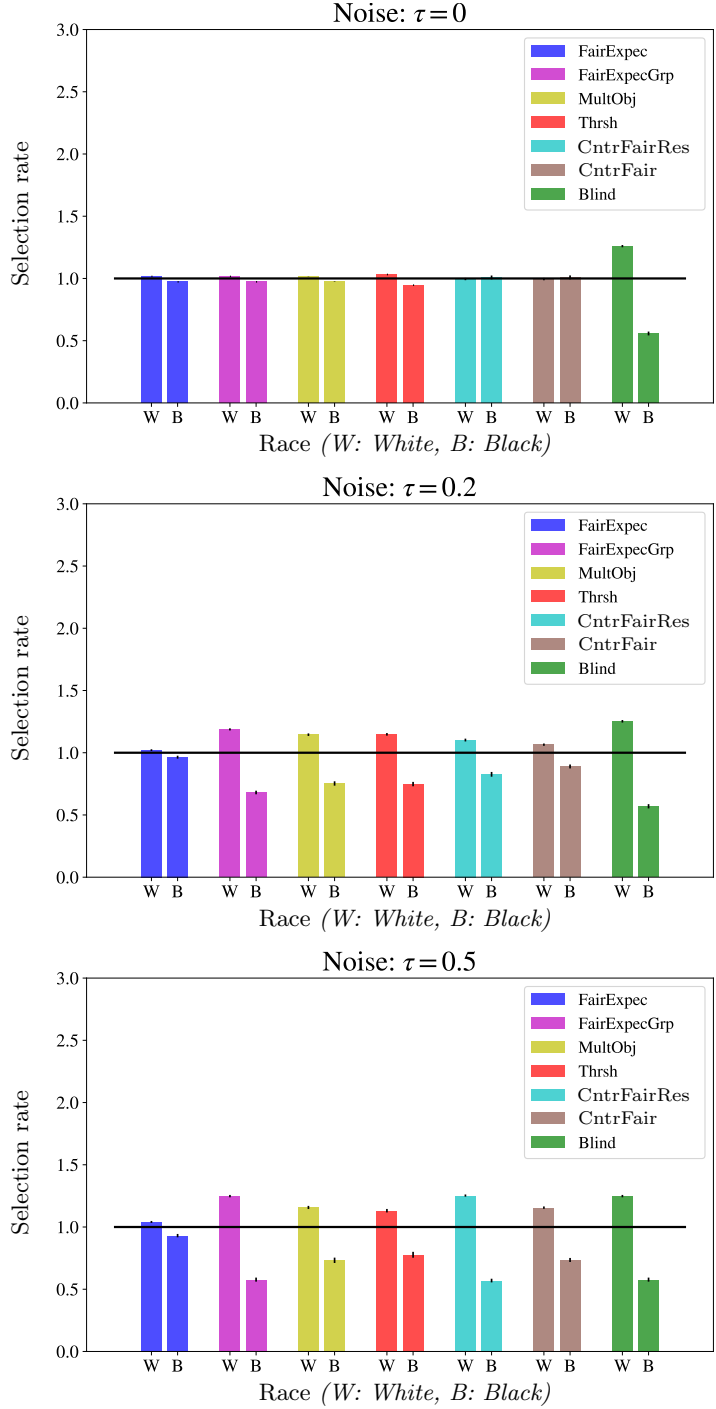

		\vspace{-3mm}
		\centering
		\begin{tikzpicture}
			\node (image) at (0,0) {\includegraphics[width=0.6\linewidth, trim={1cm 1cm 1cm 1cm},clip]{\folder/causal-bar-slack/tau00.pdf}};
			\node[rotate=90, fill=white] at (-4.1*1.1400,-0.5*1.2) {\small\white{AAAAAAAAAAAAAAAAAAAAA}};
			\node[rotate=90, fill=none] at (-4.1*1.2,0*1.2) {\small Selection rate};
			\node[rotate=0] at (0.2,-2.8*1.1) {\small Race {\em (W: White, B: Black)}};
			\node[rotate=0,fill=white] at (0.52 + 2.44 + 2.55 - 2.48, 0.09 + 2.1 - 0.15-1.45) {\scriptsize{CntrFair}};
			\node[rotate=0,fill=white] at (0.58 + 2.44 + 2.55 - 2.33, 0.12 + 2.1 - 0.15-1.175) {\scriptsize{CntrFairRes}};
			\end{tikzpicture}
			\par
			\begin{tikzpicture}
				\node (image) at (0,0) {\includegraphics[width=0.6\linewidth, trim={1cm 1cm 1cm 1cm},clip]{\folder/causal-bar-slack/tau20.pdf}};
				\node[rotate=90, fill=white] at (-4.1*1.1400,-0.5*1.2) {\small\white{AAAAAAAAAAAAAAAAAAAAA}};
				\node[rotate=90, fill=none] at (-4.1*1.2,0*1.2) {\small Selection rate};
				\node[rotate=0] at (0.2,-2.8*1.1) {\small Race {\em (W: White, B: Black)}};
				\node[rotate=0,fill=white] at (0.52 + 2.44 + 2.55 - 2.48, 0.09 + 2.1 - 0.15-1.45) {\scriptsize{CntrFair}};
				\node[rotate=0,fill=white] at (0.58 + 2.44 + 2.55 - 2.33, 0.12 + 2.1 - 0.15-1.175) {\scriptsize{CntrFairRes}};
			\end{tikzpicture}
			\begin{tikzpicture}
				\node (image) at (0,0) {\includegraphics[width=0.6\linewidth, trim={1cm 1cm 1cm 1cm},clip]{\folder/causal-bar-slack/tau50.pdf}};
				\node[rotate=90, fill=white] at (-4.1*1.1400,-0.5*1.2) {\small\white{AAAAAAAAAAAAAAAAAAAAA}};
				\node[rotate=90, fill=none] at (-4.1*1.2,0*1.2) {\small Selection rate};
				\node[rotate=0] at (0.2,-2.8*1.1) {\small Race {\em (W: White, B: Black)}};
				\node[rotate=0,fill=white] at (0.52 + 2.44 + 2.55 - 2.48, 0.09 + 2.1 - 0.15-1.45) {\scriptsize{CntrFair}};
				\node[rotate=0,fill=white] at (0.58 + 2.44 + 2.55 - 2.33, 0.12 + 2.1 - 0.15-1.175) {\scriptsize{CntrFairRes}};
			\end{tikzpicture}
			\caption{\small
			{\em {Additional plots for \cref{sec:causal_simulation}:}}
			This simulation considers the setting where the utilities of a minority group have a lower average than the majority group, and both groups have identical amount of noise.
			The three plots show three values of noise $\tau\in \inbrace{0,0.2,0.5}$.
			{The $y$-axis shows the selection rate (see Equation~\eqref{eq:selection_rate_def}) of different algorithms; Selection rate values are averaged over 50 trials, and the error bars represent the standard error of the mean.}
			We refer the reader to \cref{sec:causal_simulation} for details and discussion.
			We observe that \FairExpec{} is the closest having a selection rate of 1 across noise levels.
			We extended the code by \cite{yang2020causal} to generate this plot.
			\vspace{-3mm}
			}
		\end{figure}

		\newpage\white{.}\newpage

		\subsection{Additional results for Section~\ref{sec:candidate_selection}}\label{sec:extended_empirical_results:candidate_selection}
		\renewcommand{\folder}{./figures/candidate-subset-selection}
		\begin{figure}[h!]
			\vspace{-3mm}
			\begin{center}
				\subfigure[The $y$-axis shows the utility ratio $\cK$ of different algorithms, and the $x$-axis shows the constraint parameters ($\alpha$ for \FairExpec{}, \FairExpecGrp{}, and \Thresh{}, and $\lambda$ for \MultObj); $\cK$ values are averaged over 100 trials, and the error bars represent the standard error of the mean.]
				{\begin{tikzpicture}
				\node (image) at (0,0) {\includegraphics[width=0.6\linewidth, trim={1.7cm 0.3cm 0.5cm 0.26cm},clip]{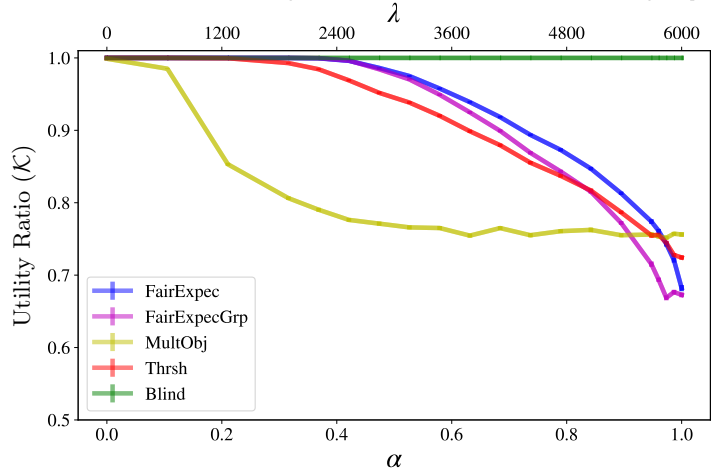}};
				\node[rotate=90, fill=white] at (-4.1*1.23,0) {\small \white{AAAAAAAAAAAAAAAA}};
				\node[rotate=90, fill=white] at (-4.1*1.23,0) {\small Utility Ratio ($\cK$)};
				\end{tikzpicture}}\par
				\subfigure[The $y$-axis shows the risk difference $\cF$ of different algorithms, and the $x$-axis shows the constraint parameters ($\alpha$ for \FairExpec{}, \FairExpecGrp{}, and \Thresh{}, and $\lambda$ for \MultObj); $\cF$ values are averaged over 100 trials, and the error bars represent the standard error of the mean.]
				{ \begin{tikzpicture}
				\node (image) at (0,0) {\includegraphics[width=0.6\linewidth, trim={1.7cm 0.3cm 0.5cm 0.26cm},clip]{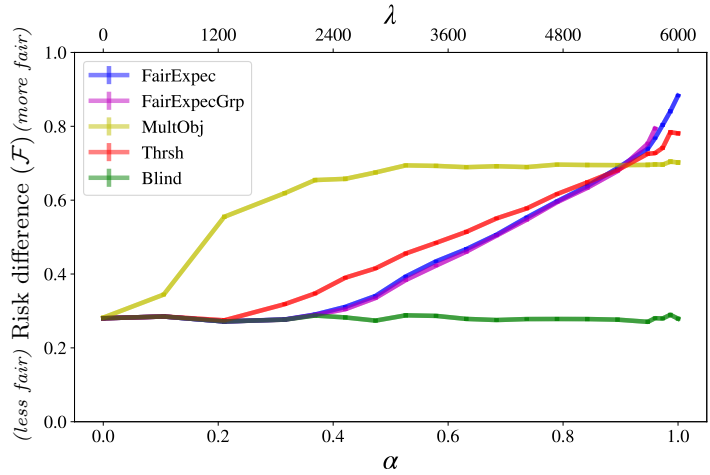}};
				\node[rotate=90, fill=white] at (-4.4*1.1400,-0.5*1.2) {\small\white{AAAAAAAAAAAAAAAAAAAAA}};
				\node[rotate=90] at (-4.4*1.1400,1.75*1.2) {\scriptsize\textit{(more fair)}};
				\node[rotate=90, fill=none] at (-4.4*1.1400,0*1.2) {\small Risk difference ($\cF$)};
				\node[rotate=90] at (-4.4*1.1400,-1.75*1.2) {\scriptsize\textit{(less fair)}};
			\end{tikzpicture}}
		\end{center}
		\caption{\small
		{\em {Additional plot for candidate selection (\cref{sec:candidate_selection}):}}
		This simulation considers race as the protected attribute which takes $p=4$ values, where an individual's utility is drawn from different distributions depending on their race.
		The simulation uses last name as a proxy to derive noisy information of race.
		We refer the reader to \cref{sec:candidate_selection} for the details of the simulation.
		}
		\label{fig:candidate_selection:1_additional}
	\end{figure}

	\vspace{-5mm}

	\begin{remark}[{\bf Reading the double $x$-axis}]
		We note that both subfigures in Figure~\ref{fig:candidate_selection:1_additional} have a double $x$-axis: one for $\alpha$ and one for $\lambda$.
		Therefore, one should note compared the $\cF$ or $\cK$ of \MultObj{} with other algorithms at a particular $x$-coordinate.
		It could be useful to consider the limiting their values at the largest $\alpha$ and $\lambda$, respectively.
	\end{remark}

	\begin{figure}[h!]
		\centering
		\begin{tabular}{|l|l|}\hline
			Race & Mean (USD per annum) \\\hline
			White & 100,169 \\
			Black & 70,504\\
			Asian and Pacific Islander (API) & 118,421\\
			Hispanic & 71,565 \\\hline
		\end{tabular}
		\caption{Statistics of $\cD_r$ for different races $r$; $\cD_r$ is the discrete distributions of incomes of families with race $r$ derived from the income dataset~\cite{census_income_dataset}.}
		\label{fig:stats_of_dr}
		\vspace{0mm}
	\end{figure}

	\newpage
	\renewcommand{\folder}{./figures/candidate-subset-selection}
	\begin{figure}[h!]
		\vspace{-5mm}
		\begin{center}
			\begin{tikzpicture}
				\node (image) at (-0.5,0) {\includegraphics[width=0.6\linewidth, trim={1.7cm 1.2cm 0.5cm 1.7cm},clip]{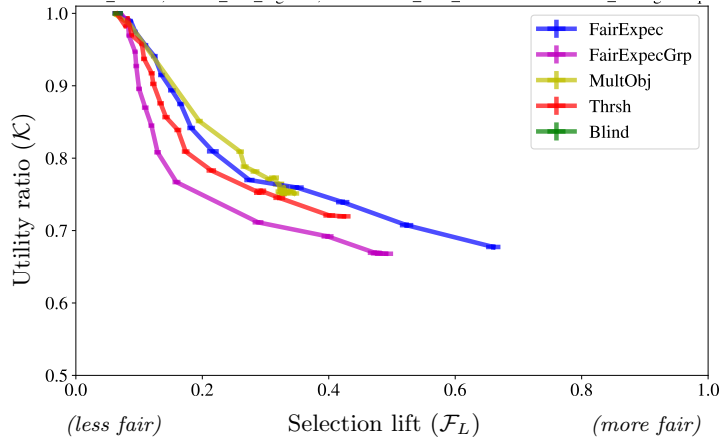}};
				\node[rotate=0, fill=white] at (-0.3*1.2,2.3*1.25) {\small \white{.........................}};
				\node[rotate=90, fill=white] at (-4.6*1.2,-0.5*1.25) {\small \white{AAAAAAAAAAAAAAAAAAAAA}};
				\node[rotate=90, fill=white] at (-4.6*1.2,0*1.25) {\small Utility ratio ($\cK$)};
				\node[rotate=0] at (2.3*1.2,-2.4*1.25) {\footnotesize\textit{(more fair)}};
				\node[rotate=0] at (-0.6*1.2,-2.4*1.25) {\small Selection lift ($\cF_L$)};
				\node[rotate=0] at (-3.6*1.2,-2.4*1.25) {\footnotesize\textit{(less fair)}};
			\end{tikzpicture}
			\vspace{-7mm}
		\end{center}
		\caption{\small
		{\em {Additional results for \cref{sec:candidate_selection} with selection lift:}}
		The $y$-axis shows the selection lift $\cF_L$ of different algorithms, and the $x$-axis shows the risk difference of different algorithms; both $\cK$ and $\cF_L$ values are averaged over 200 trials, and the error bars represent the standard error of the mean.
		We refer the reader to \cref{sec:candidate_selection} for the details of the simulation.
		We observe that \FairExpec{} has lower utility ratio than \MultObj{} for some values of selection lift; see \cref{rem:other_metrics} for a discussion.
		For a definition of selection lift see Equation~\eqref{eq:selection_lift_def}.
		}
		\vspace{-5mm}
		\label{fig:candidate_selection:additional2}
	\end{figure}

	\subsection{Additional results for Section~\ref{sec:image_selection}}\label{sec:extended_empirical_results:image_selection}
	\renewcommand{\folder}{./figures/image-subset-selection}
	\begin{figure}[h!]
		\vspace{-3mm}
		\begin{center}
		\begin{tabular}{|p{1.3cm}|p{6.5cm}|}\hline
		& Occupation name \\\hline
		$\stf(0.8)$  Num: $12$& administrative assistant; counselor; dental hygienist; flight attendant; hairdresser; housekeeper; massage therapist; nurse; nurse practitioner; receptionist; social worker; special ed teacher \\\hline
		$\stm(0.8)$ Num: $29$ & announcer; barber; bill collector; building inspector; building painter; butcher; chief executive officer; clergy member; construction worker; courier; crane operator; detective; dishwasher; electrician; exterminator; garbage collector; groundskeeper; logistician; machinist; parking attendant; plumber; police officer; private investigator; roofer; security guard; taxi driver; teller; web developer; welder\\\hline
	\end{tabular}
	\vspace{-4mm}
	\end{center}
	\caption{ \small {\bf\em Occupations with at least 80\% of images from one gender (from the occupations dataset~\cite{celis2019imagesummarization})}.
	We refer the reader to \cref{sec:image_selection} for more details, and the definition of $\stm(\cdot$ and $\stf\cdot)$.
	}
	\vspace{-3mm}
	\label{table:occupations}
	\end{figure}

	\begin{remark}[{\bf Reading the double $x$-axis}]
		We note that both subfigures in Figure~\ref{fig:image_selection:1_additional} have a double $x$-axis: one for $\alpha$ and one for $\lambda$.
		Therefore, one should note compared the $\cF$ or $\cK$ of \MultObj{} with other algorithms at a particular $x$-coordinate.
		It could be useful to consider the limiting their values at the largest $\alpha$ and $\lambda$, respectively.
	\end{remark}

	\begin{figure}[h!]
		\vspace{-3mm}
		\begin{center}
			\subfigure[The $y$-axis shows the utility ratio $\cK$ of different algorithms, and the $x$-axis shows the constraint parameters ($\alpha$ for \FairExpec{}, \FairExpecGrp{}, and \Thresh{}, and $\lambda$ for \MultObj); $\cK$ values are averaged over 200 trials, and the error bars represent the standard error of the mean.]
			{\begin{tikzpicture}
			\node (image) at (0,0) {\includegraphics[width=0.6\linewidth, trim={1.7cm 0.3cm 0.5cm 0.26cm},clip]{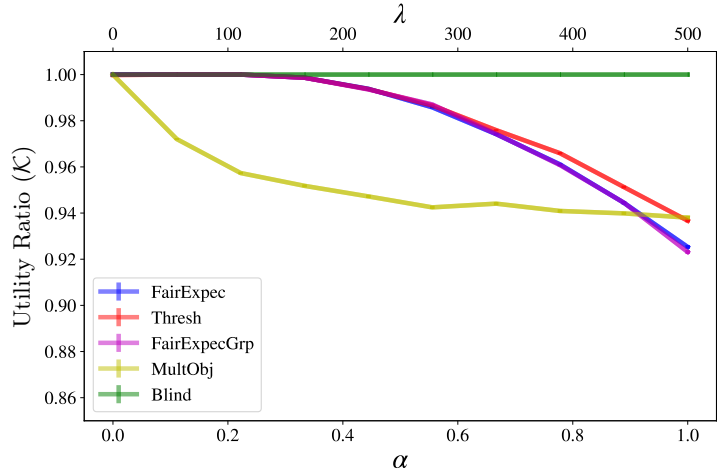}};
			\node[rotate=90, fill=white] at (-4.1*1.25,0) {\small \white{AAAAAAAAAAAAAAAA}};
			\node[rotate=90, fill=white] at (-4.1*1.25,0) {\small Utility Ratio ($\cK$)};
			\end{tikzpicture}}\par
			\subfigure[The $y$-axis shows the risk difference $\cF$ of different algorithms, and the $x$-axis shows the constraint parameters ($\alpha$ for \FairExpec{}, \FairExpecGrp{}, and \Thresh{}, and $\lambda$ for \MultObj); $\cF$ values are averaged over 200 trials, and the error bars represent the standard error of the mean.]
		{ \begin{tikzpicture}
		\node (image) at (0,0) {\includegraphics[width=0.6\linewidth, trim={1.7cm 0.3cm 0.5cm 0.26cm},clip]{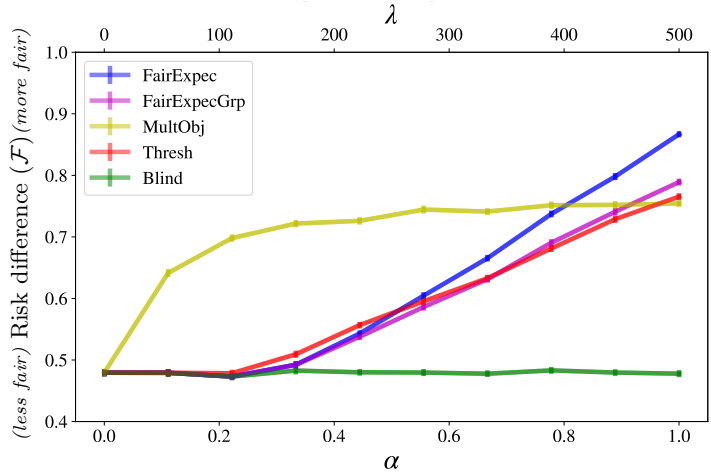}};
		\node[rotate=90, fill=white] at (-4.1*1.23, 0) {\small\white{AAAAAAAAAAAAAAAAAAA}};
		\node[rotate=90] at (-4.1*1.23, 1.3*1.6) {\scriptsize\textit{(more fair)}};
		\node[rotate=90, fill=none] at (-4.1*1.23, 0) {\small Risk difference ($\cF$)};
		\node[rotate=90] at (-4.1*1.23, -1.3*1.6) {\scriptsize\textit{(less fair)}};
	\end{tikzpicture}}
	\end{center}
	\caption{\small
	{\em {Additional for image selection (\cref{sec:image_selection}):}}
	This simulation considers gender as the protected attribute and uses a CNN-based classifier to derive noisy information about the gender of the person in the image.
	We refer the reader to \cref{sec:image_selection} for the details of the simulation.
	}
	\label{fig:image_selection:1_additional}
	\end{figure}

	\renewcommand{\folder}{./figures/image-subset-selection}
	\begin{figure}[h!]
		\vspace{-3mm}
		\begin{center}
			\subfigure[The $y$-axis shows the NDCG value of different algorithms, and the $x$-axis shows selection lift ($\cF_L$); both values are averaged over 100 trials.]
		{\begin{tikzpicture}
		\node (image) at (-0.5,0) {\includegraphics[width=0.6\linewidth, trim={1.7cm 1.2cm 0.5cm 1.7cm},clip]{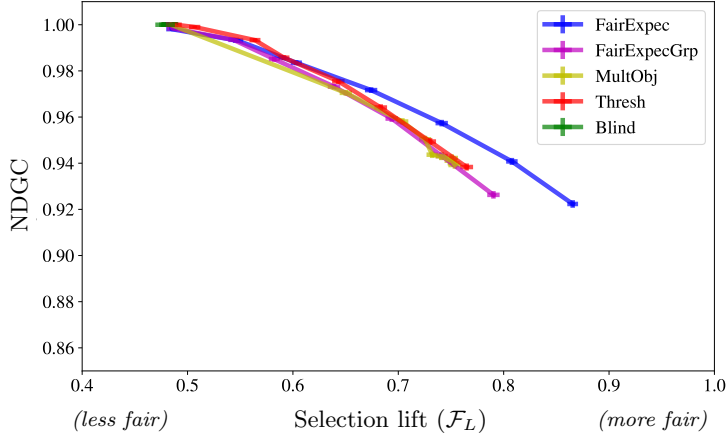}};
		\node[rotate=0, fill=white] at (-0.3*1.23,2.3*1.25) {\small \white{.........................}};
		\node[rotate=90, fill=white] at (-4.6*1.23,-0.5*1.25) {\small \white{AAAAAAAAAAAAAAAAAAAAA}};
		\node[rotate=90, fill=white] at (-4.6*1.23,0*1.25) {\small NDGC};
		\node[rotate=0] at (2.3*1.2,-2.4*1.25) {\footnotesize\textit{(more fair)}};
		\node[rotate=0] at (-0.6*1.2,-2.4*1.25) {\small Selection lift ($\cF_L$)};
		\node[rotate=0] at (-3.6*1.2,-2.4*1.25) {\footnotesize\textit{(less fair)}};
	\end{tikzpicture}}
		\vspace{-4mm}
	\end{center}
	\vspace{0mm}
	\caption{\small
	{\em {Additional figure for \cref{sec:image_selection}:}}
	This simulation considers gender as the protected attribute and uses a CNN-based classifier to derive noisy information about the gender of the person in the image.
	The $y$-axis shows the NDCG value of different algorithms, and the $x$-axis shows selection lift ($\cF_L$); both values are averaged over 100 trials.
	We refer the reader to \cref{sec:image_selection} for details of the simulation.
	We observe similar results to those with utility ratio ($\cK$) (see Figure~\ref{fig:image_selection}).
	\vspace{0mm}
	}
	\label{fig:image_selection:ndgc}
\end{figure}
	\bigskip

	\newpage\white{.}\newpage
	\renewcommand{\folder}{./figures/image-subset-selection}
	\begin{figure}[h!]
		\vspace{0mm}
		\begin{center}
			\begin{tikzpicture}
				\node (image) at (-0.5,0) {\includegraphics[width=0.6\linewidth, trim={1.7cm 1.2cm 0.5cm 1.7cm},clip]{\folder/tradeoff-sl.pdf}};
				\node[rotate=0, fill=white] at (-0.3*1.23,2.3*1.25) {\small \white{.........................}};
				\node[rotate=90, fill=white] at (-4.6*1.23,-0.5*1.25) {\small \white{AAAAAAAAAAAAAAAAAAAAA}};
				\node[rotate=90, fill=white] at (-4.6*1.23,0*1.25) {\small Utility ratio ($\cK$)};
				\node[rotate=0] at (2.3*1.2,-2.4*1.25) {\footnotesize\textit{(more fair)}};
				\node[rotate=0] at (-0.6*1.2,-2.4*1.25) {\small Selection lift ($\cF_L$)};
				\node[rotate=0] at (-3.6*1.2,-2.4*1.25) {\footnotesize\textit{(less fair)}};
			\end{tikzpicture}
		\end{center}
		\vspace{-4mm}
		\caption{\small
		{\em {Additional results for \cref{sec:image_selection} with selection lift:}}
		The $y$-axis shows the utility ratio ($\cK$) of different algorithms, and the $x$-axis shows the selection lift ($\cF_L$)  of different algorithms; both $\cK$ and $\cF_L$ values are averaged over 200 trials, and the error bars represent the standard error of the mean.
		We refer the reader to \cref{sec:image_selection} for the details of the simulation.
		We observe similar results as with risk difference ($\cF$).
		See Equation~\eqref{eq:selection_lift_def} for a definition of selection lift ($\cF_L$).
		}
		\vspace{-3mm}
	\end{figure}

\end{document}